\newcommand{\R}{{\mathbb R}}         
\newcommand{\bR}{{\mathbb R}}         
\newcommand{\E}{{\mathbb E}}
\renewcommand{\P}{{\mathbb P}}
\newcommand{\cF}{{\mathcal F}}
\newcommand{\cA}{{\mathcal A}}
\newcommand{\cU}{{\mathcal U}}
\newcommand{\cL}{{\mathcal L}}
\newcommand{\cK}{{\mathcal K}}
\newcommand{\cO}{{\mathcal O}}
\newcommand{\cD}{{\mathcal D}}
\newcommand{\cJ}{{\mathcal J}}
\newcommand{\cN}{{\mathcal N}}
\newcommand{\N}{{\mathbb N}}
\newcommand{\bN}{{\mathbb N}}
\newcommand{\Nn}{{\mathcal N}}
\newcommand{\cH}{{\mathcal H}}
\newcommand{\tr}{\mbox{Tr}}
\newcommand{\wt}{\widetilde}
\newcommand{\eps}{\varepsilon}
\newcommand{\ph}{\varphi}
\def\f{{\varphi}}
\def\U{{\mathcal U}}
\newtheorem{thm}{Theorem}[section]
\newtheorem{lem}[thm]{Lemma}
\newtheorem{prop}[thm]{Proposition}
\newtheorem{cor}[thm]{Corollary}
\def\be{\begin{equation}}
\def\ee{\end{equation}}
\def\bee{\begin{equation*}}
\def\eee{\end{equation*}}
\def\bsplit{\begin{split}}
\def\esplit{\end{split}}
\def\bea{\begin{eqnarray}}
\def\eea{\end{eqnarray}}
\def\nn{\nonumber}
\def\ol{\overline}
\begin{document}

\title{Multivariate Central Limit Theorem \\ in Quantum Dynamics}

\author{Simon Buchholz, Chiara Saffirio\thanks{Supported by ERC Grant MAQD 240518 } \, and Benjamin Schlein\thanks{Partially supported by ERC Grant MAQD 240518}\\ \\ Institute of Applied Mathematics, University of Bonn\\ Endenicher Allee 60, 53115 Bonn, Germany}

\maketitle

\begin{abstract}
We consider the time evolution of $N$ bosons in the mean field regime for factorized initial data. In the limit of large $N$, the many body evolution can be approximated by the non-linear Hartree equation. In this paper we are interested in the fluctuations around the Hartree dynamics.  
We choose $k$ self-adjoint one-particle operators $O_1, \dots , O_k$ on $L^2 (\bR^3)$, and we average their action over the $N$-particles. We show that, for every fixed $t \in \bR$, expectations of products of functions of the averaged observables approach, as $N \to \infty$, expectations with respect to a complex Gaussian measure, whose covariance matrix can be expressed in terms of a Bogoliubov transformation describing the dynamics of quantum fluctuations around the mean field Hartree evolution. If the operators $O_1, \dots , O_k$ commute, the Gaussian measure is real and positive, and we recover a ``classical'' multivariate central limit theorem. All our results give explicit bounds on the rate of the convergence (we obtain therefore Berry-Ess{\'e}en type central limit theorems).  
\end{abstract}


\section{Introduction}
\setcounter{equation}{0}

We consider a system of $N$ identical particles in three dimensions, described by a normalized wave function $\psi_N\in L^2(\R^{3N})$. We are interested in particles obeying bosonic statistics, meaning that $\psi_N$ is symmetric with respect to any permutation of the $N$ particles, in the sense that  
\be\label{eq:sym}
\psi_N (x_{\pi_1},\dots,x_{\pi_N})=\psi_N(x_1,\dots,x_N) \, 
\ee
for any permutation $\pi \in S_N$. We denote by $L^2_s (\bR^{3N})$ the subspace of $L^2 (\bR^{3N})$ consisting of permutation symmetric wave functions, satisfying (\ref{eq:sym}). 

\medskip

{\it Mean field regime.} We will focus on the mean field regime of many body quantum mechanics, which is characterized by the fact that every particle experiences a very large number of very weak collisions, so that the total force is comparable with the inertia of the particles. To study the mean field regime, we define the Hamilton operator 
\be\label{eq:ham}
H_N = \sum_{i=1}^N-\Delta_{x_j} + \frac{1}{N} \sum_{i<j}^N V(x_i-x_j)\;.
\ee
and consider the evolution generated by $H_N$, which is governed by the $N$ particle Schr\"odinger equation
\be\label{eq:SE}
i\partial_t \psi_{N,t} = H_N \psi_{N,t}\; 
\ee 
whose solution can be written as $\psi_{N,t} = e^{-iH_N t} \psi_{N,0}$, where $\psi_{N,0}$ denotes the initial wave function, at time $t=0$. In (\ref{eq:ham}), $V(x_i -x_j)$ describes the interaction between particle $i$ and particle $j$; we will assume the potential $V$ to satisfy the operator inequality \begin{equation}\label{eq:bdV2} V^2 (x) \leq D (1-\Delta),\end{equation} for some constant $D>0$. In particular, this inequality is satisfied for the physically relevant example of a Coulomb potential $V(x) = -1/|x|$. In order to simplify a bit the notation, we do not include in (\ref{eq:ham}) external potentials; nevertheless our results and our techniques remain valid if $-\Delta_{x_j}$ is replaced by $-\Delta_{x_j} + V_{\text{ext}} (x_j)$, under very general conditions on $V_{\text{ext}}$. 

\medskip

{\it Evolution of factorized initial data.} If the Hamiltonian (\ref{eq:ham}) is restricted to a finite domain with volume of order one (either by imposing boundary conditions or by adding a trapping external potential), the ground state is known to exhibit complete condensation, meaning that, in an appropriate sense $\psi_N \simeq \ph^{\otimes N}$ for a $\ph \in L^2 (\bR^3)$ (the one-particle orbital $\ph$ is the minimizer of the Hartree energy, which takes into account the trapping potential). For this reason, one is typically interested in the time-evolution of factorized (or at least approximately factorized) initial data. It turns out that, if at time $t=0$, $\psi_{N} \simeq \ph^{\otimes N}$, then the solution of the many body Schr\"odinger equation (\ref{eq:SE}) remains of the form $\psi_{N,t} \simeq \ph_t^{\otimes N}$, where $\ph_t$ solves the nonlinear time-dependent Hartree equation 
\begin{equation}\label{eq:hartree0} i\partial_t \ph_t = -\Delta \ph_t + (V*|\ph_t|^2) \ph_t \end{equation}
with the initial data $\ph_{t=0} = \ph$. In other words, complete condensation is preserved by the time-evolution and the dynamics of the condensate wave function is governed by the Hartree equation (\ref{eq:hartree0}). 

\medskip

{\it Reduced density matrices.} In order to obtain a precise mathematical statement about the convergence towards the Hartree equation (\ref{eq:hartree0}), we introduce the notion of reduced density matrices. For $k = 1, \dots , N$, we define the $k$-particle reduced density matrix $\gamma^{(k)}_{N,t}$ associated with the solution $\psi_{N,t}$ of the Schr\"odinger equation by taking the partial trace of the orthogonal projection $|\psi_{N,t} \rangle \langle \psi_{N,t}|$ over the last $N-k$ particles:
\[ \gamma^{(k)}_{N,t} = \tr_{k+1, k+2, \dots , N} \, |\psi_{N,t} \rangle \langle \psi_{N,t}|\;. \]
In other words, we define $\gamma^{(k)}_{N,t}$ as the non-negative trace-class operator on $L^2 (\bR^{3k})$ with integral kernel 
\[ \begin{split} \gamma^{(k)}_{N,t} (x_1, &\dots , x_k; x'_1 , \dots , x'_k) \\ &= \int dx_{k+1} \dots dx_{N} \, \psi_{N,t} (x_1, \dots , x_k , x_{k+1}, \dots , x_N) \overline{\psi}_{N,t} (x'_1, \dots , x'_k, x_{k+1} , \dots , x_N) \, . \end{split} \]
Observe that knowledge of $\gamma^{(k)}_{N,t}$ is sufficient to compute the expectation of any $k$-particle observable. In fact, if $J$ is a self-adjoint operator on $L^2 (\bR^{3k})$ and $J\otimes 1$ is the self-adjoint operator on $L^2 (\bR^{3N})$ which acts as $J$ on the first $k$ particles, and as the identity on the other $(N-k)$ particles, we have
\[ \langle \psi_{N,t} ,  \left(J \otimes 1 \right) \psi_{N,t} \rangle = \tr \, |\psi_{N,t} \rangle \langle \psi_{N,t} | \, (J \otimes 1) = \tr \, \gamma^{(k)}_{N,t} J \, . \]

\medskip

{\it Convergence towards Hartree dynamics.} It turns out that the language of the reduced densities is the appropriate language to understand the convergence towards the Hartree dynamics (\ref{eq:hartree0}). Consider the initial data $\psi_{N} = \ph^{\otimes N}$ (but the following result can be extended to more general initial data), and assume that the interaction potential $V$ satisfies (\ref{eq:bdV2}). Let $\psi_{N,t} = e^{-i H_N t} \psi_N$ and denote by $\gamma^{(k)}_{N,t}$ the $k$-particles reduced density associated with $\psi_{N,t}$. Then, for every $k \in \bN$, there exist constants $C_k,c_k > 0$ such that
\begin{equation}\label{eq:conv-mar} \tr \,\left| \gamma^{(k)}_{N,t} - |\ph_t \rangle \langle \ph_t|^{\otimes k} \right| \leq \frac{C_k  \exp \, (c_k |t|)}{N} \end{equation}
for all $t \in \bR$ and $N$ large enough. In particular, this implies convergence of the expectation of arbitrary observables depending only on a finite number of particles. If $J$ is a self-adjoint operator on $L^2 (\bR^{3k})$, then
\[ \Big| \langle \psi_{N,t} , (J \otimes 1) \psi_{N,t} \rangle - \langle \ph_t^{\otimes k}, J \ph_t^{\otimes k} \rangle \Big| \leq \frac{C_k \exp (c_k |t|)}{N}\, . \]
The first mathematically rigorous works which established the convergence of the many body dynamics towards the Hartree evolution were based on the study of the evolution of the reduced densities $\gamma^{(k)}_{N,t}$ as described by the BBGKY hierarchy of equations; see \cite{Sp,EY}. Later, the BBGKY approach was also extended to the so called Gross-Pitaevskii regime, in which the interaction potential $V$ depends on $N$, varying on a length-scale of order $N^{-1}$ and converging towards a delta-function in the limit of large $N$ (in this case, the system cannot be interpreted as describing a mean-field regime); see \cite{ESY1,ESY2,ESY3}. All these results do not give a bound on the rate of the convergence towards the Hartree dynamics. A different approach, leading to the quantitative estimate (\ref{eq:conv-mar}), was later developed in \cite{RS} and later extended in \cite{CLS} and to the Gross-Pitaevskii regime in \cite{BDS}. This approach, which follows ideas originally introduced in \cite{H,GV}, is based on a representation of the system on the bosonic Fock space and on the study of the time evolution of initial coherent states. Since this method will play a central role in our paper, we discuss its main ideas in Section \ref{sec:fock}. Notice that, recently, different approaches to obtain a mathematical understanding of the time evolution in the mean field regime have been developed in \cite{KP} and in \cite{FKS}, where the convergence towards the Hartree dynamics is formulated as a Egorov-type theorem.

\medskip

{\it A law of large numbers.} It is possible to translate the convergence (\ref{eq:conv-mar}) in a more probabilistic language. Let $O$ be a bounded self-adjoint operator on $L^2 (\bR^3)$ and denote by $O^{(j)} = 1 \otimes \dots \otimes O \otimes \dots \otimes 1$ the operator acting as $O$ on the $j$-th particle and as the identity on the other $(N-1)$ particles. Given a wave function $\psi \in L^2_s (\bR^{3N})$, one can think of each $O^{(j)}$ as a random variable, whose probability distribution is determined by $\psi$ through the spectral theorem. The probability that $O^{(j)}$ assumes values in $A \subset \bR$ is given by 
\[ \P_\psi (O^{(j)} \in A) = \langle \psi, \chi_A (O^{(j)}) \psi \rangle \]
where $\chi_A$ is the characteristic function of the set $A$. Consider the factorized wave function 
$\psi_{N,0} = \ph^{\otimes N}$. With respect to $\psi_{N,0}$, the random variables $O^{(j)}$ are independent and identically distributed. Consider now the evolved wave function $\psi_{N,t} = e^{-iH_N t} \psi_{N,0}$, where $H_N$ is the mean field $N$-particle Hamiltonian (\ref{eq:ham}). With respect to $\psi_{N,t}$ the random variables $O^{(j)}$ are not independent. Nevertheless, (\ref{eq:conv-mar}) implies a law of large numbers, in the sense that, for every $\delta > 0$, 
\begin{equation}\label{eq:lln} \lim_{N \to \infty} \P_{\psi_{N,t}} \left( \left| \frac{1}{N} \sum_{j=1}^N O^{(j)} - \langle \ph_t , O \ph_t \rangle \right| > \delta \right)  = 0\;. \end{equation}
In fact
\[ \P_{\psi_{N,t}} \left( \left| \frac{1}{N} \sum_{j=1}^N O^{(j)} - \langle \ph_t , O \ph_t \rangle \right| > \delta \right) =  \P_{\psi_{N,t}} \left( \left| \frac{1}{N\delta} \sum_{j=1}^N \wt{O}^{(j)} \right| \geq 1 \right) \]
where $\wt{O}^{(j)} = O^{(j)} - \langle \ph_t , O \ph_t\rangle$. Markov's inequality therefore implies that 
\[  \begin{split} \P_{\psi_{N,t}} \left( \left| \frac{1}{N} \sum_{j=1}^N O^{(j)} - \langle \ph_t , O \ph_t \rangle \right| > \delta \right) &\leq \frac{1}{N^2 \delta^2} \,  \E_{\psi_{N,t}} \left| \sum_{j=1}^N \wt{O}^{(j)} \right|^2 
\\ &\leq \frac{1}{\delta^{2}} \tr \, \gamma_{N,t}^{(2)} \, (\wt{O} \otimes \wt{O}) + \frac{1}{N\delta^{2}} \,  \tr \gamma_{N,t}^{(1)} \, \wt{O}^2 \end{split} \]
with $\wt{O} = O - \langle \ph_t , O \ph_t \rangle$. On the one hand, we have
\[ \frac{1}{\delta^2 N} \tr \, \gamma^{(1)}_{N,t} \, \wt{O}^2 \leq \frac{\| \wt{O} \|^2}{\delta^2 N} \to 0 \]
as $N \to \infty$. On the other hand
\[ \tr \, \gamma^{(2)}_{N,t} \, (\wt{O} \otimes \wt{O}) \to \tr \, |\ph_t \rangle \langle \ph_t |^{\otimes 2} \, (\wt{O} \otimes \wt{O}) = \langle \ph_t, \wt{O} \ph_t \rangle^2 = 0 \] 
as $N \to \infty$. This implies (\ref{eq:lln}). 

\medskip

{\it A central limit theorem.} After establishing the law of large numbers (\ref{eq:lln}), one can investigate the fluctuations around the Hartree dynamics. In \cite{BKS} it was proven that, under some regularity conditions on the self-adjoint operator $O$ on $L^2 (\bR^3)$, the appropriately rescaled random variable
\begin{equation}\label{eq:cOt} \cO_t := \frac{1}{\sqrt{N}} \sum_{j=1}^N \left( O^{(j)} - \langle \ph_t, O \ph_t \rangle \right) \end{equation}
converges in distribution, as $N \to \infty$, to  a centered Gaussian random variable with variance
\[ \sigma_t^2 =  \| U(t;0) O \ph_t + \overline{V(t;0) O \ph_t} \|^2 - |\langle \ph , U(t;0) O \ph_t + \overline{V(t;0) O \ph_t} \rangle|^2 \]
where $U (t;s), V(t;s): L^2 (\bR^3) \to L^2 (\bR^3)$ are families of linear maps, defining a so called Bogoliubov transformations, which emerge naturally in the study of the time evolution of coherent states and describe fluctuations around the mean field Hartree limit. We will give the precise definition of the maps $U(t;s), V(t;s)$ (and of the associated Bogoliubov transformations $\Theta (t;s)$) in Section \ref{sec:fock}. Observe that with respect to the measure induced by the factorized wave function $\ph_t^{\otimes N}$, the random variable (\ref{eq:cOt}) converges to a centered Gaussian, with the variance $\wt{\sigma}^2_t = \langle \ph_t , O^2 \ph_t\rangle - \langle \ph_t, O \ph_t \rangle^2$. This means that, while the correlations among the particles in $\psi_{N,t}$ are sufficiently weak for a central limit theorem to hold true, they are strong enough to change the variance of the limiting Gaussian. A different approach to study fluctuations around the mean field dynamics has been explored in \cite{GMM,GMM2,C} and, more recently, in \cite{LNS} (similar results have been obtained in the static time-independent setting, in \cite{GS,LNSS}; in this case, one considers the excitation spectrum of the Hamiltonian (\ref{eq:ham}), after imposing an external confining potential). In different settings, quantum central limit theorems have been previously established in \cite{CH,HL,GVV,JPP,S,Ha,Ku,CE}. 

\medskip

{\it Multivariate central limit theorem.} A natural question emerging from the result of \cite{BKS} is whether one can also establish a multivariate version of the central limit theorem. Let $k \in \bN$ and let $O_1, \dots , O_k$ be bounded operators on $L^2 (\bR^3)$. For $j=1, \dots , k$, we define 
\begin{equation}\label{eq:cOjt} \cO_{j,t} = \frac{1}{\sqrt{N}} \sum_{i=1}^N \left( O_j^{(i)} - \langle \ph_t, O_j \ph_t \rangle \right)\;. \end{equation}
At this point we observe that there is an important difference with respect to standard probability theory. Unless the operators $O_1, \dots , O_k$ commute among each others, they cannot be measured simultaneously. For this reason it does not make sense to ask about the joint probability distribution of the random variables $\cO_{1,t}, \dots , \cO_{k,t}$. One can still ask about expectations of products of functions of these observables. In contrast with classical probability, however, these expectations do not need to be real (because the product of self-adjoint operator does not need to be self-adjoint). Our main result is the following theorem, which shows that, expectations of products of functions of $\cO_{1,t}, \dots , \cO_{k,t}$ can be computed integrating the functions $f_1, \dots, f_k$ against a complex-valued Gaussian density, with covariance matrix expressed in terms of the Bogoliubov transformation $\Theta (t;s)$ appearing in the central limit theorem shown in \cite{BKS}.

\begin{thm}\label{thm:multi-CLT}
Let $V$ satisfy (\ref{eq:bdV2}). Let $\ph \in H^2 (\bR^3)$ and let $\psi_{N,t}$ denote the solution of the Schr\"odinger equation (\ref{eq:SE}), with the initial data $\psi_{N,0} = \ph^{\otimes N}$. Let $O_1,\dots, O_k$ be self-adjoint operators on $L^2(\R^3)$, such that $\| \partial^{\alpha} O_j (1- \Delta)^{-|\alpha|/2} \| < \infty$ for every multi-index $\alpha \in \bN^3$ with $ |\alpha| \leq 2$ and every $j = 1, \dots , k$, and define $\cO_{j,t}$ as in (\ref{eq:cOjt}). Let $f_1, \dots , f_k \in L^1 (\bR)$ with $\widehat{f}_j \in L^1 (\bR, (1+ \tau^{10}) d\tau)$. For any $t \in \bR$, we define the complex $k \times k$ covariance matrix $\Sigma (t) = (\Sigma_{ij} (t))_{1\leq i,j \leq k}$ by
\[ \Sigma_{ij} (t)= \langle g_{i,t} , g_{j,t} \rangle - \langle g_{i,t} , \ph \rangle \langle \ph , g_{j,t} \rangle \]
for all $i \leq j$ and by $\Sigma_{ij}  (t) = \Sigma_{ji} (t)$ for all $i > j$. Here $g_{1,t}, \dots , g_{k,t} \in L^2 (\bR^3)$ are given by
\[ g_{j,t} = U(t;0) O_j \ph_t + \overline{V(t;0) O_j \ph_t} \,  \]
where $U(t;0), V(t;0) : L^2 (\bR^3) \to L^2 (\bR^3)$ are linear maps defined in Proposition \ref{prop:bog} below: they are the block-components of the Bogoliubov transformation $\Theta (t;0): L^2 (\bR^3) \oplus L^2 (\bR^3) \to L^2 (\bR^3) \oplus L^2 (\bR^3)$ describing the action of the limiting fluctuation dynamics $\cU_\infty$ defined in Proposition \ref{prop:GV}. 

The real part $\text{Re } \Sigma (t) = (\Sigma (t) + \Sigma^* (t))/2$ is a non-negative symmetric matrix. We assume $\text{Re } \Sigma (t)$ to be strictly positive. Then, there exist constants 
$C,K> 0$ such that
\be\label{eq:multivar-rate}
\begin{split}
\Big| \, \E_{\psi_{N,t}}  f_1 (\cO_{1,t}) \dots f_k (\cO_{k,t}) - &\int dx_1 \dots dx_k \, f_1 (x_1) \dots f_k (x_k) \, \left[ \frac{e^{-\frac{1}{2} \sum_{i,j=1}^k \Sigma^{-1}_{ij} (t) x_i x_j} }{\sqrt{(2\pi)^{k} \det \Sigma (t)}} \, \right]\Big| \\ \leq \; &\frac{C e^{K|t|}}{\sqrt{N}} \prod_{j=1}^k \int d\tau |\widehat{f}_j (\tau)| (1+|\tau|^5 + N^{-1} \tau^8 + N^{-2} \tau^{10}) \end{split} \end{equation}
where $\Sigma^{-1} (t)$ is the inverse of the covariance matrix $\Sigma (t)$.  
\end{thm}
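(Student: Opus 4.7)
My plan is to reduce the theorem, via Fourier inversion, to a pointwise bound on a characteristic-function-type quantity, which I then evaluate using the Fock-space coherent state method and the convergence of the full fluctuation dynamics to the limiting Bogoliubov transformation. Writing $f_j(x)=(2\pi)^{-1}\int d\tau\,\widehat{f}_j(\tau)\,e^{i\tau x}$, and using that for a complex-symmetric $\Sigma(t)$ with positive-definite real part the Fourier transform of the Gaussian density in (\ref{eq:multivar-rate}) is $e^{-\tau^T\Sigma(t)\tau/2}$, the statement reduces, after integrating against $\prod_j|\widehat{f}_j(\tau_j)|$, to the pointwise estimate
\[\Bigl|\langle\psi_{N,t},e^{i\tau_1\cO_{1,t}}\cdots e^{i\tau_k\cO_{k,t}}\psi_{N,t}\rangle - e^{-\frac12\tau^T\Sigma(t)\tau}\Bigr|\leq\frac{Ce^{K|t|}}{\sqrt N}\prod_{j=1}^k\bigl(1+|\tau_j|^5+N^{-1}|\tau_j|^8+N^{-2}|\tau_j|^{10}\bigr).\]

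Next, I would lift the problem to the bosonic Fock space and use the coherent state approach of Section \ref{sec:fock}. Approximating $\ph^{\otimes N}$ in terms of the coherent state $W(\sqrt N\,\ph)\Omega$ at cost $O(N^{-1/2})$, we have $e^{-iH_N t}\ph^{\otimes N}\simeq W(\sqrt N\,\ph_t)\,\cU_N(t;0)\,\xi_0$, where $\cU_N(t;0)$ is the full fluctuation dynamics and the excitation vector $\xi_0$ converges to the vacuum $\Omega$ as $N\to\infty$. The Weyl shift identity $W^*(\sqrt N\,\ph_t)\,\cO_{j,t}\,W(\sqrt N\,\ph_t) = \phi_+(O_j\ph_t)+N^{-1/2}d\Gamma(O_j)$, with $\phi_+(h):=a^*(h)+a(h)$, combined with the insertion of $\cU_N(0;t)\cU_N(t;0)=1$ between consecutive exponentials, rewrites the characteristic function as
\[\Bigl\langle \xi_0,\prod_{j=1}^k\cU_N^*(t;0)\,\exp\bigl(i\tau_j[\phi_+(O_j\ph_t)+N^{-1/2}d\Gamma(O_j)]\bigr)\cU_N(t;0)\,\xi_0\Bigr\rangle.\]
Replacing $\cU_N(t;0)$ by the limiting Bogoliubov dynamics $\cU_\infty(t;0)$ from Proposition \ref{prop:GV}, applying the conjugation rule $\cU_\infty^*(t;0)\phi_+(O_j\ph_t)\cU_\infty(t;0)=\phi_+(g_{j,t})$ (which follows from the block structure of $\Theta(t;0)$, since $g_{j,t}=U(t;0)O_j\ph_t+\overline{V(t;0)O_j\ph_t}$), and discarding the subleading $N^{-1/2}d\Gamma(O_j)$ term, reduces the quantity in the limit to $\langle \Omega,\prod_{j=1}^k e^{i\tau_j\phi_+(Q_\ph g_{j,t})}\Omega\rangle$, where $Q_\ph=1-|\ph\rangle\langle\ph|$ reflects the fact that the excitation vector lives on the Fock space built over $\{\ph\}^\perp$. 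Recognising each factor as the Weyl operator $W(i\tau_j Q_\ph g_{j,t})$, applying the composition rule $W(h_1)W(h_2)=e^{-i\,\mathrm{Im}\langle h_1,h_2\rangle}W(h_1+h_2)$ iteratively, and using $\langle\Omega,W(h)\Omega\rangle=e^{-\|h\|^2/2}$, a short algebraic computation reproduces exactly $e^{-\frac12\tau^T\Sigma(t)\tau}$: the sum of norms squared supplies the symmetric piece of the quadratic form, while the symplectic phases $\mathrm{Im}\langle\cdot,\cdot\rangle$ supply the antisymmetric imaginary piece that makes $\Sigma(t)$ complex-symmetric rather than Hermitian.

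The main technical work, and the main obstacle, is quantifying the three substitutions used above. One needs: (i) a bound converting $\ph^{\otimes N}$ into the coherent-state vacuum at cost $N^{-1/2}$; (ii) polynomial-in-$t$ bounds on number-operator moments $\langle\xi,(\cN+1)^m\xi\rangle$, uniform in $N$, for $\xi=\cU_N(t;0)\xi_0$ and $\xi=\cU_\infty(t;0)\Omega$, which control the subleading $N^{-1/2}d\Gamma(O_j)$ contribution after expanding $e^{i\tau_j\cdot}$ by Duhamel; and (iii) a quantitative comparison $\|(\cU_N(t;0)-\cU_\infty(t;0))(\cN+1)^{-m}\xi_0\|\lesssim e^{K|t|}/\sqrt N$, available from the analysis underlying Proposition \ref{prop:GV}. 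The delicate point is sandwiching these estimates between $k$ exponentials of unbounded field operators: each Duhamel step peeling one error off $e^{i\tau_j\phi_+(\cdot)}$ costs a factor $|\tau_j|\,\|(\cN+1)^{1/2}\,\cdot\,\|$ on the remaining vector, and controlling the resulting number-operator moments against the accumulated powers of $|\tau_j|$ is precisely what produces the weight $1+|\tau|^5+N^{-1}|\tau|^8+N^{-2}|\tau|^{10}$ in (\ref{eq:multivar-rate}).
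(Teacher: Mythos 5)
Your overall architecture coincides with the paper's (Fourier reduction to a characteristic-function bound, Fock-space lift with Weyl operators, Duhamel removal of the $N^{-1/2}d\Gamma$ terms, replacement of $\cU_N$ by $\cU_\infty$ via the quantitative comparison, Bogoliubov conjugation $\cU_\infty^*\phi(O_j\ph_t)\cU_\infty=\phi(g_{j,t})$, and a Baker--Campbell--Hausdorff/Weyl computation for the Gaussian), and your items (ii)--(iii) correspond to Propositions \ref{lem:UN}, \ref{prop:h-bd} and \ref{prop:U-Uin}. The genuine gap is your item (i) and the ensuing identification of the limit: it is \emph{not} true that $\ph^{\otimes N}$ can be converted into the coherent state at cost $N^{-1/2}$, nor that the associated excitation vector converges to the vacuum. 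The normalized overlap $\langle \ph^{\otimes N}, W(\sqrt{N}\ph)\Omega\rangle/\|\ph^{\otimes N}\|\simeq N^{-1/4}$, so in the Weyl-operator framework you set up the excitation vector $W^*(\sqrt{N}\ph)\ph^{\otimes N}$ is asymptotically \emph{orthogonal} to $\Omega$; moreover it does not live in a Fock space over $\{\ph\}^{\perp}$, so inserting the projection $Q_\ph$ into the field arguments is an assertion, not a consequence of your construction (that projection belongs to the Lewin--Nam--Schlein excitation-map framework, which would require reformulating $\cU_\infty$ and $\Theta(t;0)$ on the excitation space -- machinery you do not set up). If one follows your setup literally, with excitation vector tending to $\Omega$ and no projection, the vacuum expectation gives covariance $\langle g_{i,t},g_{j,t}\rangle$ and the rank-one subtraction $-\langle g_{i,t},\ph\rangle\langle\ph,g_{j,t}\rangle$, i.e.\ the suppression of fluctuations along $\ph$ caused by the fixed particle number, is lost.

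This is precisely the delicate point the paper's proof is built around: the expectation is written asymmetrically as $\langle \xi_N, \cU_N^*(t;0)\prod_j e^{i\tau_j h_{j,t}}\,\cU_N(t;0)\Omega\rangle$ with $\xi_N=d_N W^*(\sqrt{N}\ph)\frac{a^*(\ph)^N}{\sqrt{N!}}\Omega$, which has $\|\xi_N\|\simeq N^{1/4}$ but $\|(\cN+1)^{-1/2}\xi_N\|\le C$; its coefficients $\xi_N^{(\ell)}$ are compared through a recursion to the non-normalizable limit $\xi_\infty$, with $O(N^{-1/2})$ error in a $(\cN+1)^{-5/2}$-weighted norm, and the pairing $\langle \xi_\infty, e^{i\phi(\wt{g})}\Omega\rangle$ produces the extra factor $e^{\frac12\langle\ph,\wt{g}\rangle^2}$ that generates the subtraction term in $\Sigma(t)$. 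Your final formula with $Q_\ph g_{j,t}$ does happen to equal $e^{-\frac12\tau^T\Sigma(t)\tau}$, but only because $\langle\ph,g_{j,t}\rangle$ is real -- a fact you also do not verify (the paper proves it using $\Theta^*(t;0)=S\Theta^{-1}(t;0)S$ and $\Theta(t;0)(\ph_t,\overline{\ph}_t)=(\ph,\overline{\ph}_t)\,$-type identities from Proposition \ref{prop:bog}). Without a rigorous replacement for the $\xi_N\to\xi_\infty$ analysis (or a full excitation-space formulation), the proposal does not establish the stated covariance, and the claimed $O(N^{-1/2})$ rate for step (i) is unsupported.
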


{\it Remarks:} 
\begin{itemize}
\item[i)] The assumptions $\| \ph \|_{H^2} < \infty$ and $\| \partial^\alpha O_j (1-\Delta)^{-|\alpha|/2} \| < \infty$ are needed to control the possible singularity of the interaction potential. If one assumes $V(x)$ to be bounded, the results hold for all $\ph \in H^1 (\bR^3)$ and bounded $O_j$, $j=1,\dots , k$.
\item[ii)] We will show in Section \ref{sect:main} that the products $\langle g_{i,t} , \ph \rangle$ are real, for all $i=1, \dots , k$ and for all $t \in \bR$. Hence 
\[ \text{Re } \Sigma_{ij} (t) = \text{Re } \langle g_{i,t} , g_{j,t} \rangle - \langle g_{i,t} , \ph \rangle \langle \ph, g_{j,t} \rangle \]
and $\text{Im } \Sigma_{ij} (t) = \text{Im } \langle g_{i,t} , g_{j,t} \rangle$ for all $i \leq j$. It is easy to check that the real part $\text{Re } \Sigma (t)$ is non-negative, since
\begin{equation}\label{eq:posit}
\begin{split} \sum_{i,j=1}^k \tau_i \tau_j \text{Re } \Sigma_{ij} (t) &= \text{Re } \left\langle \sum_{i=1}^k \tau_i g_{i,t} , \sum_{j=1}^k \tau_j g_{j,t} \right\rangle - \left\langle \sum_{i=1}^k \tau_i g_{i,t} , \ph \right\rangle \left\langle \ph, \sum_{j=1}^k \tau_j g_{j,t} \right\rangle \\ &= \| g \|^2 - |\langle g , \ph \rangle|^2 \geq 0 
\end{split} \end{equation}
with $g = \sum_j \tau_j g_{j,t}$. The condition that $\text{Re } \Sigma (t)$ is strictly positive is therefore equivalent to the condition that $\ph \not\in \text{span} \{ g_{1,t} , \dots , g_{k,t} \}$. 
\item[iii)] If $\text{Re } \Sigma (t)$ is not strictly positive, then $\Sigma (t)$ does not need to be invertible and (\ref{eq:multivar-rate}) does not hold true. Still, from the proof in Section \ref{sect:main} it follows that 
\begin{equation}\begin{split}  \Big| \, \E_{\psi_{N,t}} f_1 (\cO_{1,t}) \dots f_k (\cO_{k,t}) - &\int d\tau_1, \dots d\tau_k \, \widehat{f}_1 (\tau_1) \dots \widehat{f}_k (\tau_k) \, e^{-\frac{1}{2} \sum_{i,j=1}^k \Sigma_{ij} (t) \tau_i \tau_j} \Big| \\ \leq \; &\frac{C e^{K|t|}}{\sqrt{N}} \prod_{j=1}^k \int d\tau |\widehat{f}_j (\tau)| (1+|\tau|^5 + N^{-1} \tau^8 + N^{-2} \tau^{10}) \end{split} \end{equation}
for all $f_1, \dots , f_k \in L^1 (\bR)$ with $\widehat{f}_j \in L^1 (\bR, (1+ \tau^{10}) d\tau)$.
\item[iv)] Already at time $t= 0$, when particles are independent, the covariance matrix 
\[ \Sigma (0) = \langle O_i \ph , O_j \ph \rangle - \langle \ph, O_i \ph \rangle \langle \ph, O_j \ph \rangle \] has an imaginary part, given by
\[ \text{Im } \Sigma_{ij} (0) = \frac{1}{2i} \langle \ph, [ O_i , O_j ] \ph \rangle \]
for all $i\leq j$. If the operators $O_1, \dots , O_j$ commute, then the imaginary part vanishes, and $\Sigma (0)$ is a real symmetric matrix. In this case, assuming $\Sigma (0)$ to be strictly positive, the integral on the l.h.s. of (\ref{eq:multivar-rate}) is the expectation of the product $\prod_{j=1}^k f_j (x_j)$, where $x_1, \dots , x_k$ are centered Gaussian random variables with covariance matrix $\Sigma (0)$. Hence, for commuting operators $O_1, \dots , O_k$ we recover a ``classical'' multivariate central limit theorem.  
\item[v)] If the operators $O_i$ commute, then the matrix $\Sigma (t)$ remains real symmetric also for times $t \not = 0$. This follows from the properties $U^* (t;0) U(t;0) - V^* (t;0) V(t;0) = 1$ and $U^* (t;0) J V(t;0) J = V^* (t;0) J U (t;0) J$ characterizing the component of a Bogoliubov transformation (see Proposition \ref{prop:bog} below). Here we introduced the antilinear operator $J$ defined by $Jf = \overline{f}$ for all $f \in L^2 (\bR^3)$. To prove that $\Sigma (t)$ is real symmetric we observe that, since $\langle Jf , g \rangle = \langle Jg, f \rangle$ for every $f,g \in L^2 (\bR^3)$, 
\[ \begin{split} \langle g_{i,t} , g_{j,t} \rangle = \;&\langle U(t;0) O_i \ph_t + J V(t;0) O_i\ph_t , U(t;0) O_j \ph_t + J V(t;0) O_j \ph_t \rangle \\ = \; &\langle O_i \ph_t , U^* (t;0) U(t;0)  O_j \ph_t \rangle + \langle V^* (t;0) V(t;0) O_j \ph_t , O_i \ph_t \rangle \\ &+ \langle  O_i \ph_t , U^* (t;0) J V(t;0) O_j \ph_t \rangle + \langle V^* (t;0) J U(t;0) O_j \ph_t , O_i \ph_t \rangle \\ = \; & \langle O_i \ph_t ,O_j \ph_t \rangle + 2\text{Re } \langle O_i \ph_t , V^* (t;0) V(t;0) O_j \ph_t \rangle \\ &+ 2 \text{Re } \langle O_i \ph_t, V^* (t;0) J U(t;0) O_j \ph_t \rangle \end{split} \]
which is clearly real, if $[O_i, O_j ] = 0$. Hence, if the operators $O_1, \dots , O_k$ commute, the integral on the l.h.s. of (\ref{eq:multivar-rate}) is just the expectation $\E_{G_t} f_1 (x_1) \dots f_k (x_k)$ where $G_t$ is a Gaussian vector with real symmetric covariance matrix $\Sigma (t)$.
\item[vi)] In a different setting, a quantum multivariate central limit theorem for the sum of independent and identically distributed random variables has been shown in \cite{JPP}; in this paper the authors 
identify the limiting integral appearing in (\ref{eq:multivar-rate}) as the expectation of $\prod_{j=1}^k f_j (x_j)$ with respect to a quasi-free state.   
\end{itemize}

\medskip

In contrast with the central limit theorem obtained in \cite{BKS}, Theorem \ref{thm:multi-CLT} gives a precise bound on the rate of the convergence towards the Gaussian expectations. For $k = 1$, $\Sigma (t)$ is the scalar
\[ \Sigma (t) = \sigma_t^2 = \| U(t;0) O \ph_t + \overline{V(t;0) O \ph_t} \|^2 - |\langle \ph, U(t;0) O \ph_t + \overline{V(t;0) O \ph_t} \rangle |^2 \] which is always real (and non-negative). Hence, the expectation of $f (\cO_t)$ with respect to the measure induced by $\psi_{N,t}$ converges towards the expectation of $f(x)$, with $x$ a scalar centered Gaussian variable with variance $\sigma_t^2$. We recover in this case the central limit theorem proven in \cite{BKS}. Actually, we obtain more, since we derive also a bound for the convergence rate of probabilities, in the spirit of a Berry-Ess{\'e}en central limit theorem.

\begin{cor}[Berry-Ess{\'e}en type CLT] 
Let $V$ satisfy (\ref{eq:bdV2}). Let $\ph \in H^2 (\bR^3)$ and let $\psi_{N,t} = e^{-i H_N t} \ph^{\otimes N}$ denote the solution of the Schr\"odinger equation (\ref{eq:SE}). Let $O$ be self-adjoint operators on $L^2(\R^3)$, with $\| \partial^{\alpha} O (1-\Delta)^{-|\alpha|/2}\| < \infty$ for all $\alpha \in \bN^3$ with $|\alpha| \leq 2$, and define \[ \cO_t = \frac{1}{\sqrt{N}} \sum_{j=1}^N (O^{(j)} - \langle \ph_t , O \ph_t \rangle) \,. \] 
Then there exists a constant $K > 0$ and, for every $-\infty < \alpha < \beta < \infty$, a constant $C > 0$ such that 
\[ \left| \P_{\psi_{N,t}} (\cO_t  \in [\alpha;\beta]) - \P (G_t \in [\alpha;\beta]) \right| \leq C e^{K|t|} N^{-1/12} \]
where $G_t$ is a centered Gaussian random variable, with variance 
\[ \sigma_t^2 = \| U(t;0) O \ph_t + \overline{V(t;0) O \ph_t} \|^2 - |\langle \ph , U(t;0) O \ph_t + \overline{V(t;0) O \ph_t} \rangle|^2 \, . \]  
\end{cor}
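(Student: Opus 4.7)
The plan is to deduce the Berry-Ess\'een bound from the scalar case ($k=1$) of Theorem~\ref{thm:multi-CLT} by applying it to a smooth sandwich approximation of $\chi_{[\alpha,\beta]}$ and optimising the smoothing scale.

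First, fix a non-negative mollifier $\rho\in C_c^\infty(\bR)$ with $\int\rho=1$ and $\mathrm{supp}\,\rho\subset[-1,1]$, set $\rho_\eps(x)=\eps^{-1}\rho(x/\eps)$, and for $0<\eps<(\beta-\alpha)/4$ introduce
\[ h_\eps^+:=\chi_{[\alpha-\eps,\beta+\eps]}*\rho_\eps,\qquad h_\eps^-:=\chi_{[\alpha+\eps,\beta-\eps]}*\rho_\eps, \]
so that $0\le h_\eps^-\le \chi_{[\alpha,\beta]}\le h_\eps^+\le 1$. A short Fourier computation using $\widehat{h_\eps^\pm}(\tau)=\widehat{\chi}(\tau)\widehat{\rho}(\eps\tau)$ (with $\chi$ denoting the appropriate indicator), the bound $|\widehat{\chi}(\tau)|\le \min(\beta-\alpha+2\eps,\,2/|\tau|)$, and the Schwartz decay of $\widehat\rho$ gives
\[ \int d\tau\,|\widehat{h_\eps^\pm}(\tau)|\,|\tau|^m \le C_{\alpha,\beta}\,\eps^{-m}\qquad(m=5,8,10), \]
together with $\int d\tau\,|\widehat{h_\eps^\pm}(\tau)|\le C_{\alpha,\beta}(1+\log\eps^{-1})$, so that $\widehat{h_\eps^\pm}\in L^1(\bR,(1+\tau^{10})d\tau)$ as required to invoke Theorem~\ref{thm:multi-CLT}.

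Next, applying the scalar case of Theorem~\ref{thm:multi-CLT} to $f=h_\eps^\pm$ (legitimate since $\Sigma(t)=\sigma_t^2>0$; the degenerate case $\sigma_t^2=0$ is handled directly as $\cO_t\to 0$ in $L^2$-sense), the weighted Fourier bounds above yield
\[ \bigl|\E_{\psi_{N,t}}h_\eps^\pm(\cO_t)-\E\,h_\eps^\pm(G_t)\bigr|\le \frac{C e^{K|t|}}{\sqrt{N}}\bigl(\eps^{-5}+N^{-1}\eps^{-8}+N^{-2}\eps^{-10}\bigr). \]
Since the density of $G_t$ is bounded by $(2\pi\sigma_t^2)^{-1/2}$ and $h_\eps^\pm-\chi_{[\alpha,\beta]}$ is supported in a set of Lebesgue measure $O(\eps)$, one also has $\bigl|\E\,h_\eps^\pm(G_t)-\P(G_t\in[\alpha,\beta])\bigr|\le C\eps/\sigma_t$.

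Finally, combining the sandwich $\E\,h_\eps^-(\cO_t)\le \P_{\psi_{N,t}}(\cO_t\in[\alpha,\beta])\le \E\,h_\eps^+(\cO_t)$ with the two estimates above leads to
\[ \bigl|\P_{\psi_{N,t}}(\cO_t\in[\alpha,\beta])-\P(G_t\in[\alpha,\beta])\bigr|\le \frac{C e^{K|t|}}{\sqrt{N}}\bigl(\eps^{-5}+N^{-1}\eps^{-8}+N^{-2}\eps^{-10}\bigr)+\frac{C\eps}{\sigma_t}. \]
The choice $\eps=N^{-1/12}$ balances the two dominant contributions $\eps^{-5}/\sqrt{N}$ and $\eps$, producing the announced rate $N^{-1/12}$; the remaining terms $N^{-1}\eps^{-8}/\sqrt{N}=N^{-5/6}$ and $N^{-2}\eps^{-10}/\sqrt{N}=N^{-5/3}$ are subleading. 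The only real obstacle is precisely this balancing: the $|\tau|^5$ Fourier moment required by Theorem~\ref{thm:multi-CLT} forces the exponent $1/12$, and improving the Berry-Ess\'een rate through this smoothing route would require strengthening the Fourier-weighted estimate in Theorem~\ref{thm:multi-CLT}.
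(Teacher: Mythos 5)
Your proposal is correct and follows essentially the same route as the paper's own proof: mollified indicator functions sandwiching $\chi_{[\alpha,\beta]}$, the scalar case of Theorem \ref{thm:multi-CLT} applied to them via the weighted Fourier bounds, and optimisation at $\eps = N^{-1/12}$. The only cosmetic differences (the logarithmic bound on $\int|\widehat{h}^\pm_\eps|$ instead of the paper's cruder $|\beta-\alpha|\eps^{-1}$, and the explicit treatment of the degenerate case $\sigma_t^2=0$) do not change the argument or the rate.
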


{\it Remark.} The constant $C$ depends on $\alpha, \beta$. {F}rom the proof below, it is clear that it can be bounded by $C \leq c (1+ |\beta-\alpha|)$ for a constant $c > 0$ independent of $\alpha,\beta$. For any $\kappa > 0$, we have
\[ \begin{split} \P_{\psi_{N,t}} (\cO_t  < - N^{\kappa}) \leq \; &\P_{\psi_{N,t}} (N^{-\kappa} |\cO_t| \geq 1) \\  \leq \; &N^{-2\kappa} \, \E_{\psi_{N,t}} \cO_t^2  \\ \leq \; & N^{-2\kappa}\, \tr \, \gamma_{N,t}^{(1)} (O^{(1)} - \langle \ph_t , O \ph_t \rangle)^2 \\ &+ N^{-2\kappa+1}  \, \tr \gamma^{(2)}_{N,t} (O-\langle\ph_t , O \ph_t \rangle) \otimes (O - \langle \ph_t , O \ph_t\rangle)\\ \leq \; &C  \| O \|^2  e^{K|t|} \,  N^{-2\kappa}\end{split} \]
by (\ref{eq:conv-mar}), and similarly for $\P (G < -N^{-\kappa})$. Therefore, we find 
\[ \begin{split}  |\P_{\psi_{N,t}} (\cO_t \leq \beta) - \P (G \leq \beta)|  &\leq C e^{K|t|} N^{-2\kappa} +  |\P_{\psi_{N,t}} (\cO_t \in [-N^{\kappa},\beta]) - \P (G \in [ -N^{\kappa}, \beta])| \\ &\leq C e^{K|t|} N^{-2\kappa} + C e^{K |t|} N^{-1/12+\kappa}\;.  \end{split} \]
Hence, choosing $\kappa = 1/36$, we conclude that, for any $\beta \in \bR$, 
\[ |\P_{\psi_{N,t}} (\cO_t \leq \beta) - \P (G \leq \beta)| \leq C e^{K|t|} N^{-1/18} \,. \]

\begin{proof}
Let $f \in L^1 (\bR)$ with $\widehat{f} \in L^1 (\bR, (1+ \tau^{10}) d\tau)$. It follows from Theorem \ref{thm:multi-CLT} that
\[ \left| \E_{\psi_{N,t}} f (\cO) - \E f (G_t) \right| \leq \frac{C e^{K|t|}}{\sqrt{N}} \int d\tau |\widehat{f} (\tau)| (1+|\tau|^5 + N^{-1} \tau^8 + N^{-2} \tau^{10})  \]
where $G_t$ is a centered Gaussian with variance $\sigma_t^2$. 

Let $\eta \in C_0^\infty (\bR)$ with $\eta (s) \geq 0$ for all $s \in \bR$, $\eta (s) = 0$ for all $|s| > 1$ and $\int \eta (s) ds = 1$. For $\eps > 0$, let $\eta_\eps (s) = \eps^{-1} \eta (s/\eps)$. Let $A=[\alpha, \beta]$. We observe that, for any $\eps > 0$
\[ f_{-,\eps} := \chi_{[\alpha+\eps ; \beta- \eps]} * \eta_\eps \leq \chi_{[\alpha,\beta]}  \leq \chi_{[\alpha-\eps ; \beta+\eps]} *\eta_\eps =: f_{+,\eps} \]
and therefore
\[ \E_{\psi_{N,t}}  f_{-,\eps} (\cO_t) \leq \P_{\psi_{N,t}} (\cO_t \in A) \leq \E_{\psi_{N,t}} f_{+,\eps} (\cO_t)\;. \]
Since \[ \widehat{f_{-,\eps}} (\tau) = \frac{e^{i\tau (\beta-\eps)} - e^{i\tau (\alpha+\eps)}}{i\tau} \, \widehat{\eta} (\eps \tau) \]
we find 
\[ \int d\tau \, (1+ |\tau|^5 + N^{-1} \tau^8 + N^{-2} \tau^{10}) \, |\widehat{f_{-\eps}} (\tau)| \leq C \left( |\beta-\alpha| \eps^{-1} + \eps^{-5} + N^{-1} \eps^{-8} + N^{-2} \eps^{-10} \right) \,. \]
Therefore, we conclude that
\[ \left| \E_{\psi_{N,t}} f_{-,\eps} (\cO_t) - \E f_{-,\eps} (G) \right| \leq C e^{K|t|} (N^{-1/2} \eps^{-5} +  N^{-3/2} \eps^{-8} +  N^{-5/2} \eps^{-10}) \] and, analogously, 
\[ \left| \E_{\psi_{N,t}} \, f_{+,\eps} (\cO_t) - \E f_{+,\eps} (G) \right| \leq C e^{K|t|} (N^{-1/2} \eps^{-5} +  N^{-3/2} \eps^{-8} +  N^{-5/2} \eps^{-10})\;. \]
Hence
\begin{multline*} \E f_{-,\eps} (G) - C e^{K|t|} (N^{-1/2} \eps^{-5} + N^{-3/2} \eps^{-8} + N^{-5/2} \eps^{-10} )
\\  \leq \P_{\psi_{N,t}} (\cO_t \in A) \leq \E f_{+,\eps} (G) + C e^{K|t|} (N^{-1/2} \eps^{-5} +  N^{-3/2} \eps^{-8} +  N^{-5/2} \eps^{-10}) \, . \end{multline*}
Since
\[  \left| \E f_{-,\eps} (G) - \E \chi_A (G) \right| \leq C \eps \quad \text{and } \quad \left| \E f_{+,\eps} (G) - \E \chi_A (G) \right| \leq C \eps \]
we find
\[ \left| \P_{\psi_{N,t}} (\cO_t \in A) - \P (G \in A) \right| \leq C \eps + C e^{K|t|} (N^{-1/2} \eps^{-5} +  N^{-3/2} \eps^{-8} +  N^{-5/2} \eps^{-10})\;. \]
Optimizing over $\eps > 0$ we obtain
\[ \left|  \P_{\psi_{N,t}} (\cO_t \in A) - \P (G \in A) \right| \leq C e^{K|t|} N^{-1/12} \, . \]
\end{proof}

The rest of the paper is organized as follows. In Section \ref{sec:fock} we introduce the formalism of second quantization, we review the main ideas of the coherent states approach developed in \cite{RS} to prove the convergence (\ref{eq:conv-mar}) and we introduce the Bogoliubov transformations $\Theta (t;s)$ describing the limiting evolution of the fluctuations around the 
Hartree dynamics and appearing in the covariance matrix $\Sigma$ of the Gaussian variables in Theorem \ref{thm:multi-CLT}. In Section \ref{sect:est}, we show some key estimates on the growth of the fluctuations around the mean field Hartree dynamics. Using these bounds, we will prove Theorem \ref{thm:multi-CLT} in Section \ref{sect:main}.

\section{Fock space and coherent states approach}\label{sec:fock}
\setcounter{equation}{0}

The bosonic Fock space over $L^2 (\bR^3)$ is defined by
\be\label{eq:fock}
\mathcal{F}=\bigoplus_{n\geq 0}\,L_s^2(\R^{3n})\;.
\ee
%
It is easy to check that $\mathcal{F}$, equipped 
%
%
with the inner product
\bee
\langle \psi\,,\,\phi \rangle = \sum_{n=1}^{\infty} \langle \psi^{(n)}\,,\,\phi^{(n)} \rangle_{L^2} \,,\ \ \ \forall \psi\,,\phi\,\in\ \mathcal{F}\; ,
\eee
is an Hilbert space. The advantage of Fock space, with respect to the $N$-particle space $L^2_s (\bR^{3N})$, is that, on $\cF$, it is possible to consider states where the number of particles is not fixed. A vector $\Psi = \{ \psi^{(0)}, \psi^{(1)}, \psi^{(2)}, \dots \}$ describes a state having $n$ particles with probability $\| \psi^{(n)} \|^2$. 

\medskip

Next, we introduce some important class of operators acting on the Fock space $\cF$. For any operator $O$ on $L^2 (\bR^3)$ (a one-particle operator), we define the second quantization $d\Gamma (O)$ of $O$ by
\bee 
(d\Gamma (O) \psi)^{(n)} = \sum_{j=1}^{n}(1\otimes 1\otimes\dots\otimes O \otimes\dots\otimes 1)\,\psi^{(n)}\, .
\eee
An important example is the number of particle operator $\cN = d\Gamma (1)$ whose action is given by $(\cN \psi)^{(n)} = n \psi^{(n)}$. Notice that, for every bounded operator $O$ on $L^2 (\bR^3)$, we have the operator inequality \[ \pm d\Gamma (O) \leq \| O \| \cN \] and the norm bound $\| d\Gamma (O) \psi \| \leq \| O \| \| \cN \psi \|$.

\medskip

For $f \in L^2 (\bR^3)$, we define the creation operator $a^* (f)$ and its adjoint, the annihilation operator $a(f)$, by 
\bea
&&(a^*(f)\psi)^{(n)}(x_1,\dots,x_n)=\frac{1}{\sqrt n}\sum_{j=1}^n f(x_j)\psi^{(n-1)}(x_1,\dots,x_{j-1},x_{j+1},\dots,x_n)\;,\label{eq:creation} \\
&&(a(f)\psi)^{(n)}(x_1,\dots,x_n)=\sqrt{n+1}\int dx\,\ol{f(x)}\,\psi^{(n+1)}(x,x_1,\dots,x_n)\;. \label{eq:annihilation}
\eea 
Creation and annihilation operators satisfy the canonical commutation relations 
\begin{equation}\label{eq:CCR}
[a(f)\,,\,a^*(g)] = \langle f\,,\,g\rangle\;, \quad \text{ and } \quad [a(f)\,,\,a(g)] = [a^*(f)\,,\,a^*(g)] = 0\;
\end{equation}
for all $f,g \in L^2 (\bR^3)$. We will also use the notation $\phi (f) = a^* (f) + a(f)$. It is also convenient to introduce operator valued distributions $a_x^*, a_x$, which formally create and, respectively, annihilate a particle at point $x$ and are characterized by
\bea
a^*(f)=\int dx\,f(x)\,a_x^*\;, \quad  a(f)=\int dx\,\ol{f(x)}\,a_x\;. \nn
\eea
In terms of these operator valued distributions, on can express the number of particles operator $\cN$ as 
\bee
\Nn=\int dx\,a_x^*\,a_x\;.
\eee
More generally, for a one-particle operator $O$ with integral kernel $O (x,y)$, its second quantization is given by 
\[ d\Gamma (O) = \int dx dy \, O(x,y) a_x^* a_y \, . \]
Observe that creation and annihilation operators are not bounded, but they can be estimated in terms of the square root of the number of particles operator $\Nn$, in the sense that 
\begin{equation}
\begin{split} \label{eq:est-a} 
\| a(f)\psi\| &\leq \|f\|_2 \|\Nn^{1/2}\psi\|\;, \\ 
\| a^*(f)\psi\| &\leq \|f\|_2 \|(\Nn+1)^{1/2}\psi\|\;, \\
\| \phi(f)\psi\| &\leq 2 \|f\|_2 \|(\Nn+1)^{1/2}\psi\|\;,
\end{split} \end{equation}
for all $\psi\in\mathcal F$ and $f\in L^2(\R^3)$. 

In order to define a time-evolution on the Fock space $\cF$, we introduce the Hamilton operator $\cH_N$, by 
\[ \begin{split} (\cH_N\psi)^{(n)} &= \cH_N^{(n)} \psi^{(n)} \quad \text{with } \\
\cH_N^{(n)} &= \sum_{j=1}^n -\Delta_{x_j} + \frac{1}{N} \sum_{i<j}^n V (x_i -x_j)\,. \end{split}  \]
In terms of the operator valued distribution $a_x^*, a_x$, it is simple to check that the Hamiltonian $\cH_N$ can be written as
\begin{equation}\label{eq:ham-F} \cH_N = \int dx \nabla_x a_x^* \nabla_x a_x + \frac{1}{2N} \int dx dy V(x-y) a_x^* a_y^* a_y a_x\,. \end{equation}
We observe that, by definition, the Hamiltonian $\cH_N$ commutes with the number of particles operator (this corresponds to the fact that, for every term in (\ref{eq:ham-F}), the number of creation operators matches the number of annihilation operators). This implies that the time-evolution generated by $\cH_N$ preserves the number of particles in the system. In particular, when restricted to the sector of the Fock space $\cF$ with exactly $N$ particles, the Hamiltonian $\cH_N$ coincides with the $N$-particle Hamiltonian (\ref{eq:ham}). 

\medskip

We will be interested in the time-evolution of so called coherent initial data. For $\f \in L^2 (\bR^3)$, we define the Weyl operator 
\be\label{eq:weyl}
W(\f) = e^{a^*(\f)-a(\f)}\;.
\ee
The coherent state with wave function $\f \in L^2 (\bR^3)$ is defined as $W(\f) \Omega$, where $\Omega = \{ 1, 0 ,0 , \dots \} \in \cF$ is the vacuum. It is easy to check that
\[ W(\f) \Omega = e^{-\| \f \|^2/2} \sum_{n=0}^\infty \frac{(a^* (\f))^n}{n!} \Omega = e^{-\| \f \|^2/2} \left\{ 1 , \f , \frac{\f^{\otimes 2}}{\sqrt{2!}}, \frac{\f^{\otimes 3}}{\sqrt{3!}} , \dots \right\}\,. \]
Coherent states do not have a fixed number of particles; instead they are a linear superposition of states with all possible number of particles. The average number of particles in the coherent state $W(\f) \Omega$ is given by
\[ \langle W(\f) \Omega, \cN W (\f) \Omega \rangle = \| \f \|^2\,. \]
More precisely, the number of particles in a coherent state $W(\f) \Omega$ is a Poisson random variable with average and variance $\| \f \|^2$. This follows from the observation that Weyl operators act as shifts on creation and annihilation operators, in the sense that 
\begin{equation}\label{eq:shifts}
\begin{split}
 W^*(\f)a(f)W(\f) &=a(f)+\langle f\,,\,\f \rangle\;, \\
W^*(\f)a^*(f)W(\f) &=a^*(f)+\langle \f\,,\,f \rangle\;, 
\end{split}
\end{equation}
for all $\f, f\in L^2(\R^3)$. These identities also imply that coherent states are eigenvectors of all annihilation operators, since
\[ a(f) W(\f) \Omega = W(\f) (a(f) + \langle f, \f \rangle) \Omega = \langle f, \f \rangle W(\f) \Omega\,. \]

\medskip

In order to obtain information about the evolution of factorized $N$-particle initial data, we will study the dynamics of coherent states, with average number of particles given by $N$. To this end we fix $\ph \in L^2 (\bR^3)$ with $\| \ph \| = 1$, and we consider the time evolution 
\[ \Psi_{N,t} = e^{-i \cH_N t} W(\sqrt{N} \ph) \Omega\;. \]
Because of the mean field character of the interaction, we may expect that the evolution of an initial coherent state remain approximately coherent, i.e. 
\[ \Psi_{N,t} \simeq W(\sqrt{N} \ph_t) \Omega \]
where $\ph_t$ is the solution of the Hartree equation (\ref{eq:hartree0}). More precisely, we define 
$\xi_{N,t} = W^* (\sqrt{N} \ph_t) \Psi_{N,t}$ so that 
\[ \Psi_{N,t} = W(\sqrt{N} \ph_t) \xi_{N,t}\;. \]
The vector $\xi_{N,t}$ describes the fluctuations around the mean field evolution; $\Psi_{N,t}$ is close to a coherent state, if the number of particles in $\xi_{N,t}$ is small. It is useful to introduce the fluctuation dynamics 
\begin{equation}\label{eq:fluct} \cU_N (t;s) = W^* (\sqrt{N} \ph_t) e^{-i\cH_N (t-s)} W(\sqrt{N} \ph_s) \end{equation}
so that $\xi_{N,t} = \cU_N (t;0) \Omega$. To get convergence towards the Hartree dynamics, and to obtain estimates of the form (\ref{eq:conv-mar}), one need to prove a bound of the form 
\begin{equation}\label{eq:N-bd} \langle \cU_N (t;0) \Omega, \cN \cU_N (t;0) \Omega \rangle  \leq C e^{K |t|} \end{equation}
for the growth of the number of particles with respect to $\cU_N (t;0)$. Such an estimate immediately implies convergence towards the Hartree dynamics for coherent initial data. Projecting coherent states onto the $N$-particle sector of the Fock space, it can also be used to establish the convergence towards Hartree dynamics for approximately factorized $N$-particle initial data; see \cite{RS,CLS}.

\medskip

In order to show a bound of the form (\ref{eq:N-bd}), one observes that the fluctuation dynamics 
$\cU_N (t;s)$ satisfies a Schr\"odinger type equation
\[ i\partial_t \cU_N (t;s) = \cL (t) \cU_N (t;s) \]
with the time-dependent generator
\begin{equation}\label{eq:LN} \begin{split} \cL_N (t) = \; &W^* (\sqrt{N} \ph_t) \cH_N W(\sqrt{N} \ph_t) + \left[ i\partial_t W^* (\sqrt{N} \ph_t) \right] W (\sqrt{N} \ph_t) \\ = \; &\int dx \nabla_x a_x^* \nabla_x a_x + \int dx (V* |\ph_t|^2)(x) a_x^* a_x + \int dx dy V(x-y) \ph_t (x) \overline{\ph}_t (y) a_x^* a_y \\ &+ \frac{1}{2} \int dx dy \, V(x-y) \, \left( \ph_t (x) \ph_t (y) a_x^* a_y^* + \overline{\ph}_t (x) \overline{\ph}_t (y) a_x a_y \right) \\
&+ \frac{1}{\sqrt{N}} \int dx dy V(x-y) a_x^* (\ph_t (y) a_y^* + \overline{\ph}_t a_y) a_x \\ &+ \frac{1}{2N} \int dx dy V(x-y) a_x^* a_y^* a_y a_x \,.\end{split} \end{equation}
In contrast with the original Hamiltonian $\cH_N$, the generator $\cL_N (t)$ of the fluctuation dynamics does not commute with the number of particles operator $\cN$ (more precisely, the terms on the third and fourth line of (\ref{eq:LN}), in which the number of creation operator does not match the number of annihilation operators, do not commute with $\cN$). As a consequence, the number of particles is not conserved by the fluctuation dynamics $\cU_N (t)$. Nevertheless, in \cite{RS} it was possible to prove a bound of the form (\ref{eq:N-bd}) (and similar bounds for higher powers of $\cN$), showing that although the expectation of the number of particles operators grows in time, it remains bounded uniformly in $N$, for every fixed time. 

\medskip

It is worth noticing that this approach does not only prove the convergence (\ref{eq:conv-mar}) towards the limiting evolution with a precise bound on the rate; instead, it also describes the limiting form of the fluctuations around the mean field dynamics. In fact, from the expression (\ref{eq:LN}) for the generator of the fluctuation dynamics, one can expect that, as $N \to \infty$, the evolution of the fluctuations can be approximated by the limiting fluctuation dynamics $\cU_\infty (t;s)$, which solves the Schr\"odinger equation
\begin{equation}\label{eq:Uinfty} i\partial_t \cU_\infty (t;s) = \cL_\infty (t) \cU_\infty (t;s)\end{equation}
with the limiting generator
\begin{equation}\label{eq:Linfty} \begin{split} \cL_\infty (t) = \; &\int dx \nabla_x a_x^* \nabla_x a_x + \int dx (V* |\ph_t|^2)(x) a_x^* a_x + \int dx dy V(x-y) \ph_t (x) \overline{\ph}_t (y) a_x^* a_y \\ &+ \frac{1}{2} \int dx dy \, V(x-y) \, \left( \ph_t (x) \ph_t (y) a_x^* a_y^* + \overline{\ph}_t (x) \overline{\ph}_t (y) a_x a_y \right) \end{split} \end{equation}
obtained from $\cL_N (t)$ by formally taking the limit $N \to \infty$. The following proposition, taken from \cite{GV}, shows the well-posedness of the Schr\"odinger equation (\ref{eq:Uinfty}). 
\begin{prop}[Prop. 4.1 in \cite{GV}]\label{prop:GV}
Let $V \in L^\infty (\bR^3) + L^2 (\bR^3)$, and assume that $t \to \ph_t$ is in $C (\bR, L^2 (\bR^3) \cap L^4 (\bR^3))$ (both conditions hold true under the assumptions of Theorem \ref{thm:multi-CLT}). Then there exists a unique two-parameter group of unitary transformations $\cU_\infty (t;s)$ with $\cU_\infty (s;s) = 1$ for all $s \in \bR$, and such that $\cU_\infty (t;s)$ is strongly differentiable on the domain $\cD (d\Gamma (-\Delta+1))$ with
\begin{equation}\label{eq:Uinfty1} i\partial_t \cU_\infty (t;s) = \cL_\infty (t) \cU_\infty (t;s) \, \end{equation}
where $\cL_\infty (t)$ is the time-dependent generator defined in (\ref{eq:Linfty}).
\end{prop}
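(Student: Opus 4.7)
My plan is to establish the result by a regularization argument, approximating the unbounded generator $\cL_\infty(t)$ by a family of bounded operators for which existence of the dynamics is classical, and then passing to the limit with the aid of suitable a priori estimates on the number and energy operators. Concretely, for $M \in \bN$ I would introduce the cutoff generator
\be
\cL_\infty^{(M)}(t) = \mathbf{1}(\cN \leq M)\, \cL_\infty(t)\, \mathbf{1}(\cN \leq M),
\ee
which is a bounded self-adjoint operator whose norm may depend on $M$ but whose structure (quadratic in $a^\sharp$) is preserved. Since $t \mapsto \cL_\infty^{(M)}(t)$ is strongly continuous on $\cF$, standard results (e.g.\ Reed--Simon II, Thm.\ X.69) yield a unique two-parameter unitary group $\cU_\infty^{(M)}(t;s)$ on $\cF$ solving $i\partial_t \cU_\infty^{(M)}(t;s) = \cL_\infty^{(M)}(t) \cU_\infty^{(M)}(t;s)$ with $\cU_\infty^{(M)}(s;s) = 1$.

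The heart of the argument is to prove a priori bounds on $\cU_\infty^{(M)}(t;s)$ that are uniform in $M$. Using the explicit form of $\cL_\infty(t)$ together with the canonical commutation relations, one computes
\be
\frac{d}{dt} \langle \cU_\infty^{(M)}(t;s) \xi , (\cN+1)^k \, \cU_\infty^{(M)}(t;s) \xi \rangle = i \langle \cU_\infty^{(M)}(t;s) \xi, [\cL_\infty^{(M)}(t), (\cN+1)^k] \cU_\infty^{(M)}(t;s) \xi \rangle,
\ee
and only the pairing terms $\int V(x-y)\ph_t(x)\ph_t(y) a_x^* a_y^*$ and their adjoints contribute to the commutator. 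Using the estimates (\ref{eq:est-a}) together with $\|V*|\ph_t|^2\|_\infty,\ \|(V|\ph_t|)^2*1\|_\infty \lesssim \|\ph_t\|_{L^2\cap L^4}^2$ (which under the hypotheses is uniformly bounded on compact time intervals), one obtains a Gr\"onwall-type inequality
\be
\langle \cU_\infty^{(M)}(t;s) \xi, (\cN+1)^k \cU_\infty^{(M)}(t;s) \xi \rangle \leq e^{K_k |t-s|} \langle \xi, (\cN+1)^k \xi \rangle,
\ee
with $K_k$ independent of $M$. An analogous computation, commuting $\cL_\infty^{(M)}(t)$ with $d\Gamma(-\Delta+1)$, yields uniform bounds on the energy $\langle \cU_\infty^{(M)}(t;s) \xi, d\Gamma(-\Delta+1) \cU_\infty^{(M)}(t;s) \xi\rangle$ for $\xi \in \cD(d\Gamma(-\Delta+1))$, using integration by parts to absorb the gradients appearing in the kinetic term against the smoothness of $\ph_t$.

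With these bounds in hand, passage to the limit $M \to \infty$ is standard: for any $\xi \in \cD(d\Gamma(-\Delta+1))$ the family $\{\cU_\infty^{(M)}(t;s)\xi\}_M$ is shown to be Cauchy in $\cF$ by writing $\cU_\infty^{(M)} - \cU_\infty^{(M')}$ via Duhamel's formula and controlling the difference $(\cL_\infty^{(M)}(t) - \cL_\infty^{(M')}(t))\cU_\infty^{(M')}(t;s)\xi$ through the uniform bounds on $\cN^k$ (noting that each term in $\cL_\infty(t)$ is bounded by $(\cN+1)$ up to the contribution of the kinetic term which requires the energy bound). The limit $\cU_\infty(t;s)$ inherits unitarity, the group property, and (on the invariant domain $\cD(d\Gamma(-\Delta+1))$) strong differentiability with generator $\cL_\infty(t)$; uniqueness follows from the bounds plus a standard energy-type argument comparing any two solutions.

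\textbf{Main obstacle.} The subtle point is the uniform energy bound: the kinetic term $\int \nabla_x a_x^* \nabla_x a_x$ together with the off-diagonal pairing $\int V(x-y)\ph_t(x)\ph_t(y) a_x^* a_y^*$ generate a commutator whose control requires distributing derivatives via integration by parts onto $\ph_t$ and $V$, and handling the singular part of $V \in L^2$ by the operator inequality (\ref{eq:bdV2}). This is precisely where the regularity assumption $\ph_t \in L^4(\bR^3)$ (equivalently, the $H^1$ propagation of the Hartree flow) is crucial, and where the estimate would fail for rougher $\ph_t$.
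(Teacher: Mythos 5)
First, a point of comparison: the paper does not prove this proposition at all — it is imported verbatim as Prop.~4.1 of \cite{GV}, so there is no in-paper argument to measure yours against, and any complete proof you give is necessarily "a different route". Your general strategy (truncate the generator, prove $M$-uniform number/energy bounds, pass to the limit by Duhamel) is a legitimate and standard way to build such propagators, and your observation that only the pairing terms contribute to $[\cL_\infty(t),(\cN+1)^k]$, with constants controlled by $V\in L^2+L^\infty$ and $\ph_t\in L^2\cap L^4$ (the relevant quantity being the Hilbert--Schmidt norm of the kernel $V(x-y)\ph_t(x)\ph_t(y)$, not the garbled expression $\|(V|\ph_t|)^2 *1\|_\infty$), is correct in spirit. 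However, there are two genuine gaps. The first is your starting point: $\cL_\infty^{(M)}(t)=\mathbbm{1}(\cN\le M)\,\cL_\infty(t)\,\mathbbm{1}(\cN\le M)$ is \emph{not} a bounded operator, because the kinetic term $d\Gamma(-\Delta)$ is unbounded on every fixed particle-number sector; hence the Reed--Simon bounded-generator theorem you invoke does not apply, and your Cauchy argument in $M$ would in addition require mixed number--energy bounds to control the difference of the truncated kinetic terms. The standard fix is to cut off only the quadratic interaction (whose kernels are Hilbert--Schmidt under the stated hypotheses, hence bounded by $C(\cN+1)$ and bounded after the cutoff) while keeping $d\Gamma(1-\Delta)$ intact as the free part, constructing $\cU^{(M)}_\infty$ via Kato's theorem or the interaction-picture Dyson series for a bounded, strongly continuous perturbation of a fixed self-adjoint operator; then the difference $\cL^{(M)}-\cL^{(M')}$ lives on $\{\cN>\min(M,M')\}$ and is controlled by number moments alone.

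The second gap is the uniform propagation of the energy domain $\cD(d\Gamma(-\Delta+1))$, which you dispatch as an "analogous computation" with derivatives distributed "via integration by parts onto $\ph_t$ and $V$". This mechanism fails precisely for the potentials the paper cares about: for Coulomb-type $V$ one has $\nabla V\notin L^2_{\mathrm{loc}}$, and no regularity of $\ph_t$ beyond what the Hartree flow provides is available, so you cannot move derivatives onto $V$ or $\ph_t$. The commutator of $\cK=d\Gamma(-\Delta)$ with the pairing term $\int V(x-y)\ph_t(x)\ph_t(y)a_x^*a_y^*$ must instead be estimated without differentiating $V$, using the operator inequality $V^2\le D(1-\Delta)$ (equivalently Hardy-type bounds) to absorb the singularity against $\cK^{1/2}$ and $\cN^{1/2}$ — this is exactly the content of the quoted Lemma~\ref{lem:U2} ([CLS, Prop.~4.1]) and is the technical heart of the construction, not a routine corollary of the number estimates. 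You do flag this as the "main obstacle" and even mention (\ref{eq:bdV2}), but the proof as proposed does not actually carry out this step, and the mechanism you describe for it would not work; until it is supplied (and note it requires more than the hypotheses $V\in L^\infty+L^2$, $\ph_t\in C(\bR,L^2\cap L^4)$ under which the proposition is stated, which is one reason the paper simply cites \cite{GV}), the invariance of the energy domain, the limit $M\to\infty$ on that domain, the strong differentiability of $\cU_\infty(t;s)$ there, and the uniqueness argument all remain unproved.
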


Since the limiting generator $\cL_\infty (t)$ is quadratic in creation and annihilation operators, it turns out that the dynamics $\cU_\infty (t;s)$ acts on the Fock space as a Bogoliubov transformation. For $f,g \in L^2 (\bR^3)$, we define $A(f,g) = a(f) + a^* (\overline{g})$. Then, we have the relation
\begin{equation}\label{eq:AA^*} A^* (f,g) = A(\overline{g}, \overline{f}) = A\left( \cJ (f,g) \right), \quad \text{where } \quad \cJ = \left(\begin{array}{ll} 0 & J \\ J & 0 \end{array} \right) \end{equation}
and $J:L^2 (\bR^3) \to L^2 (\bR^3)$ is the antilinear operator defined by $Jf = \overline{f}$. {F}rom (\ref{eq:CCR}), we also find the commutation relations 
\begin{equation}\label{eq:CCR-AA} [ A(f_1,g_1) , A^* (f_2,g_2)] = \left\langle (f_1, g_1), S (f_2, g_2) \right\rangle_{L^2 \oplus L^2} , \quad \text{with } \quad S = \left( \begin{array}{ll} 1 & 0 \\ 0 &-1 \end{array} \right)\,. \end{equation}

A Bogoliubov transformation is a linear map $\nu : L^2 (\bR^3) \oplus L^2 (\bR^3) \to  L^2 (\bR^3) \oplus L^2 (\bR^3)$ which preserves (\ref{eq:AA^*}) and (\ref{eq:CCR-AA}), i.e. $ \nu^*  S \nu =  S$ and $\nu \cJ = \cJ \nu$. It is simple to check that every Bogoliubov transformation has the block form
\[ \nu = \left( \begin{array}{ll} U & JVJ \\  V & JUJ \end{array} \right) \]
where $U,V : L^2 (\bR^3) \to L^2 (\bR^3)$ are linear operators satisfying $U^* U - V^* V = 1$ and $U^* JVJ  - V^* J U J = 0$. 
\begin{prop}[Theorem 2.2 in \cite{BKS}] \label{prop:bog} Let $V^2 (x) \leq D (1-\Delta)$ and $\ph \in H^1 (\bR^3)$. Assume $\cU_\infty$ is defined by (\ref{eq:Uinfty1}). Then, for every $t,s \in \bR$, there exists a Bogoliubov transformation $\Theta (t;s) : L^2 (\bR^3) \oplus L^2 (\bR^3) \to L^2 (\bR^3) \oplus L^2 (\bR^3)$ such that 
\[ \cU_\infty^* (t;s) A(f,g) \cU_\infty (t;s) = A(\Theta (t;s) (f,g)) \, . \] 
Like every Bogoliubov transform, $\Theta (t;s)$ satisfies the relations $\Theta^* (t;s) S \Theta (t;s) = S$ and $\Theta (t;s) \cJ = \cJ \Theta (t;s)$ and it can be decomposed as
\begin{equation}\label{eq:bog-dec} \Theta (t;s) = \left( \begin{array}{ll} U (t;s) & J V(t;s) J \\
V(t;s) & J U(t;s) J \end{array} \right) \end{equation}
for appropriate operators $U (t;s), V(t;s) : L^2 (\bR^3) \to L^2 (\bR^3)$ satisfying $U^* (t;s) U (t;s) - V^* (t;s) V(t;s) = 1$ and $U^* (t;s) J V(t;s) J = V^* (t;s) J U (t;s) J$. Finally, if $\ph_t$ denotes the solution of the Hartree equation (\ref{eq:hartree0}), we have 
\[ \Theta (t;s) (\ph_t, \overline{\ph}_t) = (\ph_s , \overline{\ph}_s) \]
for every $t,s \in \bR$.
\end{prop}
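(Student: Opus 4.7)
The argument is a Heisenberg-picture analysis of the quadratic generator $\cL_\infty(t)$. Since every term in (\ref{eq:Linfty}) is quadratic in creation and annihilation operators, an application of the CCR (\ref{eq:CCR}) gives an identity of the form $[A(f,g), \cL_\infty(t)] = A(\mathcal{A}(t)(f,g))$, where $\mathcal{A}(t)\colon L^2(\R^3)\oplus L^2(\R^3) \to L^2(\R^3)\oplus L^2(\R^3)$ is a linear operator with a $2\times 2$ block structure whose blocks are built from $-\Delta$, the mean-field potential $V*|\ph_t|^2$, the exchange operator with kernel $V(x-y)\ph_t(x)\bar\ph_t(y)$, and the pairing operator with kernel $V(x-y)\ph_t(x)\ph_t(y)$. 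The assumptions $V^2 \leq D(1-\Delta)$ and $\ph \in H^1(\R^3)$, together with propagation of regularity for the Hartree equation, ensure that $\mathcal{A}(t)$ maps $H^1\oplus H^1$ into $L^2\oplus L^2$ boundedly, locally uniformly in $t$.

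I would then define $\Theta(t;s)$ as the unique solution of the linear ODE $i\partial_t\Theta(t;s) = \Theta(t;s)\mathcal{A}(t)$ with initial condition $\Theta(s;s) = \mathrm{Id}$. Both operator-valued functions $t \mapsto \cU_\infty^*(t;s) A(f,g)\cU_\infty(t;s)$ and $t \mapsto A(\Theta(t;s)(f,g))$ satisfy the same first-order equation $i\partial_t X(t)(f,g) = X(t)(\mathcal{A}(t)(f,g))$ with matching initial data $A(f,g)$ at $t=s$. Testing matrix elements against vectors in $\cD(d\Gamma(-\Delta+1))$, on which $\cU_\infty(t;s)$ is strongly differentiable by Proposition \ref{prop:GV}, reduces this to a scalar linear ODE, and uniqueness yields the Heisenberg intertwining $\cU_\infty^*(t;s) A(f,g)\cU_\infty(t;s) = A(\Theta(t;s)(f,g))$.

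The algebraic constraints are then nearly automatic. Conjugating the commutator (\ref{eq:CCR-AA}) by the unitary $\cU_\infty(t;s)$ preserves the scalar right-hand side, which directly gives $\Theta^*(t;s) S\Theta(t;s) = S$. For the intertwining $\Theta\cJ = \cJ\Theta$, apply $\cU_\infty^*(t;s)\,\cdot\,\cU_\infty(t;s)$ to the symmetry $A^*(f,g) = A(\cJ(f,g))$: since conjugation by a unitary commutes with taking adjoints, the left-hand side becomes $A(\cJ\Theta(t;s)(f,g))$ while the right-hand side becomes $A(\Theta(t;s)\cJ(f,g))$, forcing $\cJ\Theta = \Theta\cJ$. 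The block decomposition (\ref{eq:bog-dec}) is then an algebraic consequence of $\Theta\cJ = \cJ\Theta$ together with the off-diagonal form of $\cJ$, and the relations $U^*U - V^*V = 1$ and $U^*JVJ = V^*JUJ$ follow by expanding $\Theta^*S\Theta = S$ in blocks.

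For the condensate identity, set $(\chi_t,\eta_t) := \Theta(t;s)(\ph_t,\bar\ph_t)$. Differentiating using the ODE for $\Theta(t;s)$ and the chain rule reduces the claim $i\partial_t(\chi_t,\eta_t) = 0$ to the pointwise identity $\mathcal{A}(t)(\ph_t,\bar\ph_t) = -(i\partial_t\ph_t, i\partial_t\bar\ph_t)$; a direct computation confirms this because, when evaluated on $(\ph_t,\bar\ph_t)$, the pairing-term contribution of $\mathcal{A}(t)$ exactly cancels the exchange-term contribution, leaving only the mean-field operator $-\Delta + V*|\ph_t|^2$ acting on $\ph_t$ (and its complex conjugate on $\bar\ph_t$), which matches $-i\partial_t\ph_t$ by (\ref{eq:hartree0}). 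Combined with the initial value $(\chi_s,\eta_s) = (\ph_s,\bar\ph_s)$, this yields $\Theta(t;s)(\ph_t,\bar\ph_t) = (\ph_s,\bar\ph_s)$. The main technical obstacle throughout is the unboundedness of $\cL_\infty(t)$ and the operator-valued nature of $A(f,g)$: one must work on the invariant domain $\cD(d\Gamma(-\Delta+1))$ from Proposition \ref{prop:GV} to make all commutators and time derivatives rigorous, and use a Gronwall argument based on $V^2 \leq D(1-\Delta)$ to ensure global well-posedness of the ODE defining $\Theta(t;s)$.
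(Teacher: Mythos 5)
The paper itself does not prove this proposition; it cites Theorem 2.2 of \cite{BKS} and only records the route you follow, namely the evolution equation for $\Theta(t;s)$ with the generator built from $D_t$ and $B_t$. Your architecture (commutator of $A(f,g)$ with the quadratic generator, an ODE defining $\Theta$, conjugation of (\ref{eq:CCR-AA}) and of $A^*(f,g)=A(\cJ(f,g))$ for the relations $\Theta^* S\Theta=S$ and $\Theta\cJ=\cJ\Theta$, block algebra for (\ref{eq:bog-dec})) is therefore the intended one, and the algebraic part is correct. However, two of your steps are wrong as written, and they are sign errors that do not cancel inside your own chain. The map $(f,g)\mapsto A(f,g)=a(f)+a^*(\overline g)$ is \emph{antilinear}, so $iA(h)=A(-ih)$. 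If $\cA(t)$ is defined, as you do, by $[A(f,g),\cL_\infty(t)]=A(\cA(t)(f,g))$ (this computation indeed yields exactly the blocks $D_t$, $-JB_tJ$, $B_t$, $-JD_tJ$), then $t\mapsto \cU_\infty^*(t;s)A(f,g)\cU_\infty(t;s)$ and $t\mapsto A(\Theta(t;s)(f,g))$ with your equation $i\partial_t\Theta=\Theta\cA(t)$ do \emph{not} satisfy the same equation: matching them forces $i\partial_t\Theta(t;s)=-\Theta(t;s)\cA(t)$ (equivalently, define $\cA$ via $[\cL_\infty(t),A(f,g)]=A(\cA(t)(f,g))$). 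Second, your condensate verification is internally inconsistent: the direct computation you describe gives the first component $(-\Delta+V*|\ph_t|^2)\ph_t$, which by (\ref{eq:hartree0}) equals $+i\partial_t\ph_t$, not $-i\partial_t\ph_t$; indeed $\cA(t)(\ph_t,\overline\ph_t)=(i\partial_t\ph_t,\,i\partial_t\overline\ph_t)$. With your stated ODE the vector $\Theta(t;s)(\ph_t,\overline\ph_t)$ would therefore not be constant in $t$. Fixing the ODE sign makes the identity you actually computed precisely the one needed, so the argument closes — but only after both signs are repaired consistently.

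There are also analytic points to tighten. The claim that $\cA(t)$ maps $H^1\oplus H^1$ boundedly into $L^2\oplus L^2$ is false, since $D_t$ contains $-\Delta$; the correct construction splits $\cA(t)=\cA_0+\cA_1(t)$ with $\cA_0=\mathrm{diag}(-\Delta,\Delta)$ self-adjoint and time-independent, observes that $V^2\le D(1-\Delta)$ and $\ph\in H^1(\bR^3)$ make the remaining blocks bounded (in fact $B_t$ and the exchange operator are Hilbert--Schmidt, with norms controlled by $\sup_x\|V(x-\cdot)\ph_t\|_2\le C\|\ph_t\|_{H^1}$, uniformly in $t$), and builds the two-parameter propagator in the interaction picture by a Dyson expansion; a Gronwall bound by itself does not produce the propagator of a non-autonomous equation with an unbounded generator. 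Likewise, "same ODE plus uniqueness" is too quick because $\cA(t)$ is unbounded and the unknown is operator-valued: the clean version fixes $t$ and shows that $r\mapsto\langle\psi_1,\cU_\infty^*(r;s)A(\Theta(t;r)(f,g))\cU_\infty(r;s)\psi_2\rangle$ is constant for $\psi_1,\psi_2$ in the invariant domain of Proposition \ref{prop:GV}, using the propagation bounds of Lemma \ref{lem:U2} (and one should note the injectivity of $h\mapsto A(h)$ when reading off $\cJ\Theta=\Theta\cJ$). None of these repairs requires a new idea, but as written the key intertwining step and the condensate check fail by a sign, and the well-posedness claim for the $\Theta$-equation is incorrect.
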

The proof of this proposition can be found in \cite{BKS}. As explained there, the Bogoliubov transformations $\Theta (t;s)$ satisfy the evolution equation
\[ i\partial_t \Theta (t;s) =  \Theta (t;s) \cA (t) \]
with the generator
\[ \cA (t) = \left( \begin{array}{ll} D_t & - JB_tJ \\ B_t & -JD_t J \end{array} \right) \]
with the linear operators 
\[ \begin{split} D_t f &= -\Delta f + (V*|\ph_t|^2) f + (V*\overline{\ph}_t f) \ph_t\,, \\
B_t f &= (V*\overline{\ph}_t f) \overline{\ph}_t\;. \end{split} \]
Observe here that $D_t^* = D_t$ and $B_t^* = J B_t J$ and therefore $\cA (t) = S \cA (t) S$. 

\section{Preliminary estimates}\label{sect:est}
\setcounter{equation}{0}

In this Section we collect some estimates that will be useful in Section \ref{sect:main}. 

First of all, we will need bounds for the growth of moments of the number of particles operator with respect to the fluctuation dynamics $\cU_N (t;s)$. Similar estimates can also be found in \cite{RS,CLS,BKS}, but here we optimize them and simplify their proof.
\begin{prop}\label{lem:UN}
Let $\U_N (t;s)$ be the fluctuation dynamics defined in \eqref{eq:fluct}. For every $j \in \N$, there exist constants $C_j, K_j > 0$ such that
\be\label{eq:lem-UN}
\left\langle \U_N(t;s)  \psi, (\Nn+1)^j \,\U_N(t;s)\,\psi \right\rangle  \leq \, C_j\,e^{K_j|t-s|} \left\langle \psi, (\Nn + 1)^j \left(1+ N^{-1} \Nn \right) \,\psi\right\rangle
\ee
for every $\psi\in\mathcal{F}$, $t\in\R$. Hence 
%
%
\be\label{eq:UN-j}
\|(\Nn+1)^{\frac{j}{2}}\,\U_N (t;s) \,(\Nn^{\frac{j}{2}}+N^{-\frac{1}{2}} \Nn^{\frac{j+1}{2}}+1)^{-1} \| \leq C_j\,e^{K_j\,|t-s|}\;.
\ee
\end{prop}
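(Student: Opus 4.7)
The plan is to establish the bound by Gronwall's inequality applied to the moments $F_j(t) := \langle \xi_t, (\cN+1)^j \xi_t\rangle$, where $\xi_t := \cU_N(t;s)\psi$. Since $\cU_N$ satisfies $i\partial_t \cU_N = \cL_N(t) \cU_N$, a direct differentiation gives $\frac{d}{dt} F_j(t) = i\langle \xi_t, [\cL_N(t), (\cN+1)^j]\xi_t\rangle$. Inspecting (\ref{eq:LN}), the three lines containing an equal number of creation and annihilation operators commute with $\cN$ and hence with $(\cN+1)^j$; only the pair-creation/annihilation operators $T_2 := \tfrac12\int V(x-y)\ph_t(x)\ph_t(y)\,a_x^*a_y^*\,dxdy$ and its adjoint (from the fourth line of (\ref{eq:LN})), together with the cubic terms $T_3 := N^{-1/2}\int V(x-y)\ph_t(y)\,a_x^*a_y^*a_x\,dxdy$ and its adjoint (from the fifth line), produce nonzero contributions.

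For any $T$ that shifts the particle number by $k$ units, one has the algebraic identity $[T,(\cN+1)^j]=T\bigl[(\cN+1)^j-(\cN+1+k)^j\bigr]$, which reduces the commutator to $T$ times a polynomial in $\cN$ of degree $j-1$. I would then estimate the resulting quadratic forms using three ingredients: (i) the CCR-based bounds $\|a(f)\xi\|\le\|f\|\,\|\cN^{1/2}\xi\|$ and $\|a^*(f)\xi\|\le\|f\|\,\|(\cN+1)^{1/2}\xi\|$ from (\ref{eq:est-a}); (ii) the operator bound $V^2\le D(1-\Delta)$, which yields $\|V(\cdot-y)\ph_t\|_2\le D^{1/2}\|\ph_t\|_{H^1}$ uniformly in $y$; and (iii) the $H^1$-conservation of the Hartree flow (\ref{eq:hartree0}), ensuring $\|\ph_t\|_{H^1}$ is bounded in time. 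For $T_2$, a Cauchy--Schwarz split placing $a_x^*a_y^*$ on one side and the polynomial $P(\cN)$ on the other produces the clean bound $|\langle\xi_t,[T_2+T_2^*,(\cN+1)^j]\xi_t\rangle|\le C_j\,F_j(t)$. For $T_3$, the extra creation/annihilation pair forces one to use the $N^{-1/2}$ prefactor as buffer: the right Cauchy--Schwarz split gives $|\langle\xi_t,[T_3+T_3^*,(\cN+1)^j]\xi_t\rangle|\le C_j\bigl(F_j(t)+N^{-1}F_{j+1}(t)\bigr)$.

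This yields the differential inequality $F_j'(t)\le C_j F_j(t)+C_j N^{-1} F_{j+1}(t)$, which is \emph{not} a priori closed. I would close it by a joint induction: either (a) derive the analogous bound for $F_{j+1}$, iterate $m$ times, and observe that $N^{-k}F_{j+k}$ contributions telescope into the single correction $N^{-1}\langle\psi,(\cN+1)^{j+1}\psi\rangle$, because the operator inequality $(\cN+1)^{j+k}\le C_k(\cN+1)^j(1+N^{-1}\cN)^k N^{k-1}$ on the image of $\cU_N$ combined with the already-obtained bound allows one to cut off the tail; or, more directly, (b) work from the outset with the combined functional $G_j(t):=\langle\xi_t,(\cN+1)^j(1+N^{-1}\cN)\xi_t\rangle$, showing via the same commutator estimates that $G_j'(t)\le C_j G_j(t)$. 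Applying Gronwall in either form gives (\ref{eq:lem-UN}) with $G_j(s)=\langle\psi,(\cN+1)^j(1+N^{-1}\cN)\psi\rangle$. The operator-norm statement (\ref{eq:UN-j}) then follows by rewriting (\ref{eq:lem-UN}) as $\|(\cN+1)^{j/2}\cU_N(t;s)\psi\|^2\le C_j e^{K_j|t-s|}\|(\cN+1)^{j/2}(1+N^{-1}\cN)^{1/2}\psi\|^2$ and using $(1+N^{-1}\cN)^{1/2}\le 1+N^{-1/2}\cN^{1/2}$ to compare with the stated resolvent-type operator.

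The main technical obstacle is the cubic term $T_3$: not only does its presence create a non-closed chain of moment estimates, but its Cauchy--Schwarz bound must be performed carefully so that the $N^{-1/2}$ prefactor distributes symmetrically between $F_j$ and $F_{j+1}$ — a cruder split would produce a standalone $N^{-1/2}F_{j+1/2}$ term that could not be absorbed. Apart from this, the only delicate point is checking that all time-dependent constants arising from $\|\ph_t\|_{H^1}$ can be controlled uniformly, which follows from energy conservation for the Hartree equation under the standing assumption $V^2\le D(1-\Delta)$ with $\ph\in H^2(\bR^3)$ (in fact $H^1$ suffices for this lemma).
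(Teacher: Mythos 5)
Your first half (the commutator identity, the reduction to the pair and cubic terms, and the bounds giving $F_j'\le C_jF_j+C_jN^{-1}F_{j+1}$) matches the paper's Step 1, and your passage from (\ref{eq:lem-UN}) to (\ref{eq:UN-j}) is routine and correct. The gap is in how you close the hierarchy, and neither of your two proposed closures works. Option (b) fails because the same commutator estimates applied to the extra piece $N^{-1}(\cN+1)^j\cN$ of $G_j$ produce a term of order $N^{-2}F_{j+2}(t)$, so you get $G_j'\le C G_j+CN^{-2}F_{j+2}$, not $G_j'\le C G_j$; and $N^{-2}(\cN+1)^{j+2}$ is \emph{not} dominated by $(\cN+1)^j(1+N^{-1}\cN)$ (test on the $n$-particle sector with $n\sim N^2$), so the problem is merely pushed up one level — no finite polynomial weight in $\cN$ closes the chain. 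Option (a) rests on the operator inequality $(\cN+1)^{j+k}\le C_k(\cN+1)^j(1+N^{-1}\cN)^kN^{k-1}$, which is false already for $k=1$ (on the $n$-particle sector it reads $n+1\le C(1+n/N)$); if instead you mean it only ``on the image of $\cU_N$'', that is no longer an algebraic statement but precisely the dynamical estimate that needs proof. Moreover, iterating the differential inequality leaves initial-data terms $N^{-k}\langle\psi,(\cN+1)^{j+k}\psi\rangle$ and an uncontrolled remainder $N^{-m}F_{j+m}(t)$, neither of which is bounded by the right-hand side of (\ref{eq:lem-UN}) for general $\psi\in\cF$.

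The missing idea in the paper's proof is dynamical, not algebraic: since $e^{-i\cH_N t}$ commutes with $\cN$ and Weyl operators shift it according to \eqref{eq:shift-WN}--\eqref{eq:shift-Wphi}, one can transfer a single factor of $(\cN+1)$ through the fluctuation dynamics, writing $(\cN+1)\,\cU_N(t;0)$ in terms of $\cU_N(t;0)(\cN+\sqrt{N}\phi(\ph)+N+1)$ up to $\phi(\ph_t)$-corrections on the left. Dividing by $N$, this yields the key bound (\ref{eq:claim2}),
\[
\frac{1}{N}\langle \cU_N(t;0)\psi,(\cN+1)^{i+1}\cU_N(t;0)\psi\rangle
\le C\langle \cU_N(t;0)\psi,(\cN+1)^{i}\cU_N(t;0)\psi\rangle
+Ce^{Kt}\langle\psi,(\cN+1)^{i}(1+N^{-1}\cN)\psi\rangle,
\]
proved by an auxiliary induction using $\phi(\ph_t)(\cN+1)^k\phi(\ph_t)\le C(\cN+1)^{k+1}$. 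It is exactly this estimate — converting $N^{-1}F_{j+1}(t)$ along the flow into $CF_j(t)$ plus a constant depending only on the initial data — that makes Gronwall applicable and explains the factor $(1+N^{-1}\cN)$ acting on $\psi$ in (\ref{eq:lem-UN}). Without a substitute for it, your argument does not go through.
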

\begin{proof}
We proceed by induction on $j\in\N$. Without loss of generality, we choose $s=0$ from now on.

\medskip

{\it Step 1: $j=1$.} We compute the time-derivative 
\be
-i\frac{d}{dt} \langle \U_N (t;0) \psi, (\Nn+1)\,\U_N(t;0)\,\psi\rangle 
= \langle \psi\, ,\,\U_N^*(t;0)\,[\mathcal{L}_N(t)\,,\, \Nn]\,\U_N(t;0)\,\psi \rangle\;.  
\ee
Recalling the expression \eqref{eq:LN} for $\mathcal{L}_N(t)$, we find 
\be\label{eq:comm-LN}
\begin{split}
[\mathcal{L}_N(t)\,,\,(\Nn+1)] 
= \; &2 \int dx\,dy\,V(x-y)\,\f_t(x)\,\f_t(y)\,a_x^*\,a_y^* \\ &+ \frac{1}{\sqrt N} \int dx\,dy\,V(x-y)\,\f_t(y)\,a_x\,a_y^*\,a_x - \text{h.c.} \, .
\end{split}
\ee
Using the estimates \eqref{eq:est-a}, we obtain
\[ \begin{split} \Big| \int dx dy V(x-y) \ph_t (x) \ph_t (y) &\langle \psi, a_x^* a_y^* \psi \rangle \Big| \\ &= \left| 
\int dx \ph_t (x) \langle a_x \psi,  a^* (V(x-.)\ph_t) \psi \rangle \right| \\ &\leq \int dx |\ph_t (x)| \, \| a_x \psi \| \, \| a^* (V(x-.) \ph_t) \psi \| \\ &\leq \int dx \| a_x \psi \|^2 + \int dx |\ph_t (x)|^2 \| a^* (V(x-.) \ph_t) \psi \|^2 \\ & \leq (1+ \sup_x \| V(x-.) \ph_t \|_2)  \langle \psi, (\Nn +1) \psi \rangle \end{split} \]
and similarly
\[ \begin{split}
\Big| \frac{1}{\sqrt{N}} \int dx dy V(x-y) \ph_t (y) &\langle \psi, a_x^* a_y^* a_x \psi \rangle \Big| \\ &\leq 
\frac{1}{\sqrt{N}}\int dx dy |V(x-y)| |\ph_t (y)| \| a_x a_y \psi \| \| a_x \psi \| \\ 
&\leq \int dx dy |V(x-y)|^2 |\ph_t (y)|^2 \, \| a_x \psi \|^2 + \frac{1}{N}  \int dx dy \|a_x a_y \psi \|^2 \\
&\leq  \sup_x \| V(x-.) \ph_t \|_2  \langle \psi, \Nn \psi \rangle + \frac{1}{N} \langle \psi , \Nn^2 \psi \rangle\;.\end{split}  \]
Using the fact that $\sup_x \| V(x-.) \ph_t \|_2 \leq C \| \ph_t \|_{H^1}$ is uniformly bounded in time, we conclude that
\be\label{eq:step1}
\begin{split}
\left| \frac{d}{dt} \langle \psi\,,\,\U_N^*(t;0)\, (\Nn+1)\,\U_N(t;0)\,\psi \rangle \right| \leq \; &C \langle \U_N (t;0) \psi, (\Nn+1) \U_N (t;0) \psi \rangle \\ & + \frac{1}{N} \langle \U_N (t;0) \psi, \Nn^2 \U_N (t;0) \psi \rangle\;.
\end{split} \ee
In order to apply Gronwall's Lemma, we need to control the second term. We observe that, using the identities 
\bea
&& W^*(\sqrt{N}\f_t)\,\Nn\,W(\sqrt{N}\f_t) =  \Nn + \sqrt{N}\,\phi(\f_t) + N\;, \label{eq:shift-WN}\\
&&  W^*(\sqrt{N}\f_t)\,\phi(\f_t)\,W(\sqrt{N}\f_t) = \phi(\f_t) + 2\sqrt{N}\; , \label{eq:shift-Wphi}
\eea
we have
\be\label{eq:N2}
\begin{split}
\frac{1}{N} \big\langle \U_N(t;0) \,&\psi, \Nn^2 \U_N(t;0) \,\psi \big\rangle \\ =\; &\frac{1}{N} \big\langle \Nn\,\U_N(t;0)\,\psi\,,\,W^*(\sqrt{N}\f_t)\,(\Nn-\sqrt{N}\phi(\f_t) + N)\,e^{-i\mathcal{H}_N t} \,W(\sqrt{N}\f)\,\psi \big\rangle \\ = \; &\frac{1}{N} \langle \Nn\,\U_N(t;0)\,\psi\,,\,W^*(\sqrt{N}\f_t)\,e^{-i\mathcal{H}_N t} \Nn W(\sqrt{N} \ph) \psi \rangle \\ &- \frac{1}{\sqrt N} \langle \Nn\,\U_N(t;0)\,\psi\,,\,W^*(\sqrt{N}\f_t)\,\phi(\f_t)\, e^{-i \cH_Nt} W(\sqrt{N} \ph) \psi \rangle \\  &+ \big\langle \U_N(t;0)\,\psi,  \Nn \U_N(t;0)\,\psi\big\rangle\;.
\end{split}
\ee
We rewrite the first term on the r.h.s. as
\[ \begin{split} 
\frac{1}{N} \langle \Nn\,\U_N(t;0)\,&\psi\,,\,W^*(\sqrt{N}\f_t)\,e^{-i\mathcal{H}_N t} \Nn W(\sqrt{N} \ph) \psi \rangle \\ = \; & \frac{1}{N} \langle \Nn\,\U_N(t;0)\,\psi\,,\U_N (t;0) ( \Nn + \sqrt{N} \phi (\ph) + N) \psi \rangle  \\ = \; & \langle \U_N(t;0)\,\psi , \Nn \U_N(t;0)\,\psi \rangle + \frac{1}{N} \langle \Nn\,\U_N(t;0)\,\psi\,,\U_N (t;0) \Nn \psi \rangle \\
&+\frac{1}{\sqrt{N}} \langle \Nn\,\U_N(t;0)\,\psi\,,\U_N (t;0) \phi (\ph) \psi \rangle\;.
\end{split} \]
This implies that
\be\label{eq:bd1} \begin{split} \Big| \frac{1}{N} \langle \Nn\,\U_N(t;0)\,\psi\,,\,W^*(\sqrt{N} &\f_t)\,e^{-i\mathcal{H}_N t} \Nn W(\sqrt{N} \ph) \psi \rangle \Big| \\
 \leq \; & \langle \U_N(t;0)\,\psi , \Nn \U_N(t;0)\,\psi \rangle + \frac{1}{N} \| \Nn  \U_N(t;0)\,\psi \| \| \Nn \psi \| \\ &+ \frac{1}{\sqrt{N}} \| \Nn  \U_N(t;0)\,\psi \|  \| \phi (\ph) \psi \| \\
 \leq \; & \langle \U_N(t;0)\,\psi , \Nn \U_N(t;0)\,\psi \rangle + \frac{1}{4N} \|  \Nn  \U_N(t;0)\,\psi \|^2 \\ &+ C \langle \psi , \left( \Nn + N^{-1} \Nn^2 +1 \right) \psi \rangle\;.
 \end{split} \ee
As for the second term on the r.h.s. of (\ref{eq:N2}), we find
\[ \begin{split}  \frac{1}{\sqrt N} \langle \Nn\,&\U_N(t;0)\,\psi\,,\,W^*(\sqrt{N}\f_t)\,\phi(\f_t)\, e^{-i \cH_Nt} W(\sqrt{N} \ph) \psi \rangle \\ = \; & \frac{1}{\sqrt N} \Big\langle \Nn\,\U_N(t;0)\,\psi\,, \,\left( \phi(\f_t) + 2\sqrt{N} \right) \U_N (t;0) \psi \Big\rangle \\
= &\; 2\langle \U_N(t;0)\,\psi , \Nn \U_N(t;0)\,\psi \rangle + \frac{1}{\sqrt{N}}   \langle \Nn\,\U_N(t;0)\,\psi\,, \phi(\f_t)  \U_N (t;0) \psi \rangle \end{split} \]
which leads to
\be\label{eq:bd2} \begin{split} 
\Big|  \frac{1}{\sqrt N} \langle \Nn\, \U_N(t;0)\,\psi\,,\,W^*(&\sqrt{N}\f_t)\,\phi(\f_t)\, e^{-i \cH_Nt} W(\sqrt{N} \ph) \psi \rangle \Big| \\ \leq \; &  3\langle \U_N(t;0)\,\psi , (\Nn+1)  \U_N(t;0)\,\psi \rangle + \frac{1}{4N} \| \Nn \U_N(t;0)\,\psi \|^2\;. \end{split} \ee
Combining (\ref{eq:bd1}) and (\ref{eq:bd2}), we conclude from (\ref{eq:N2}) that
\[ \begin{split} \frac{1}{N} \big\langle \U_N(t;0) \,&\psi, \Nn^2 \U_N(t;0) \,\psi \big\rangle \\ \leq \; &\frac{1}{2N} \big\langle \U_N(t;0) \,\psi, \Nn^2 \U_N(t;0) \,\psi \big\rangle +  4\langle \U_N(t;0)\,\psi , (\Nn+1)  \U_N(t;0)\,\psi \rangle \\ &+ C \langle \psi, \left( \Nn + N^{-1} \Nn^2 \right) \psi\rangle\;. \end{split} \]
Subtracting the first term back on the l.h.s. gives 
\[ \begin{split} \frac{1}{N} \big\langle \U_N(t;0) \,&\psi, \Nn^2 \U_N(t;0) \,\psi \big\rangle \\ \leq \; & 8 \langle \U_N(t;0)\,\psi , (\Nn+1)  \U_N(t;0)\,\psi \rangle + C \langle \psi, \left( \Nn + N^{-1} \Nn^2 \right) \psi\rangle\;. \end{split} \]
Inserting the last estimate in (\ref{eq:step1}), we find
\[\begin{split}
\left| \frac{d}{dt} \langle \psi\,,\,\U_N^*(t;0)\, (\Nn+1)\,\U_N(t;0)\,\psi \rangle \right| \leq \; &C \langle \U_N (t;0) \psi, (\Nn+1) \U_N (t;0) \psi \rangle \\ & + C \langle \psi, (\Nn + N^{-1} \Nn^2 ) \psi \rangle\;. 
\end{split} \]
Gronwall's Lemma implies that
\be
\langle \psi\,,\,\U_N^*(t;0)\,(\Nn+1)\;\U_N(t;0)\,\psi \rangle \leq C e^{K t} \langle \psi\,,(\Nn+ N^{-1} \Nn^2+1)\,\psi \rangle\, \;,
\ee
for appropriate constants $C,K > 0$.

\medskip

{\it Step 2:} we assume 
\be\label{eq:indu}
\left\langle \U_N(t;0)  \psi, (\Nn+1)^{i} \,\U_N(t;0)\,\psi \right\rangle  \leq \, C_{i} \,e^{K_{i} t} \left\langle \psi, \Nn^{i} \left(1+ N^{-1} \Nn \right) \,\psi\right\rangle
\ee
for all $i \leq (j-1)$ and we prove it for $i=j$ (this is exactly (\ref{eq:lem-UN})).

\medskip

{F}rom 
\be
[\mathcal{L}_N\,,\,\Nn^j]=\sum_{i=1}^j \Nn^{i-1}\,[\mathcal{L}_N\,,\,\Nn]\,\Nn^{j-i}
\ee
we find 
\be\label{eq:est-step2}
\begin{split} 
-i \frac{d}{dt} \big\langle \U_N(t;0)\,\psi, &(\Nn+1)^j  \U_N(t;0)\,\psi \big\rangle  \\ &= 
\sum_{i=1}^j \langle \psi\,,\,\U_N^*(t;0)\,(\Nn + 1)^j \,[\mathcal{L}_N\,,\,\Nn]\,(\Nn+1)^{j-i-1}\,\U_N(t;0)\,\psi \rangle\;.
\end{split}
\ee
{F}rom \eqref{eq:comm-LN}, arguing as in Step 1 and using the intertwining relations 
\bea
&& \Nn a(f) = a(f)\,(\Nn-1), \quad \text{and }  \Nn a^*(f) = a^*(f)\,(\Nn+1)\,,
\eea 
we find
\be\label{eq:est-step2-1}
\begin{split}
\Big| \frac{d}{dt} \big\langle &\U_N(t;0)\,\psi, (\Nn+1)^j  \U_N(t;0)\,\psi \big\rangle \Big| \\ &\leq C \langle \U_N(t;0)\,\psi,  (\Nn + 1)^j \U_N(t;0)\,\psi \rangle + \frac{C}{N} \langle \U_N (t;0) \psi , (\Nn+1)^{j+1} \U_N (t;0) \psi \rangle\;. \end{split} \end{equation}
In order to apply Gronwall's Lemma, we have to estimate the second term on the r.h.s. of \eqref{eq:est-step2-1}. We claim that for all $i \leq j$, there exist constants $C,K > 0$ such that
\begin{equation}\label{eq:claim2}
\begin{split}  \frac{1}{N} \langle \U_N (t;0) \psi, (\cN+1)^{i+1} \U_N (t;0) \psi \rangle \leq \; &C e^{K t} \langle \psi, (\cN+1)^i (1+ N^{-1} \cN) \psi \rangle \\ &+ C \langle \U_N (t;0) \psi, (\cN+1)^{i} \U_N (t;0) \psi \rangle\;. \end{split} 
\end{equation}
Inserting (\ref{eq:claim2}) into the r.h.s. of (\ref{eq:est-step2-1}) with $i=j$ and applying Gronwall inequality, we obtain (\ref{eq:lem-UN}). 

In order to prove (\ref{eq:claim2}), we proceed again by induction. For $i=1$, (\ref{eq:claim2}) was proven in Step 1. Similarly, one can show (\ref{eq:claim2}) for $i=0$ (the proof is simpler in this case, one just need to observe that $W^*(\sqrt{N} \ph_t) (\cN+1) W(\sqrt{N} \ph_t) = (\cN+\sqrt{N} \phi (\ph_t) + N + 1) \leq 2 (\cN + N + 1)$ which then commutes with the evolution $\exp (-i\cH_N t)$). We assume hence that (\ref{eq:claim2}) holds for $i=k-1$ and we show it for $i=k \in \bN$, for an arbitrary $2 \leq k \leq j$. 

Using \eqref{eq:shift-WN} and \eqref{eq:shift-Wphi}, similarly
as in Step 1, we obtain
\be\label{eq:est-step2-2}
\begin{split} 
\frac{1}{N}\langle \psi\,,\, &\U_N^*(t;0)\,(\Nn+1)^{k+1}\,\U_N(t;0)\,\psi \rangle \\ =\; &\frac{1}{N}\langle (\Nn+1)^k \U_N (t;0) \psi\,,\, \,(\Nn+1)\,W^*(\sqrt{N} \ph_t) e^{-i\cH_N t} W (\sqrt{N} \ph) \psi \rangle\\
=\; &\frac{1}{N}\langle (\Nn+1)^k \U_N (t;0) \psi\,,\, \,W^*(\sqrt{N} \ph_t)  (\Nn - \sqrt{N} \phi (\ph_t) + N+1) \,e^{-i\cH_N t} W (\sqrt{N} \ph) \psi \rangle\\
=\; &\frac{1}{N}\langle (\Nn+1)^k \U_N (t;0) \psi\,,\, \,W^*(\sqrt{N} \ph_t)   \,e^{-i\cH_N t} (\Nn +1) W (\sqrt{N} \ph) \psi \rangle\\
\; & - \frac{1}{\sqrt{N}}\langle (\Nn+1)^k \U_N (t;0) \psi\,,\,  \phi (\ph_t)  \, \U_N (t;0)  \psi \rangle\\
\; &- \langle  \U_N (t;0) \psi, \, (\Nn+1)^k \U_N (t;0) \psi \rangle
\end{split} \ee
where, in the last step, we commuted the operator $\phi(\varphi_t)$ through the Weyl operator $W^\ast(\sqrt{N}\varphi_t)$. In the first term on the r.h.s. of the last equation, we move the single number of particles operator $(\Nn +1)$ to the right of the Weyl operator $W(\sqrt{N} \ph)$. We find
\begin{equation}\label{eq:step2-3} \begin{split} 
\frac{1}{N}\langle \psi\,,\, \U_N^*(t;0)\,(\Nn+1)^{k+1}\,&\U_N(t;0)\,\psi \rangle \\ =\; &\frac{1}{N}\langle (\Nn+1)^k \U_N (t;0) \psi\,,\, \,\U_N (t;0) (\Nn+1) \psi \rangle \\& + \frac{1}{\sqrt{N}} \langle (\Nn+1)^k \U_N (t;0) \psi\,,\, \,\U_N (t;0) \phi (\ph)  \psi \rangle \\
& - \frac{1}{\sqrt{N}}\langle (\Nn+1)^k \U_N (t;0) \psi\,,\,  \phi (\ph_t)  \, \U_N (t;0)  \psi \rangle\;.
\end{split} \end{equation}
The third term on the r.h.s. of the last equation can be estimated by
\[ \begin{split} 
&\left| \frac{1}{\sqrt{N}}  \langle (\Nn+1)^k \U_N (t;0) \psi\,,\,  \phi (\ph_t)  \, \U_N (t;0)  \psi \rangle \right| 
\\ &\hspace{1cm} \leq \alpha \langle \U_N (t;0) \psi, (\Nn+1)^k \U_N (t;0) \psi\rangle + \frac{1}{\alpha N} \langle \U_N (t;0) \psi, \phi (\ph_t) (\cN+1)^k \phi (\ph_t) \U_N (t;0) \psi \rangle \\
&\hspace{1cm} \leq \alpha \langle \U_N (t;0) \psi, (\Nn+1)^k \U_N (t;0) \psi\rangle + \frac{C}{\alpha N} \langle \U_N (t;0) \psi, (\cN+1)^{k+1} \U_N (t;0) \psi \rangle \end{split} \]
where $\alpha > 0$ is arbitrary and where we used the fact that, for every $k \in \bN$, there exists a constant $C$ such that 
\begin{equation}\label{eq:phiNphi} \phi (\ph_t) (\cN+1)^k \phi (\ph_t) \leq C (\cN+1)^{k+1}\;. \end{equation}
Choosing $\alpha > 0$ sufficiently large, we find
\begin{equation}\label{eq:est-step2-3-1}\begin{split} &\left|\frac{1}{\sqrt{N}} \langle (\Nn+1)^k \U_N (t;0) \psi\,,\,  \phi (\ph_t)  \, \U_N (t;0)  \psi \rangle \right| 
\\ &\hspace{1cm} \leq C \langle \U_N (t;0) \psi, (\Nn+1)^k \U_N (t;0) \psi\rangle + \frac{1}{4 N} \langle \U_N (t;0) \psi, (\cN+1)^{k+1} \U_N (t;0) \psi \rangle\;. \end{split} \end{equation}
The second term on the r.h.s. of (\ref{eq:step2-3}) can be bounded by
\[ \begin{split} &\left| \frac{1}{\sqrt{N}} \langle (\Nn+1)^k \U_N (t;0) \psi\,,\, \,\U_N (t;0) \phi (\ph)  \psi \rangle \right| \\ & \leq \frac{1}{4N} \langle \U_N (t;0) \psi, (\cN +1)^{k+1} \U_N (t;0) \psi \rangle + \langle \U_N (t;0) \phi (\ph)  \psi , (\cN+1)^{k-1} \U_N (t;0) \phi (\ph) \psi \rangle\;.
\end{split} \]
{F}rom the induction assumption (\ref{eq:indu}) we obtain, using again (\ref{eq:phiNphi}),
\begin{equation}\label{eq:est-step2-3-2} \begin{split} &\left| \frac{1}{\sqrt{N}} \langle (\Nn+1)^k \U_N (t;0) \psi\,,\, \,\U_N (t;0) \phi (\ph)  \psi \rangle \right| \\ &\hspace{1cm} \leq \frac{1}{4N} \langle \U_N (t;0) \psi, (\cN +1)^{k+1} \U_N (t;0) \psi \rangle \\ &\hspace{1.4cm} + C e^{K t}  \langle \psi , \phi (\ph) (\cN+1)^{k-1} (1+ N^{-1} \cN) \phi (\ph) \psi \rangle  \\ &\hspace{1cm} \leq \frac{1}{4N} \langle \U_N (t;0) \psi, (\cN +1)^{k+1} \U_N (t;0) \psi \rangle + C e^{K t}  \langle \psi , (\cN+1)^{k} (1+ N^{-1} \cN) \psi \rangle\;. \end{split} \end{equation}
Finally, we control the first term on the r.h.s. of (\ref{eq:step2-3}). To this end, we need to commute one more factor $(\cN+1)$ across the fluctuation evolution $\cU_N (t;0)$. We write,
similarly to~\eqref{eq:step2-3},
\begin{align}\begin{split}\label{eq:k+1term}
\frac{1}{N}\langle (\Nn+1)^k \U_N (t;0) \psi\,,\, &\,\U_N (t;0) (\Nn+1) \psi \rangle \\
 =\; &\frac{1}{N}\langle (\Nn+1)^{k-1} \U_N (t;0)(\Nn+1) \psi\,,\, \,\U_N (t;0) (\Nn+1) \psi \rangle \\
 & + \frac{1}{\sqrt{N}} \langle (\Nn+1)^{k-1} \U_N (t;0)\phi(\ph) \psi\,,\, \,\U_N (t;0)(\Nn+1) \psi \rangle \\
& - \frac{1}{\sqrt{N}}\langle (\Nn+1)^{k-1} \phi (\ph_t) \U_N (t;0) \psi\,,\,   \, \U_N (t;0) (\Nn+1) \psi \rangle\;.\\ 
\end{split}
\end{align}
Using the induction hypothesis \eqref{eq:claim2} with $i = k-2$, the first term on the r.h.s. of the last equation can be estimated by
\begin{align}\label{eq:term1}
 \begin{split}
  \frac{1}{N} \langle \U_N (t;0) (\Nn +1) \psi &, (\Nn + 1)^{k-1} \,  \U_N (t;0) (\Nn+1) \psi \rangle\\
  \leq&
 C e^{K t} \langle  \psi , (\Nn + 1)^{k} \,   (1+N^{-1} \Nn) \psi \rangle\\
  &+C\langle \U_N (t;0) (\Nn +1) \psi , (\Nn + 1)^{k-2} \,  \U_N (t;0) (\Nn+1) \psi \rangle\\
  \leq&C e^{K t} \langle  \psi , (\Nn + 1)^{k} \,   (1+ N^{-1} \Nn) \psi \rangle
 \end{split}
\end{align}
where in the last inequality we also used the assumption (\ref{eq:indu}), with $i=k-2$. The second term on the r.h.s. of (\ref{eq:k+1term}) can be bounded using (\ref{eq:indu}) by 
\begin{align*}
 \begin{split}
 \Big|  \frac{1}{\sqrt{N}} \langle (\Nn+1)^{k-1} \U_N &(t;0)  \phi (\ph) \psi ,  \U_N (t;0) (\Nn+1) \psi \rangle \Big| \\
  \leq&\langle \mathcal{U}_N(t;0)\phi(\varphi)\psi,(\Nn+1)^{k-1}\mathcal{U}_N(t;0)\phi(\varphi)\psi\rangle\\
  &+ \frac{1}{N} \langle \U_N (t;0) (\Nn +1) \psi , (\Nn + 1)^{k-1} \,  \U_N (t;0) (\Nn+1) \psi \rangle\\
  \leq&Ce^{K t}\langle\psi,(\Nn+1)^{k}(1+\Nn/N)\psi\rangle\\
  &+ \frac{1}{N} \langle \U_N (t;0) (\Nn +1) \psi , (\Nn + 1)^{k-1} \,  \U_N (t;0) (\Nn+1) \psi \rangle.
 \end{split}
\end{align*}
{F}rom (\ref{eq:term1}), we find
\begin{equation}\label{eq:term2} \begin{split} 
 \left| \frac{1}{\sqrt{N}} \langle (\Nn+1)^{k-1} \U_N (t;0)  \phi (\ph) \psi ,  \U_N (t;0) (\Nn+1) \psi \rangle \right|  \leq  Ce^{K t}\langle\psi,(\Nn+1)^{k}(1+\Nn/N)\psi\rangle\;.
   \end{split} \end{equation}
Similarly, using (\ref{eq:term1}) and the bound (\ref{eq:phiNphi}) the third term on the r.h.s. of \eqref{eq:k+1term} is bounded by
\begin{align}\label{eq:term3}
 \begin{split}
\Big|  \frac{1}{\sqrt{N}} \langle &(\Nn+1)^{k-1} \phi (\ph_t) \U_N (t;0) \psi ,  \U_N (t;0) (\Nn+1) \psi \rangle \Big| \\
  \leq&
  \frac{1}{N} \langle \U_N (t;0) (\Nn +1) \psi , (\Nn + 1)^{k-1} \,  \U_N (t;0) (\Nn+1) \psi \rangle\\
  &+\langle \U_N (t;0) \psi ,  \phi(\varphi_t) (\Nn+1)^{k-1} \phi (\ph_t)\U_N (t;0) \psi \rangle\\
  \leq& Ce^{K t}\langle\psi,(\Nn+1)^{k}(1+\Nn/N)\psi\rangle +C\langle \U_N (t;0) \psi , (\Nn+1)^{k}\U_N (t;0) \psi \rangle\;.
 \end{split}
\end{align}
Combining (\ref{eq:term1}), \eqref{eq:term2} and \eqref{eq:term3}, we obtain from (\ref{eq:k+1term}) that
\[ \begin{split} \left| \frac{1}{N} \langle (\Nn+1)^k \U_N (t;0) \psi\,,\, \U_N (t;0) (\Nn+1) \psi \rangle \right|\leq\; & C
e^{K t}\langle\psi,(\Nn+1)^{k}(1+\Nn/N)\psi\rangle \\ & +C\langle \U_N (t;0) \psi , (\Nn+1)^{k}\U_N (t;0) \psi \rangle\;. \end{split} \]
This, together with (\ref{eq:est-step2-3-1}) and (\ref{eq:est-step2-3-2}), gives the following bound for (\ref{eq:step2-3}):
\[ \begin{split}
\frac{1}{N}\langle \psi\,,\, &\U_N^*(t;0)\,(\Nn+1)^{k+1}\,\U_N(t;0)\,\psi \rangle \\ \leq \; &\frac{1}{2N} \langle 
 \psi\,,\, \U_N^*(t;0)\,(\Nn+1)^{k+1}\, \U_N(t;0)\,\psi\rangle \\ &+ C \langle \cU_N (t;0) \psi, (\cN+1)^k \cU_N (t;0) \psi \rangle + C e^{K t} \langle \psi, (\cN+1)^k (1+N^{-1} \cN) \psi \rangle\;. \end{split} \]
Subtracting the first term back in the l.h.s, we find 
 \[ \begin{split} 
\frac{1}{N}\langle \psi\,,\, &\U_N^*(t;0)\,(\Nn+1)^{k+1}\,\U_N(t;0)\,\psi \rangle \\ &\leq C \langle \cU_N (t;0) \psi, (\cN+1)^k \, \cU_N (t;0) \psi \rangle + C e^{K t} \langle \psi, (\cN+1)^k (1+N^{-1} \cN) \psi \rangle \end{split} \]
which proves (\ref{eq:claim2}), for $i=k$. 
\end{proof}

We will also need similar bounds for the growth of moments of the number of particles operator, of the kinetic energy operator and of the square of the kinetic energy operator with respect to the limiting fluctuation dynamics $\cU_\infty (t;s)$. The proof of the following lemma can be found in \cite{CLS}[Prop. 4.1]. 
\begin{lem}\label{lem:U2}
Let $\cU_\infty (t;s)$ be the limiting fluctuation dynamics, defined in Proposition \ref{prop:GV}. For every $j \in \N$, there exist constants $C, K > 0$ (depending on the constant $D$ appearing in (\ref{eq:bdV2}), on $\| \ph \|_{H^1}$ and on $j$) such that
\be
\langle \psi, \cU^*_\infty (t;s) \, (\Nn+1)^{j} \,\U_\infty(t;s)\,\psi\rangle \, \leq \, C \,e^{K |t-s|} \langle\psi, (\Nn+1)^{j}\,\psi\rangle
\ee
for all $\psi\in\mathcal{F}$, $t,s\in\R$. 
%
Let moreover \[ \cK = d\Gamma (-\Delta) = \int dx \nabla_x a_x^* \nabla_x a_x \]
denote the kinetic energy operator. Then there exist constants $C,K > 0$ (depending on $D$ and $\| \ph \|_{H^1}$) and $C',K' > 0$ (depending on $D$ and $\| \ph \|_{H^2}$) such that
\[ \langle \psi, \cU_\infty^* (t;s) \, \cK \, \cU_\infty (t;s) \psi \rangle \leq C e^{K |t-s|} \langle \psi, (\cK+\cN +1) \psi \rangle \] 
and 
\[ \langle \psi, \cU_\infty^* (t;s) \, \cK^2 \, \cU_\infty (t;s) \psi \rangle \leq C' e^{K' |t-s|} \langle \psi, (\cK^2 + \cN^2+1) \psi \rangle \] 
for all $\psi\in\mathcal{F}$, $t,s\in\R$. 
\end{lem}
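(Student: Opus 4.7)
The plan is to mirror the Gronwall/commutator strategy used for Proposition \ref{lem:UN} above, exploiting the crucial simplification that $\cL_\infty(t)$ is purely quadratic in creation and annihilation operators. In particular, since the cubic and quartic terms of $\cL_N(t)$ are absent in $\cL_\infty(t)$, no $N^{-1}\cN^2$ correction arises and the right-hand sides involve only the operators $\cN$, $\cK$ and their powers.

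For the first bound, on moments of $(\cN+1)^j$, I would differentiate in $t$ and obtain
\[
-i\frac{d}{dt}\langle \cU_\infty(t;s)\psi,(\cN+1)^j\cU_\infty(t;s)\psi\rangle = \langle \cU_\infty(t;s)\psi,[\cL_\infty(t),(\cN+1)^j]\cU_\infty(t;s)\psi\rangle.
\]
The only pieces of $\cL_\infty(t)$ that fail to commute with $\cN$ are the pair-creation and pair-annihilation terms
\[
\tfrac{1}{2}\int dx\,dy\,V(x-y)\bigl(\ph_t(x)\ph_t(y)\,a_x^*a_y^*+\overline{\ph}_t(x)\overline{\ph}_t(y)\,a_xa_y\bigr),
\]
so $[\cL_\infty(t),(\cN+1)^j]=\sum_{i=1}^{j}(\cN+1)^{i-1}[\cL_\infty(t),\cN](\cN+1)^{j-i}$ expands into commutators that I would control exactly as in Step~1 of Proposition~\ref{lem:UN}: after splitting $V(x-y)\ph_t(y)$ into $a^*$ of a function and using $\sup_x\|V(x-\cdot)\ph_t\|_2\le C\|\ph_t\|_{H^1}$ (which is uniform in $t$ by mass/energy conservation for Hartree), the bounds \eqref{eq:est-a} together with the intertwining $\cN a^*(f)=a^*(f)(\cN+1)$ yield
\[
\bigl|[\cL_\infty(t),(\cN+1)^j]\bigr|\le C\,(\cN+1)^j
\]
in expectation. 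Gronwall then gives the first inequality.

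For the second bound I would again compute $\frac{d}{dt}\langle\cU_\infty(t;s)\psi,\cK\,\cU_\infty(t;s)\psi\rangle$ and note that $\cK=d\Gamma(-\Delta)$ commutes with the kinetic part of $\cL_\infty(t)$, so only the remaining four terms contribute to $[\cL_\infty(t),\cK]$. The number-conserving terms produce commutators like $\int(V\ast|\ph_t|^2)(x)\,[a_x^*a_x,\cK]$ and $\int V(x-y)\ph_t(x)\overline{\ph}_t(y)\,[a_x^*a_y,\cK]$, both of which, when integrated by parts, yield sesquilinear forms with symbol bounded in $L^\infty$ thanks to $\ph_t\in H^1$ and $V\in L^\infty+L^2$; these are dominated by $C(\cK+\cN+1)$. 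The pair terms demand a bit more care: testing $\int V(x-y)\ph_t(x)\ph_t(y)\,\nabla_x a_x^*\nabla_x a_x^*\!$-type expressions against $\psi$ and using Cauchy-Schwarz plus the hypothesis $V^2\le D(1-\Delta)$ gives, for every $\eps>0$, a bound by $\eps\,\langle\psi,\cK\psi\rangle+C_\eps\langle\psi,(\cN+1)\psi\rangle$, after which Gronwall closes. The main technical obstacle is not any single estimate but the bookkeeping that keeps the $\cK$-norm of $\psi$ (not of $\cU_\infty\psi$) on the right-hand side; absorbing small multiples of $\langle\cU_\infty\psi,\cK\,\cU_\infty\psi\rangle$ is exactly what allows one to close the differential inequality.

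For the third bound, on $\cK^2$, I would differentiate $\langle\cU_\infty(t;s)\psi,\cK^2\cU_\infty(t;s)\psi\rangle$ and write the derivative as $2\,\mathrm{Re}\,\langle\cU_\infty(t;s)\psi,\cK[\cL_\infty(t),\cK]\cU_\infty(t;s)\psi\rangle$. The key new ingredient is that commuting an extra $\cK$ through the pair terms $\int V(x-y)\ph_t(x)\ph_t(y)\,a_x^*a_y^*$ effectively distributes Laplacians on the Hartree orbital $\ph_t$; this is why the hypothesis $\ph\in H^2$ enters (together with Lemma's own time-propagation of $\|\ph_t\|_{H^2}$). The main obstacle of the whole proof lives here: one must bound terms of the form $\langle\cK\psi,\int V(x-y)(\Delta\ph_t)(x)\ph_t(y)\,a_x^*a_y^*\psi\rangle$ in absolute value by $\eps\langle\psi,\cK^2\psi\rangle+C_\eps\langle\psi,(\cN^2+1)\psi\rangle$, again using $V^2\le D(1-\Delta)$, Cauchy-Schwarz on operator-valued distributions, and the first-step bounds just established for $\cK$ and $(\cN+1)^2$. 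Once these pair terms are tamed the number-conserving terms are harmless, and a final application of Gronwall yields the stated exponential estimate.
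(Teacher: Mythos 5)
First, note that the paper does not prove this lemma at all: it is quoted from \cite{CLS}[Prop.~4.1], so there is no internal proof to compare with. Your sketch of the first bound (growth of $(\cN+1)^j$) is fine and standard: only the pair creation/annihilation part of $\cL_\infty(t)$ fails to commute with $\cN$, its commutator with $(\cN+1)^j$ is bounded in the form sense by $C\,(\cN+1)^j$ using $\sup_x\|V(x-\cdot)\ph_t\|_2\leq C\|\ph_t\|_{H^1}$ exactly as in Step 1 of Proposition \ref{lem:UN}, and Gronwall closes (with the simplification, as you say, that no $N^{-1}\cN^2$ term appears).

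The $\cK$ and $\cK^2$ parts, however, contain a genuine gap. Your key claim that commuting $\cK$ through the pair terms ``effectively distributes Laplacians on the Hartree orbital $\ph_t$'' is not correct: since $\cK=d\Gamma(-\Delta)$ and the pair term has kernel $V(x-y)\ph_t(x)\ph_t(y)$, the commutator $[\cK,\cL_\infty(t)]$ produces $(-\Delta_x-\Delta_y)\bigl(V(x-y)\ph_t(x)\ph_t(y)\bigr)$, i.e.\ derivatives also fall on $V(x-y)$ (the same happens for the exchange term $d\Gamma$ with kernel $V(x-y)\ph_t(x)\overline{\ph}_t(y)$; only for the direct term $V*|\ph_t|^2$ can the derivative be moved onto $|\ph_t|^2$ through the convolution). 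The standing hypothesis is only $V^2\leq D(1-\Delta)$ — Coulomb being the motivating example — which gives no control whatsoever on $\nabla V$ or $\Delta V$, so the resulting quadratic forms cannot be dominated by $\eps\,\cK+C_\eps(\cN+1)$ as your Gronwall scheme requires; avoiding derivatives of $V$ by writing $\langle\psi,[\cK,B^*]\psi\rangle=\langle\cK\psi,B^*\psi\rangle-\langle B\psi,\cK\psi\rangle$ instead costs a full power $\|\cK\psi_t\|$, which is information at the $\cK^2$ level and does not close the differential inequality either. The argument that actually works (and is the one behind \cite{CLS}[Prop.~4.1], going back to \cite{GV}) avoids commuting $\cK$ with the generator altogether: one propagates $\langle\psi_t,(\cL_\infty(t)+C(\cN+1))\psi_t\rangle$, whose time derivative involves only $\dot{\cL}_\infty(t)$, i.e.\ $\partial_t\ph_t$ through the Hartree equation, and is bounded by $\eps\,\cK+C(\cN+1)$; one then converts between $\cL_\infty(t)$ and $\cK+\cN+1$ via the operator inequalities that follow from $V^2\leq D(1-\Delta)$ and $\|\ph_t\|_{H^1}$. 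The $\cK^2$ bound is obtained analogously at the level of $(\cL_\infty(t)+C(\cN+1))^2$, and it is in estimating those squared terms that the $H^2$ regularity of $\ph_t$ (hence of $\ph$) enters — not through a Laplacian conveniently landing on $\ph_t$ in a commutator with $\cK$.
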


Next, we will need to compare the fluctuation dynamics $\cU_N (t;s)$ with its formal limit $\cU_\infty (t;s)$. To this end, we will make use of the following proposition.

\begin{prop}\label{prop:U-Uin}
Let $\cU_N (t;s)$ be the fluctuation dynamics defined in \eqref{eq:fluct} and let $\U_\infty (t;s)$ be defined as in Proposition \ref{prop:GV}. Then, for every $j \in \bN$ there exists constants $C_j, K_j > 0$ (depending on the constant $D$ appearing in (\ref{eq:bdV2}), on $\| \ph \|_{H^1}$ and on $j$) such that
\[ \begin{split} \big\| (\cN+1)^{j/2} &\left(\cU_N (t;s) - \cU_\infty (t;s) \right) \psi \big\| \\ &\leq \frac{C_je^{K_j|t-s|}}{\sqrt{N}} \left( \| (\cN+1)^{(j+3)/2} \psi \| + \| \cK \psi \| + \frac{1}{N} \| (\cN+1)^{j+3} (1+ N^{-1} \cN) \psi \| \right)\;. \end{split} \]
\end{prop}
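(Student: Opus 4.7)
\textbf{Proof plan for Proposition \ref{prop:U-Uin}.} The starting point is Duhamel's formula: since $\cU_N(t;s)$ and $\cU_\infty(t;s)$ solve Schrödinger-type equations with generators $\cL_N(t)$ and $\cL_\infty(t)$ respectively, and since $\cU_N(t;t)=\cU_\infty(t;t)=1$, I would write
\[ (\cU_N(t;s) - \cU_\infty(t;s))\psi = -i\int_s^t d\tau \, \cU_N(t;\tau)\bigl[\cL_N(\tau)-\cL_\infty(\tau)\bigr]\cU_\infty(\tau;s)\psi\, . \]
Comparing (\ref{eq:LN}) and (\ref{eq:Linfty}), the difference is the sum of the cubic term $\cL_3(\tau) = N^{-1/2}\int dxdy\,V(x-y) a_x^*(\ph_\tau(y) a_y^* + \overline{\ph_\tau(y)}\, a_y) a_x$ and the quartic mean-field term $\cL_4 = (2N)^{-1}\int dxdy\, V(x-y) a_x^* a_y^* a_y a_x$, which are of size $N^{-1/2}$ and $N^{-1}$ respectively.

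The next step is to apply $(\cN+1)^{j/2}$, take the norm, and use Proposition \ref{lem:UN} to transport the number-operator weight across $\cU_N(t;\tau)$. This bounds the left-hand side by
\[ C_j e^{K_j|t-s|}\int_s^t d\tau\left(\bigl\|(\cN+1)^{j/2}[\cL_3+\cL_4]\cU_\infty(\tau;s)\psi\bigr\|+N^{-1/2}\bigl\|(\cN+1)^{(j+1)/2}[\cL_3+\cL_4]\cU_\infty(\tau;s)\psi\bigr\|\right), \]
so it suffices to control $(\cN+1)^{k}[\cL_3+\cL_4]\zeta$ for $k=j/2,(j+1)/2$ with $\zeta = \cU_\infty(\tau;s)\psi$. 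For $\cL_3$, I would use the intertwining relations $\cN a^\sharp_x = a^\sharp_x(\cN\pm 1)$ to shift $(\cN+1)^k$ through the three creation/annihilation operators, and then apply Cauchy--Schwarz together with the uniform bound $\sup_x\|V(x-\cdot)\ph_\tau\|_2^2 \leq D\|\ph_\tau\|_{H^1}^2$ (which follows from $V^2\leq D(1-\Delta)$ and propagation of the $H^1$-norm under Hartree), obtaining
\[ \bigl\|(\cN+1)^k\cL_3\zeta\bigr\| \leq C N^{-1/2} e^{K|\tau|}\bigl\|(\cN+1)^{k+3/2}\zeta\bigr\|. \]
For $\cL_4$, the crucial simplification is that $[\cN,\cL_4]=0$, so the weight commutes through. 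The operator inequality $V^2\le D(1-\Delta)$ then yields, after Cauchy--Schwarz in the two integration variables,
\[ \bigl\|(\cN+1)^k\cL_4\zeta\bigr\|\leq C N^{-1}\bigl(\bigl\|\cK^{1/2}(\cN+1)^{k+1}\zeta\bigr\|+\bigl\|(\cN+1)^{k+2}\zeta\bigr\|\bigr). \]

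The final step is to transport weights across $\cU_\infty(\tau;s)$ and back to $\psi$ using Lemma \ref{lem:U2}: pure number-operator weights give $(\cN+1)^{(j+3)/2}\psi$, while the $\cK^{1/2}(\cN+1)^{k+1}$ contribution is bounded using the second estimate of Lemma \ref{lem:U2} (after squaring and appealing to $\|\cK \cU_\infty(\tau;s)\psi\|^2\leq C e^{K|\tau-s|}(\|\cK\psi\|^2+\|\cN\psi\|^2+1)$), yielding the $\|\cK\psi\|$ term. The mixed term $N^{-3/2}\|(\cN+1)^{(j+1)/2}\cL_4\zeta\|$ from the sub-leading correction in Proposition \ref{lem:UN} combines the $N^{-1}$ factor from $\cL_4$ with the $N^{-1/2}$ correction, producing the last contribution $N^{-1}\|(\cN+1)^{j+3}(1+N^{-1}\cN)\psi\|$ after squaring the moment bound of Lemma \ref{lem:U2} to reach higher $\cN$-weights. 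Integrating over $\tau\in[s,t]$ yields the exponential factor $e^{K_j|t-s|}$. The principal technical obstacle is the careful bookkeeping of kinetic versus number-operator weights in the quartic bound: the singular potential forces one to spend exactly one factor of the kinetic energy (rather than two) when invoking $V^2\leq D(1-\Delta)$, which is precisely what is needed so that Lemma \ref{lem:U2} can be applied to bring $\cK$ back to the initial state $\psi$ while keeping the estimate uniform in $N$.
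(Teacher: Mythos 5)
Your plan coincides with the paper's proof: Duhamel's formula for $\cU_N(t;s)-\cU_\infty(t;s)$ with the difference of generators $\cL_3(r)+\cL_4$, Proposition \ref{lem:UN} to carry $(\cN+1)^{j/2}(1+\cN/N)^{1/2}$ across $\cU_N(t;r)$, the cubic bound $\cL_3(r)(\cN+1)^{j}\cL_3(r)\leq CN^{-1}(\cN+1)^{j+3}$, a sector-wise estimate of $\cL_4$ via $V^2\leq D(1-\Delta)$ producing a mixed $\cN$--$\cK$ weight that is split by Cauchy--Schwarz with $N$-dependent weights, and Lemma \ref{lem:U2} (including its $\cK^2$ bound) to return all weights to $\psi$. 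The only slip is bookkeeping: the quartic estimate gives $\|(\cN+1)^{k}\cL_4\zeta\|\leq CN^{-1}\left(\|(\cN+1)^{k+3/2}\cK^{1/2}\zeta\|+\|(\cN+1)^{k+2}\zeta\|\right)$ rather than your weight $(\cN+1)^{k+1}$, but with the correct exponents the same splitting still lands within the stated right-hand side, exactly as in the paper.
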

\begin{proof}
{F}rom (\ref{eq:LN}) and (\ref{eq:Linfty}) we find  
 \bee
 \begin{split}
\left(
  \mathcal{U}_N(t;s)-\mathcal{U}_\infty(t;s)\right)\psi &=-i \int_{s}^t \mathrm{d}r \;\mathcal{U}_N(t;r)
  \left(\mathcal{L}_N (r)-\mathcal{L}_\infty(r)\right)\mathcal{U}_\infty(r;s)\psi \\
  &=   -i \int_{s}^t \mathrm{d}r \; \mathcal{U}_N(t;r)\left(\mathcal{L}_3(r)+\mathcal{L}_4\right)\mathcal{U}_\infty(r;s)\psi
 \end{split}
 \eee
with
\[ \begin{split} \cL_3 (t) &= \frac{1}{\sqrt{N}} \int dx dy\; V(x-y) \, a_x^* \left(\ph_t (y)  a_y^* +\overline{\ph}_t (y)  a_y \right) a_x \quad \text{and } \\
\cL_4 &= \frac{1}{2N} \int dx dy\; V(x-y) a_x^* a_y^* a_y a_x \, . \end{split}  \]
{F}rom Proposition \ref{lem:UN} we find 
 \be\label{eq:difference1}
 \begin{split}
& \left\Vert (\mathcal{N}+1)^{j/2}\left(\mathcal{U}_N(t;s)-\mathcal{U}_\infty(t;s)\right)\psi\right\Vert \\
 &\hspace{2.5cm}\leq \int_s^t\mathrm{d}r \;\left\Vert 
 (\mathcal{N}+1)^{j/2} \mathcal{U}_N(t;r)\left(\mathcal{L}_3(r)+\mathcal{L}_4\right)\mathcal{U}_\infty(r;s)\psi\right\Vert\\
 &\hspace{2.5cm} \leq C\int_s^t \mathrm{d}r\; e^{K|t-r|}\left(\left\Vert (\mathcal{N}+1)^{j/2}(1+\mathcal{N}/N)^{1/2}\;\mathcal{L}_3(r)\;\mathcal{U}_\infty(r;s)\psi\right\Vert\right. \\
&\hspace{6.5cm} \left. +\left\Vert (\mathcal{N}+1)^{j/2}(1+\mathcal{N}/N)^{1/2}\;\mathcal{L}_4\;\mathcal{U}_\infty(r;s)\psi\right\Vert\right).
 \end{split}
 \ee
Using the estimate
\[ \cL_3 (r) (\cN+1)^j \cL_3 (r) \leq \frac{C}{N} \, (\cN+1)^{j+3} \]
proven in \cite{CLS}[Lemma 6.3], the term containing $\cL_3 (r)$ on the r.h.s. of (\ref{eq:difference1}) can be bounded by 
\be\label{eq:difference2}
 \begin{split}
 &\left\Vert (\mathcal{N}+1)^{j/2}(1+\mathcal{N}/N)^{1/2}\mathcal{L}_3(r)\mathcal{U}_\infty(r;s)\psi\right\Vert\\
 &\hspace{2.5cm}\leq \frac{C}{\sqrt{N}}\left\Vert (\mathcal{N}+1)^{(j+3)/2}(1+\mathcal{N}/N)^{1/2}\mathcal{U}_\infty(r;s)\psi\right\Vert\\
 &\hspace{2.5cm}\leq \frac{C}{\sqrt{N}}e^{K|r-s|}\left\Vert (\mathcal{N}+1)^{(j+3)/2}(1+\mathcal{N}/N)^{1/2}\psi\right\Vert
 \end{split}
 \ee
where we also applied Lemma \ref{lem:U2} to control the growth of powers of $\cN$ w.r.t. 
$\cU_\infty (r;s)$. To bound the term containing $\cL_4$ on the r.h.s. of (\ref{eq:difference1}), on the other hand, we use 
\begin{equation}\label{eq:bd-L4} (\cN+1)^{j/2} \cL_4^2 (\cN+1)^{j/2} \leq \frac{C}{N^2} (\cN+1)^{j+3} (\cN+\cK) \end{equation}
This estimate can be shown considering the restriction of the operator on the l.h.s. on the $n$-particle sector $\cF_n$. {F}rom (\ref{eq:bdV2}), we conclude that
\[ \begin{split} 
 (\cN+1)^{j/2} \cL_4^2 (\cN+1)^{j/2} |_{\cF_n} &= \frac{(n+1)^{j+4}}{N^2} \left[ \frac{1}{(n+1)^2} \sum_{i<j}^n V(x_i -x_j) \right]^2 \\ & \leq \frac{(n+1)^{j+2}}{N^2} \sum_{i<j}^n V^2 (x_i -x_j) \\ 
 &\leq  C \frac{(n+1)^{j+3}}{N^2} \sum_{j=1}^n (1-\Delta_{x_j})  
\end{split} \] 
which is exactly the restriction of the r.h.s. of (\ref{eq:bd-L4}) on $\cF_n$. With (\ref{eq:bd-L4}), we can bound the term containing $\cL_4$ on the r.h.s. of (\ref{eq:difference1}) by
\be\label{eq:difference3}
 \begin{split}
 \left\Vert (\mathcal{N}+\right. & \left.1)^{j/2}(1 +\mathcal{N}/N)^{1/2}\; \mathcal{L}_4 \;\mathcal{U}_\infty(r;s)\psi\right\Vert\\
 \leq& \frac{C}{N}\left\Vert (\mathcal{N}+1)^{(j+4)/2}(1+\mathcal{N}/N)^{1/2}\mathcal{U}_\infty(r;s)\psi\right\Vert\\
 & +\frac{C}{N}\left\Vert (\mathcal{N}+1)^{(j+3)/2}(1+\mathcal{N}/N)^{1/2}\mathcal{K}^{1/2}\mathcal{U}_\infty(r;s)\psi\right\Vert  \\
 \leq & \frac{C}{\sqrt{N}}\left\Vert (\mathcal{N}+1)^{(j+3)/2}\mathcal{U}_\infty(r;s)\psi\right\Vert
 +\frac{C}{N^{3/2}}\left\Vert (\mathcal{N}+1)^{(j+5)/2}(1+\mathcal{N}/N)\mathcal{U}_\infty(r;s)\psi\right\Vert\\
   & +\frac{C}{\sqrt{N}}\left\Vert \mathcal{K}\mathcal{U}_\infty(r;s)\psi\right\Vert
 +\frac{C}{N^{3/2}}\left\Vert (\mathcal{N}+1)^{(j+3)}(1+\mathcal{N}/N)\mathcal{U}_\infty(r;s)\psi\right\Vert 
\end{split} \end{equation}
where $\cK = d\Gamma (-\Delta)$ is the kinetic energy operator and where, in the last inequality, we used Cauchy-Schwarz. {F}rom Lemma \ref{lem:U2}, we find
\[ \begin{split}
 \left\Vert (\mathcal{N}+\right. & \left.1)^{j/2}(1 +\mathcal{N}/N)^{1/2}\; \mathcal{L}_4 \;\mathcal{U}_\infty(r;s)\psi\right\Vert\\
     \leq& \frac{C}{\sqrt{N}}e^{K|r-s|}\left(
  \left\Vert (\mathcal{N}+1)^{(j+3)/2}\psi\right\Vert
 +\frac{1}{N}\left\Vert (\mathcal{N}+1)^{(j+5)/2}(1+\mathcal{N}/N)\psi\right\Vert\right)\\
  & +\frac{C}{\sqrt{N}}e^{K|r-s|}\left(\left\Vert \mathcal{K}\psi\right\Vert
 +\frac{1}{N}\left\Vert (\mathcal{N}+1)^{(j+3)}(1+\mathcal{N}/N)\mathcal{U}_\infty(r;s)\psi\right\Vert\right)\\
   \leq & \frac{C}{\sqrt{N}}e^{K|r-s|}\left(\left\Vert (\mathcal{N}+1)^{(j+3)/2}\psi\right\Vert+\left\Vert \mathcal{K}\psi\right\Vert
 +\frac{1}{N}\left\Vert (\mathcal{N}+1)^{(j+3)}(1+\mathcal{N}/N)\psi\right\Vert\right).
 \end{split}
\]
Inserting the last equation and (\ref{eq:difference2}) into the r.h.s. of (\ref{eq:difference1}), we obtain the desired bound. 
\end{proof}

We will also need to control the growth of $\cN$ and of its power with respect to the unitary groups generated by operators of the form $h = N^{-1/2} d\Gamma (J) + \phi (f)$, where $J$ is a bounded operator on $L^2 (\bR^3)$ and $f \in L^2 (\bR^3)$. 
\begin{prop}\label{prop:h-bd}
Let $f \in L^2 (\bR^3)$ and $B$ be a bounded operator on $L^2 (\bR^3)$. Let 
\[ h = \frac{1}{\sqrt{N}} \, d\Gamma (B) + \phi (f) \]
where $\phi (f) = a(f) + a^* (f)$. For every $j \in \bN$ there exists a constant $C$ such that
\[ \langle \psi, e^{is h} (\cN+\alpha)^j e^{-is h} \psi \rangle \leq C \langle \psi, (\cN+\alpha+s^2 \| f \|^2)^j \psi \rangle  \]
for every $s \in \bR$, $\alpha \geq 1$. 
\end{prop}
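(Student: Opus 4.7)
\medskip

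\noindent\textbf{Proof plan for Proposition \ref{prop:h-bd}.}
The plan is to set $\psi_s = e^{-ish}\psi$ and $F_s = \langle\psi_s, (\cN+\alpha)^j\psi_s\rangle$, then derive a differential inequality for $F_s$ that, after integration, gives the polynomial growth $s^{2j}\|f\|^{2j}$ we need on the right-hand side. The key observation is that $d\Gamma(B)$ is number-preserving, so it commutes with every function of $\cN$; hence $[h,(\cN+\alpha)^j] = [\phi(f),(\cN+\alpha)^j]$ and the $N^{-1/2}d\Gamma(B)$ piece drops out entirely from the computation (in particular the $N$-dependence is irrelevant here). Self-adjointness of $h$ on a suitable domain (e.g.\ $\cD(\cN)$) can be justified by a Kato–Rellich / Nelson commutator argument using the bounds (\ref{eq:est-a}).

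From the intertwining relations $\cN a(f)=a(f)(\cN-1)$ and $\cN a^*(f)=a^*(f)(\cN+1)$, functional calculus gives
\[
[\phi(f),(\cN+\alpha)^j] = g_j^+(\cN+\alpha)\,a(f) + g_j^-(\cN+\alpha)\,a^*(f),
\]
where $g_j^\pm(m) := (m\pm 1)^j - m^j$ satisfies $|g_j^\pm(m)| \leq C_j\,m^{j-1}$ for $m\geq 1$, with $g_j^+\geq 0$. Using $g_j^+(\cN+\alpha)^{1/2}a(f)=a(f)g_j^+(\cN+\alpha-1)^{1/2}$ together with (\ref{eq:est-a}), a straightforward Cauchy–Schwarz argument yields
\[
\bigl|\langle\psi_s,[\phi(f),(\cN+\alpha)^j]\psi_s\rangle\bigr|
\;\leq\; C_j\,\|f\|\;\bigl\|(\cN+\alpha)^{(j-1)/2}\psi_s\bigr\|\;\bigl\|(\cN+\alpha)^{j/2}\psi_s\bigr\|\,,
\]
and an analogous bound for the $a^*(f)$ contribution.

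Unitarity of $e^{-ish}$ gives $\|\psi_s\|=\|\psi\|$, and Hölder's inequality (in the spectral measure of $\cN+\alpha$) implies $\|(\cN+\alpha)^{(j-1)/2}\psi_s\|^2 \leq \|\psi\|^{2/j}F_s^{(j-1)/j}$. Combining with the commutator bound, we obtain the key differential inequality
\[
\Bigl|\tfrac{d}{ds}F_s\Bigr| \;\leq\; C_j\,\|f\|\,\|\psi\|^{1/j}\;F_s^{1-1/(2j)}\,.
\]
Setting $G_s := F_s^{1/(2j)}$ turns this into $|G_s'|\leq C_j'\|f\|\,\|\psi\|^{1/j}$, whose integration yields $F_s^{1/(2j)} \leq F_0^{1/(2j)} + C_j'|s|\,\|f\|\,\|\psi\|^{1/j}$.

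Raising to the $2j$-th power and applying the elementary inequality $(a+b)^{2j}\leq 2^{2j-1}(a^{2j}+b^{2j})$ gives $F_s \leq C_j\bigl(F_0 + s^{2j}\|f\|^{2j}\|\psi\|^2\bigr)$. Since $(s^2\|f\|^2)^j\|\psi\|^2 \leq \langle\psi,(\cN+\alpha+s^2\|f\|^2)^j\psi\rangle$ and $F_0=\langle\psi,(\cN+\alpha)^j\psi\rangle\leq \langle\psi,(\cN+\alpha+s^2\|f\|^2)^j\psi\rangle$ by operator monotonicity, this completes the proof. The main technical point is the commutator bookkeeping in the second paragraph; once the inequality $|F_s'|\lesssim\|f\|F_s^{1-1/(2j)}$ is established, the rescaling trick $G_s=F_s^{1/(2j)}$ makes the Gronwall-type step transparent and produces precisely the $s^2\|f\|^2$ shift claimed in the statement.
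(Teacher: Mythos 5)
Your proposal is correct and follows essentially the same route as the paper: differentiate $F_s=\langle e^{-ish}\psi,(\cN+\alpha)^j e^{-ish}\psi\rangle$, note that only $[\phi(f),(\cN+\alpha)^j]$ contributes since $d\Gamma(B)$ commutes with $\cN$, bound the commutator via the intertwining relations and (\ref{eq:est-a}) to get $|F_s'|\leq C\|f\|F_s^{1-1/(2j)}$, and integrate this differential inequality to obtain $F_s\leq C(\|f\|^{2j}s^{2j}+F_0)\leq C\langle\psi,(\cN+\alpha+s^2\|f\|^2)^j\psi\rangle$. Your substitution $G_s=F_s^{1/(2j)}$ just makes explicit the Gronwall-type step the paper invokes, and your bookkeeping of the $\|\psi\|^{1/j}$ factor is a harmless refinement.
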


{\it Remark:} from Proposition \ref{prop:h-bd} we obtain a bound for the norm
\begin{equation}\label{eq:norms-h} 
\| (\cN+\alpha)^{j/2} e^{-is h} (\cN+\alpha + s^2 \| f \|^2)^{-j/2} \| \leq C\;.
\end{equation}

\begin{proof}
We compute the derivative
\[ \begin{split} 
-i \frac{d}{ds} \, \langle \psi, e^{i hs} (\cN+\alpha)^j e^{-i hs} \psi \rangle = \; & \langle \psi, e^{ih s} [h, (\cN+\alpha)^j] e^{-i hs} \psi \rangle \\ = \; & \langle \psi, e^{ihs} [ \phi (f) , (\cN+\alpha)^j ] e^{-i hs} \psi \rangle \\
=\; & \sum_{\ell=0}^{j-1} \langle \psi, e^{ihs} (\cN+\alpha)^\ell a^* (f) (\cN+\alpha)^{j-1-\ell} e^{-ihs} \psi \rangle \\ &- \sum_{\ell=0}^{j-1} \langle \psi, e^{ihs} (\cN+\alpha)^\ell a (f) (\cN+\alpha)^{j-1-\ell} e^{-ihs} \psi \rangle\;.
\end{split} \]
We use the intertwining formulas $\cN a^* (f) = a^* (f) (\cN+1)$ and $(\cN+1) a(f) = a(f) \cN$ to write
\[ \begin{split} 
-i \frac{d}{ds} \, \langle \psi, &e^{i hs} (\cN+\alpha)^j e^{-i hs} \psi \rangle \\ 
=\; & \sum_{\ell=0}^{j-1} \langle \psi, e^{ihs}  (\cN+\alpha)^{j/2-1/4} a^* (f) (\cN+\alpha +1)^{\ell-j/2+1/4} (\cN+\alpha)^{j-1-\ell} e^{-ihs} \psi \rangle \\ &- \sum_{\ell=0}^{j-1} \langle \psi, e^{ihs} (\cN+\alpha)^\ell (\cN+\alpha+1)^{j/2-\ell-3/4} a (f) (\cN+\alpha)^{j/2-1/4} e^{-ihs} \psi \rangle\;.
\end{split} \]
Using the bounds (\ref{eq:est-a}), we find
\[ \begin{split} 
\Big| \frac{d}{ds} \langle &\psi, e^{ihs} (\cN+\alpha)^j e^{-ihs} \psi \rangle \Big| \\ \leq \; & \sum_{\ell=0}^{j-1} \| f \| \, \| (\cN+\alpha)^{j/2-1/4} e^{-ihs} \psi \|  \, \| (\cN+\alpha+1)^{\ell-j/2+3/4} (\cN+\alpha)^{j-\ell-1} e^{-ihs} \psi \| \\
\leq \; & C \| f \| \| (\cN+\alpha)^{j/2-1/4} e^{-ihs} \psi \|^2 \\
\leq \; &C\| f \| \langle \psi, e^{ihs} (\cN+\alpha)^j e^{-ihs} \psi \rangle^{1-1/2j}
\end{split} \]
for all $\alpha \geq 1$ and for a constant $C$ depending only on $j$. Gronwall's lemma gives
\[  \langle \psi, e^{ihs} (\cN+\alpha)^j e^{-ihs} \psi \rangle \leq C \| f \|^{2j} s^{2j} + \langle \psi, (\cN+\alpha)^j \psi \rangle \leq C \langle \psi,  (\cN+\alpha + \| f \|^2 s^2)^j \psi \rangle\]
for all $s \in \bR$.
\end{proof}

Finally, we need bounds on the growth of $\cN$, of its higher powers, and of $\cK$ with respect to the unitary group generated by self-adjoint field operators of the form $\phi (f)$, obtained from the operator $h$ introduced in Proposition \ref{prop:h-bd} in the limit $N \to \infty$.
\begin{lem}
Let $f \in L^2 (\bR^3)$ and $\phi (f) = a^* (f) + a(f)$. For every $j \in \bN$ there exists a constant $C$ such that
\[ \langle \psi, e^{is \phi (f)} (\cN+\alpha)^j e^{-is \phi (f)} \psi \rangle \leq C \langle \psi, (\cN+\alpha+s^2 \| f \|^2 )^j \psi \rangle \]
for all $s \in \bR$ and $\alpha \geq 1$. If $f \in H^1 (\bR^3)$, we have, for every $s \in \bR$ and $\alpha \geq 0$, 
\begin{equation}\label{eq:K-phi} 
\langle \psi, e^{i s\phi (f)} (\cK + \alpha) \, e^{-i s \phi (f)} \psi \rangle \leq 2 \left\langle \psi, (\cK + \alpha + s^2 \| \nabla f \|^2) \psi \right\rangle\;. \end{equation}
If $f \in H^2 (\bR^3)$ there exists a constant $C>0$ such that
\begin{equation}\label{eq:K2-phi} \langle \psi, e^{is \phi (f)} (\cK+\alpha)^2 e^{-is \phi (f)} \psi \rangle \leq C \left\langle \psi, (\cK + \alpha + s^2 \| \nabla f \|^2 + |s| \| \Delta f \|)^2 \psi \right\rangle \end{equation}
for every $s \in \bR$, $\alpha \geq 0$.
\end{lem}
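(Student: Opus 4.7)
The first bound on $(\cN+\alpha)^j$ is an immediate special case of Proposition \ref{prop:h-bd} with $B = 0$, in which case $h = \phi(f)$. The remaining two inequalities for $\cK+\alpha$ and $(\cK+\alpha)^2$ I would prove by explicitly computing the conjugated operator $e^{is\phi(f)}\cK e^{-is\phi(f)}$ using the fact that $e^{is\phi(f)} = W(isf)$ is a Weyl operator, so that the shift formulas \eqref{eq:shifts} give $e^{-is\phi(f)} a_x e^{is\phi(f)} = a_x + is f(x)$ and its adjoint. Applied to each factor of $\nabla_x a_x$ in $\cK = \int \nabla_x a_x^* \nabla_x a_x\,dx$, this produces the identity (on a dense domain)
\[ e^{is\phi(f)} \cK e^{-is\phi(f)} = \cK + s^2 \|\nabla f\|^2 + R(s,f), \]
where $R(s,f) = is\int \overline{\nabla f(x)}\cdot \nabla_x a_x\,dx - is\int \nabla f(x)\cdot \nabla_x a_x^*\,dx$ is a cross term involving gradient-annihilation and gradient-creation pieces.

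For the bound \eqref{eq:K-phi} involving only $\cK+\alpha$, I would interpret $R(s,f)$ as a quadratic form on $\cD(\cK^{1/2})$. Writing it out sector by sector and integrating by parts with respect to $x$ against $\psi^{(n+1)}$, Cauchy-Schwarz in $x$ combined with the permutation symmetry of $\psi^{(n+1)}$ gives
\[ \bigl|\langle \psi^{(n+1)}, \textstyle\int \nabla f(x)\cdot \nabla_x a_x^*\,dx\,\psi^{(n)}\rangle\bigr| \leq \|\nabla f\|\,\|\psi^{(n)}\|\,\|\cK^{1/2}\psi^{(n+1)}\|; \]
summing over $n$ yields $|\langle \psi, R(s,f)\psi\rangle| \leq 2|s|\|\nabla f\|\,\|\psi\|\,\|\cK^{1/2}\psi\|$. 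Applying AM-GM to absorb this into $s^2\|\nabla f\|^2\|\psi\|^2 + \|\cK^{1/2}\psi\|^2$ produces the claimed factor of $2$ (the identity for $f\in H^1$ may be justified by approximation by $H^2$ functions, keeping the estimates uniform).

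For the bound \eqref{eq:K2-phi}, assuming $f\in H^2$, a genuine integration by parts in $x$ turns $R(s,f)$ into an honest operator identity $R(s,f) = is\phi_-(\Delta f)$ with $\phi_-(g) = a^*(g) - a(g)$. Since $e^{is\phi(f)}(\cK+\alpha) e^{-is\phi(f)}$ is self-adjoint, we then have
\[ \langle \psi, e^{is\phi(f)}(\cK+\alpha)^2 e^{-is\phi(f)}\psi\rangle = \bigl\| (\cK+\alpha+s^2\|\nabla f\|^2+ is\phi_-(\Delta f))\psi\bigr\|^2, \]
which by $\|X+Y\|^2\leq 2\|X\|^2 + 2\|Y\|^2$ reduces to controlling $\|(\cK+\alpha+s^2\|\nabla f\|^2)\psi\|^2$ and $s^2\|\phi_-(\Delta f)\psi\|^2$. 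The former is directly dominated by the stated right-hand side. The main obstacle is the latter: the naive estimate $\|\phi_-(\Delta f)\psi\|^2\lesssim \|\Delta f\|^2\langle \psi, (\cN+1)\psi\rangle$ introduces a number operator absent from the target, and $\cN$ is not controlled by $\cK$ in general. The crucial resolution is to integrate by parts in $f$ inside the annihilation operator:
\[ a(\Delta f)\psi^{(n)}(x_2,\ldots,x_n) = -\sqrt n \int \overline{\nabla f(x)}\cdot \nabla_x\psi^{(n)}(x,x_2,\ldots,x_n)\,dx, \]
from which Cauchy-Schwarz and symmetry yield $\|a(\Delta f)\psi\|^2\leq \|\nabla f\|^2 \langle\psi, \cK \psi\rangle$; the CCR then give $\|a^*(\Delta f)\psi\|^2 = \|a(\Delta f)\psi\|^2 + \|\Delta f\|^2\|\psi\|^2$. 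A final AM-GM step absorbs the cross contribution $s^2 \|\nabla f\|^2 \langle\psi, \cK\psi\rangle$ into $\langle\psi,\cK^2\psi\rangle + s^4\|\nabla f\|^4\|\psi\|^2$, both of which sit inside $\langle\psi, (\cK+\alpha+s^2\|\nabla f\|^2+|s|\|\Delta f\|)^2\psi\rangle$, completing the proof.
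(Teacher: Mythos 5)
Your proposal is correct and follows essentially the same route as the paper: the $(\cN+\alpha)^j$ bound via Proposition \ref{prop:h-bd} with $B=0$, the Weyl-operator shift formula to conjugate $\cK$, Cauchy--Schwarz plus AM--GM on the gradient cross term for (\ref{eq:K-phi}), and for (\ref{eq:K2-phi}) the bounds $\|a(\Delta f)\psi\|^2\le\|\nabla f\|^2\langle\psi,\cK\psi\rangle$ together with the CCR to handle the creation part. The only (cosmetic) difference is that you integrate by parts to rewrite the cross term as $is\phi_-(\Delta f)$ and expand a norm squared, while the paper keeps it in the form $A+A^*$ with $A=-is\int \nabla\overline{f}\cdot\nabla_x a_x\,dx$ and bounds $A^*A$ and $AA^*$ using $[A,A^*]=s^2\|\Delta f\|^2$ -- the same estimates in substance.
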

{\it Remark:} from the lemma we obtain bounds for the norms 
\begin{equation}\label{eq:norms-phi} \begin{split} 
\| (\cN+\alpha)^{j/2} e^{-is \phi (f)} (\cN+\alpha + s^2 \| f \|^2)^{-j/2} \| &\leq C\;, \\
\| (\cK+\alpha)^{1/2} e^{-is \phi (f)} (\cK + \alpha + s^2 \| \nabla f \|^2)^{-1/2} \| & \leq C\;, \\
\| (\cK+\alpha) e^{-is\phi (f)} (\cK+\alpha +s^2 \| \nabla f \|^2 + |s| \| \Delta f \|)^{-1} \| & \leq C\;.
\end{split} \end{equation}

\begin{proof}
The first statement follows from Proposition \ref{prop:h-bd} taking $B=0$. To prove (\ref{eq:K-phi}), we observe that $e^{i\phi (f)} = W(if)$ is a Weyl operator. Therefore, we have
\[ \begin{split} e^{is \phi (f)} (\cK +\alpha) e^{-i s \phi (f)} &= \alpha + \int dx \nabla_x (a_x^* +i s \overline{f} (x)) \nabla_x (a_x -i s f (x)) 
\\ &= \cK + \alpha - is \int dx \nabla f (x) \cdot \nabla_x a_x^* + i s \int dx \nabla \overline{f} (x) \cdot \nabla_x a_x + s^2 \| \nabla f \|^2 \\ & \leq 2 (\cK + \alpha+ s^2 \| \nabla f \|^2) \end{split} \] 
which proves (\ref{eq:K-phi}). 

Finally, we show (\ref{eq:K2-phi}). We have
\[ \begin{split}
\langle \psi, e^{is\phi (f)} (\cK+\alpha)^2 e^{-is \phi (f)} \psi \rangle = \left\langle \psi, \left( \cK+ \alpha+ A +A^* + 
s^2 \| \nabla f \|^2 \right)^2 \psi \right\rangle 
\end{split} \]
with 
\[ A = -is \int dx \, \nabla \overline{f} (x) \cdot \nabla_x a_x\;. \]
By Cauchy-Schwarz, we find
\[ \langle \psi, e^{is\phi(f)} (\cK+\alpha)^2 e^{-is \phi (f)} \psi \rangle \leq C \langle \psi, ((\cK+\alpha)^2 + A^* A + A A^* + s^4 \| \nabla f \|^4) \psi \rangle\;. \]
We have
\[\begin{split}  
\langle \psi, A^* A \psi \rangle &= s^2 \int dx dy \nabla f (x) \nabla \overline{f} (y) \langle \psi, 
\nabla_x a_x^* \nabla_y a_y \psi \rangle \\ &\leq s^2 \int dx dy \, |\nabla f (x)| |\nabla f (y)| \| 
\nabla_x a_x \psi \| \, \| \nabla_y a_y \psi \| \\ &\leq s^2 \| \nabla f \|^2 \langle \psi, \cK \psi \rangle\;.
\end{split} \]
Since $[A,A^*] = s^2 \| \Delta f \|^2$, we conclude that
\[ \langle \psi, AA^* \psi \rangle \leq s^2 \| \Delta f \|^2 + s^2 \| \nabla f \|^2 \langle \psi, \cK \psi 
\rangle \]
and thus that
\[  \langle \psi, e^{is\phi(f)} (\cK+\alpha)^2 e^{-is \phi (f)} \psi \rangle \leq C \langle \psi, (\cK+\alpha +s^2 \| \nabla f \|^2 + |s| \| \Delta f \| )^2 \psi \rangle\;. 
\]
\end{proof}



\section{Proof of Theorem \ref{thm:multi-CLT}}\label{sect:main}
\setcounter{equation}{0}

To compute the expectation
\[ \E_{\psi_{N,t}} \big[ f_1(\cO_{1,t}) \dots f_k (\cO_{k,t}) \big]  = \langle \psi_{N,t} , f_1(\cO_{1,t}) \dots f_k (\cO_{k,t}) \psi_{N,t} \rangle \]
we expand the functions $f_1, \dots, f_k$ in their Fourier representation. We find
\[  \E_{\psi_{N,t}} \big[ f_1(\cO_{1,t}) \dots f_k (\cO_{k,t}) \big]  = \int d\tau_1 \dots d\tau_k \, \widehat{f}_1 (\tau_1) \dots \widehat{f}_k (\tau_k) \, \langle \psi_{N,t} , e^{i \tau_1 \cO_{1,t}} \dots e^{i\tau_k \cO_{k,t}} \psi_{N,t} \rangle\;. \]
Next, we embed our problem in the Fock-space. With a slight abuse of notation, we identify $\psi_{N,t}$ with the Fock space vector 
\[ \psi_{N,t} = e^{-i\cH_N t} \frac{a^* (\ph)^N}{\sqrt{N!}} \Omega \]
having only one non-zero component. We observe that 
\[ \psi_{N,t} = d_N P_N e^{-i\cH_N t} W(\sqrt{N} \ph) \Omega \]
where $P_N$ is the orthogonal projection onto the $N$-particle sector of the Fock space, and where $d_N = e^{N/2} N^{-N/2}\sqrt{N!} \simeq N^{1/4}$. We define
\[\wt{O}_{j,t} = O_j - \langle \ph_t , O_j \ph_t \rangle \, . \]
Since 
\[ \cO_{k,t} = \frac{1}{\sqrt{N}} d\Gamma (\wt{O}_{k,t}) |_{P_N \cF} \]
we find 
\begin{equation} \label{eq:pf1} \begin{split} 
\big\langle \psi_{N,t} , &e^{i \tau_1\cO_{1,t}} \dots e^{i \tau_k  \cO_{k,t}} \psi_{N,t} \big\rangle \\ = \; &d_N \left\langle  \frac{a^* (\ph)^N}{\sqrt{N!}} \Omega, e^{i\cH_N t} e^{i\frac{\tau_1}{\sqrt{N}} d\Gamma (\wt{O}_{1,t})} \dots e^{i\frac{\tau_k}{\sqrt{N}} d\Gamma (\wt{O}_{k,t})} e^{-i\cH_N t} P_N W(\sqrt{N} \ph) \Omega \right\rangle\\ = \; & d_N \left\langle W^* (\sqrt{N} \ph) \frac{a^* (\ph)^N}{\sqrt{N!}} \Omega , W^* (\sqrt{N} \ph) e^{i\cH_N t} e^{i \frac{\tau_1}{\sqrt{N}} d\Gamma (\wt{O}_{1,t})} \dots e^{i \frac{\tau_k}{\sqrt{N}} 
d\Gamma (\wt{O}_{k,t})} e^{-i\cH_N t} W(\sqrt{N} \ph) \Omega \right\rangle \\ 
= \; & \left\langle \xi_N, \cU_N^* (t;0) W^* (\sqrt{N} \ph_t) e^{i\frac{\tau_1}{\sqrt{N}} d\Gamma (\wt{O}_{1,t})} \dots e^{i\frac{\tau_k}{\sqrt{N}} d\Gamma (\wt{O}_{k,t})} W(\sqrt{N} \ph_t) \cU_N (t;0) \Omega \right\rangle 
\end{split} \end{equation}
where we introduced the fluctuation dynamics $\cU_N (t;s) = W^* (\sqrt{N} \ph_t) e^{-i\cH_N (t-s)} W(\sqrt{N} \ph_s)$ and where we defined the Fock space vector
\[ \xi_N = d_N W^* (\sqrt{N} \ph) \frac{a^* (\ph)^N}{\sqrt{N!}} \Omega\,. \]
Observe that $\| \xi_N \| = d_N \simeq N^{1/4}$. However, it follows from Lemma \ref{lm:xiN} that
\begin{equation}\label{eq:bd-xiN}
\| (\cN+1)^{-1/2} \xi_N \| \leq C 
\end{equation}
uniformly in $N$. {F}rom (\ref{eq:shifts}), we find
\[ \begin{split}  W^* (\sqrt{N} \ph_t) d\Gamma (\wt{O}_{j,t}) W(\sqrt{N} \ph_t) &= d\Gamma (\wt{O}_{j,t}) + \sqrt{N} \phi (\wt{O}_{j,t} \ph_t) + N \langle \ph_t, \wt{O}_{j,t} \ph_t \rangle \\ &= d\Gamma (\wt{O}_{j,t}) + \sqrt{N} \phi (\wt{O}_{j,t} \ph_t) \end{split} \]
because, by definition, $\langle \ph_t, \wt{O}_{j,t} \ph_t \rangle = 0$. Inserting in (\ref{eq:pf1}), we find 
\[ \begin{split} 
\big\langle \psi_{N,t} , &e^{i \tau_1\cO_{1,t}} \dots e^{i \tau_k  \cO_{k,t}} \, \psi_{N,t} \big\rangle = \left\langle \xi_N, \cU_N^* (t;0)  e^{i\tau_1 h_{1,t}} \dots e^{i\tau_k h_{k,t}} \cU_N (t;0) \Omega \right\rangle 
\end{split} \]
with
\[ h_{j,t} = \frac{1}{\sqrt{N}} d\Gamma (\wt{O}_{j,t}) + \phi (\wt{O}_{j,t} \ph_t) \]
for $j=1,\dots, k$. Recall here that $\phi (f) = a^* (f) + a(f)$, for any $f \in L^2 (\bR^3)$. We expand next $h_{j,t}$ around its main component $\phi (\wt{O}_{j,t} \ph_t)$. We find
\begin{equation}\label{eq:pf2} \begin{split}
\big\langle \psi_{N,t} , &e^{i \tau_1\cO_{1,t}} \dots e^{i \tau_k  \cO_{k,t}} \, \psi_{N,t} \big\rangle \\ = & \; \sum_{\ell=1}^k \left\langle \xi_N, \cU^*_N (t;0) \prod_{j=1}^{\ell-1} e^{i\tau_j  \phi (\wt{O}_{j,t} \ph_t)} \, \left(e^{i \tau_\ell h_{\ell,t}} - e^{i\tau_\ell \phi (\wt{O}_{j,t} \ph_t)} \right) \prod_{j=\ell+1}^k e^{i\tau_j h_{j,t}} \cU_N (t;0) \Omega \right\rangle \\ &+ \left\langle \xi_N, \cU^*_N (t;0) \prod_{j=1}^{k} e^{i\tau_j  \phi (\wt{O}_{j,t} \ph_t)} \,  \cU_N (t;0) \Omega \right\rangle\;.   \end{split} \end{equation}
In order to bound the terms in the sum over $\ell$, we write
\[ \begin{split} 
 \Big\langle \xi_N, & \, \cU^*_N (t;0) \prod_{j=1}^{\ell-1} e^{i\tau_j  \phi (\wt{O}_{j,t} \ph_t)} \, \left(e^{i \tau_\ell h_{\ell,t}} - e^{i\tau_\ell \phi (\wt{O}_{j,t} \ph_t)} \right) \prod_{j=\ell+1}^k e^{i\tau_j h_{j,t}} \cU_N (t;0) \Omega \Big\rangle \\
 &= \frac{i}{\sqrt{N}}  \int_0^{\tau_\ell} ds  \Big\langle \xi_N, \cU^*_N (t;0) \prod_{j=1}^{\ell-1} e^{i\tau_j  \phi (\wt{O}_{j,t} \ph_t)} \, e^{i(\tau_\ell-s) h_{\ell,t}} \\ & \hspace{6cm} \times d\Gamma (\wt{O}_{\ell,t}) e^{is \phi (\wt{O}_{\ell,t} \ph_t)}  \prod_{j=\ell+1}^k e^{i\tau_j h_{j,t}} \cU_N (t;0) \Omega \Big\rangle\;.
\end{split} \]
We estimate the absolute value of this term as follows:
\[ \begin{split} 
\Big|  \Big\langle \xi_N, & \, \cU^*_N (t;0) \prod_{j=1}^{\ell-1} e^{i\tau_j  \phi (\wt{O}_{j,t} \ph_t)} \, \left(e^{i \tau_\ell h_{\ell,t}} - e^{i\tau_\ell \phi (\wt{O}_{j,t} \ph_t)} \right) \prod_{j=\ell+1}^k e^{i\tau_j h_{j,t}} \cU_N (t;0) \Omega \Big\rangle \Big| \\
\leq \; & \frac{C}{\sqrt{N}} \int_0^{\tau_\ell} ds \, \| (\cN+1)^{-1/2} \xi_N \| \, \| (\cN+1)^{1/2} \cU_N (t;0) (\cN+1)^{-1} \| \\ & \hspace{1cm} \times \prod_{j=1}^{\ell-1} \| (\cN+1+\sum_{i=1}^{j-1} \tau_i^2 \| \wt{O}_{i,t} \ph_t \|^2) e^{i\tau_j \phi (\wt{O}_{j,t} \ph_t)} (\cN+1+ \sum_{i=1}^j \tau_i^2 \| \wt{O}_{i,t} \ph_t \|^2)^{-1} \|  \\
&\hspace{1cm} \times \|(\cN+1+ \sum_{i=1}^{\ell-1} \tau_i^2 \| \wt{O}_{i,t} \ph_t \|^2) e^{i (\tau_\ell -s) h_{\ell,s}}  (\cN+1+ \sum_{i=1}^{\ell} \tau_i^2 \| \wt{O}_{i,t} \ph_t \|^2 )^{-1} \| \\
&\hspace{1cm} \times  \|(\cN+1+ \sum_{i=1}^{\ell} \tau_i^2 \| \wt{O}_{i,t} \ph_t \|^2) d\Gamma (\wt{O}_{\ell,t})  (\cN+1+ \sum_{i=1}^{\ell} \tau_i^2 \| \wt{O}_{i,t} \ph_t \|^2)^{-2} \|
\\ &\hspace{1cm} \times 
 \| (\cN+1+ \sum_{i=1}^{\ell} \tau_i^2 \| \wt{O}_{i,t} \ph_t \|^2)^{2} e^{is \phi (\wt{O}_{\ell,t} \ph_t)} (\cN+1+ \sum_{i=1}^{\ell} \tau_i^2 \| \wt{O}_{i,t} \ph_t \|^2 )^{-2} \| \\
  &\hspace{1cm} \times  \prod_{j=\ell+1}^k \| (\cN+1+ \sum_{i=1}^{j-1} \tau_i^2 \| \wt{O}_{i,t} \ph_t \|^2)^{2} e^{i\tau_j h_{j,t}} (\cN+1+ \sum_{i=1}^{j} \tau_i^2 \| \wt{O}_{i,t} \ph_t \|^2 )^{-2}\| \\
   &\hspace{1cm} \times 
 \|(\cN+1+ \sum_{i=1}^{k} \tau_i^2 \| \wt{O}_{i,t} \ph_t \|^2)^{2} \, \cU_N (t;0)  \Omega \| \\ 
 \leq \;& \frac{Ce^{K|t|}}{\sqrt{N}} |\tau_\ell| \| \wt{O}_{\ell,t} \| \left(1 + \sum_{i=1}^k \tau_i^2 \| \wt{O}_{i,t} \ph_t \|^2 \right)^2\;.
 \end{split} \]
Here we used (\ref{eq:bd-xiN}), the norm bounds (\ref{eq:norms-h}) and (\ref{eq:norms-phi}), Proposition \ref{lem:UN} and the estimate
\[ \big\| d\Gamma (\wt{O}_{\ell,t}) (\cN+1+\sum_{i=1}^\ell \tau_i^2 \| \wt{O}_{i,t} \ph_t \|^2)^{-1} \big\| \leq \| \wt{O}_{\ell,t} \|\;. \]
{F}rom (\ref{eq:pf2}), we conclude that
\begin{equation}\label{eq:h-phi} \begin{split}  \big| \big\langle \psi_{N,t} , &e^{i \tau_1\cO_{1,t}} \dots e^{i \tau_k  \cO_{k,t}} \, \psi_{N,t} \big\rangle -\big\langle \xi_N, \cU^*_N (t;0) \prod_{j=1}^{k} e^{i\tau_j  \phi (\wt{O}_{j,t} \ph_t)} \,  \cU_N (t;0) \Omega \big\rangle \big| \\ &\hspace{5cm} \leq \frac{C e^{K |t|}}{\sqrt{N}} \left(\sum_{\ell=1}^k |\tau_\ell| \| \wt{O}_{\ell,t} \| \right) \left(1 + \sum_{i=1}^k \tau_i^2 \| \wt{O}_{i,t} \ph_t \|^2 \right)^{2}\;. \end{split} \end{equation}

Next, we replace the fluctuation dynamics $\cU_N (t;0)$ with the limiting dynamics $\cU_\infty (t;0)$ introduced in Proposition \ref{prop:GV}. To this end, we write
\begin{equation}\label{eq:UUin} \begin{split} 
\big\langle \xi_N, \cU^*_N (t;0) & \prod_{j=1}^{k} e^{i\tau_j  \phi (\wt{O}_{j,t} \ph_t)} \,  \cU_N (t;0) \Omega \big\rangle \\ = \; &\big\langle \xi_N, \cU^*_N (t;0)  \prod_{j=1}^{k} e^{i\tau_j  \phi (\wt{O}_{j,t} \ph_t)} \,  (\cU_N (t;0) - \cU_\infty (t;0)) \Omega \big\rangle \\
&+ \big\langle \xi_N, (\cU^*_N (t;0) - \cU^*_\infty (t;0)) \prod_{j=1}^{k} e^{i\tau_j  \phi (\wt{O}_{j,t} \ph_t)} \, \cU_\infty (t;0) \Omega \big\rangle \\
&+\big\langle \xi_N, \cU^*_\infty (t;0)  \prod_{j=1}^{k} e^{i\tau_j  \phi (\wt{O}_{j,t} \ph_t)} \, \cU_\infty (t;0) \Omega \big\rangle\;. \end{split} \end{equation}
To bound the first term on the r.h.s. of the last equation, we notice that
\begin{equation}\label{eq:UUin-1}\begin{split} 
\big| \big\langle \xi_N, & \cU^*_N (t;0)  \prod_{j=1}^{k} e^{i\tau_j  \phi (\wt{O}_{j,t} \ph_t)} \,  (\cU_N (t;0) - \cU_\infty (t;0)) \Omega \big\rangle \big| \\
\leq \; &\| (\cN+1)^{-1/2} \xi_N \| \| (\cN+1)^{1/2} \cU_N^* (t;0) (\cN+1)^{-1} \| \\
&\hspace{.5cm} \times \prod_{j=1}^k \| (\cN+1+ \sum_{i=1}^{j-1} \tau_i^2 \| \wt{O}_{i,t} \ph_t \|^2) e^{i\tau_j \phi (\wt{O}_{j,t} \ph_t)} (\cN+1+ \sum_{i=1}^{j} \tau_i^2 \| \wt{O}_{i,t} \ph_t \|^2)^{-1} \| \\
&\hspace{.5cm} \times \| (\cN+1+ \sum_{i=1}^{k} \tau_i^2 \| \wt{O}_{i,t} \ph_t \|^2) (\cU_N (t;0) - \cU_\infty (t;0)) \Omega \| \\
\leq \; &\frac{C e^{K|t|}}{\sqrt{N}}  \left(1+ \sum_{i=1}^k \tau_i^2 \| \wt{O}_{i,t} \ph_t \|^2 \right) \end{split} \end{equation}
where we used (\ref{eq:bd-xiN}), the norm bound (\ref{eq:norms-phi}), Proposition \ref{lem:UN} and Proposition \ref{prop:U-Uin}.  Next, we estimate the second term on the r.h.s. of (\ref{eq:UUin}). {F}rom Proposition \ref{prop:U-Uin} we find
\begin{equation}\label{eq:pf5} \begin{split} 
\big| \big\langle \xi_N, & (\cU^*_N (t;0) - \cU^*_\infty (t;0))  \prod_{j=1}^{k} e^{i\tau_j  \phi (\wt{O}_{j,t} \ph_t)} \,  \cU_\infty (t;0) \Omega \big\rangle \big| \\
\leq \; & \| (\cN+1)^{-1/2} \xi_N \| \, \| (\cN+1)^{1/2} (\cU^*_N (t;0) - \cU^*_\infty (t;0))  \prod_{j=1}^{k} e^{i\tau_j  \phi (\wt{O}_{j,t} \ph_t)} \,  \cU_\infty (t;0) \Omega \| \\
\leq \; &\frac{C e^{K|t|}}{\sqrt{N}} \, \| (\cN +1)^2 \prod_{j=1}^{k} e^{i\tau_j  \phi (\wt{O}_{j,t} \ph_t)} \, \cU_\infty (t;0) \Omega \| +\frac{Ce^{K|t|}}{\sqrt{N}} \| \cK \prod_{j=1}^{k} e^{i\tau_j  \phi (\wt{O}_{j,t} \ph_t)} \,  \cU_\infty (t;0) \Omega \| \\ &+ \frac{Ce^{K|t|}}{N^{3/2}} \| (\cN+1)^4 \prod_{j=1}^{k} e^{i\tau_j  \phi (\wt{O}_{j,t} \ph_t)} \, \cU_\infty (t;0) \Omega \| \\ &+ \frac{Ce^{K|t|}}{N^{5/2}} \| (\cN+1)^5 \prod_{j=1}^{k} e^{i\tau_j  \phi (\wt{O}_{j,t} \ph_t)} \,  \cU_\infty (t;0) \Omega \|\;. \end{split} \end{equation}
Using again (\ref{eq:norms-phi}), the first term on the r.h.s. of (\ref{eq:pf5}) can bounded by
\[ \begin{split} \frac{C e^{K|t|}}{\sqrt{N}} \, &\| (\cN +1)^2 \prod_{j=1}^{k} e^{i\tau_j  \phi (\wt{O}_{j,t} \ph_t)} \, \cU_\infty (t;0) \Omega \| \\  \leq \; & \frac{C e^{K|t|}}{\sqrt{N}} \prod_{j=1}^k \| (\cN+1 + \sum_{i=1}^{j-1} \tau_i^2 \| \wt{O}_{i,t} \ph_t \|^2)^2 e^{i\tau_j \phi (\wt{O}_{j,t} \ph_t)} (\cN+1 + \sum_{i=1}^{j} \tau_i^2 \| \wt{O}_{i,t} \ph_t \|^2)^{-2} \| \\ 
&\hspace{1cm} \times \| (\cN+1 + \sum_{i=1}^{k} \tau_i^2 \| \wt{O}_{i,t} \ph_t \|^2)^2 \cU_\infty (t;0) \Omega \| \\ \leq \; & \frac{C e^{K|t|}}{\sqrt{N}} \left( 1+ \sum_{i=1}^k \tau_i^2 \| \wt{O}_{i,t} \ph_t \|^2 \right)^2\;. \end{split} \]
Similarly, one can bound the third and the fourth term on the r.h.s. of (\ref{eq:pf5}). We obtain
\[  \frac{C e^{K|t|}}{N^{3/2}} \, \| (\cN +1)^4 \prod_{j=1}^{k} e^{i\tau_j  \phi (\wt{O}_{j,t} \ph_t)} \, \cU_\infty (t;0) \Omega \| \leq \frac{C e^{K|t|}}{N^{3/2}} \left( 1+ \sum_{i=1}^k \tau_i^2 \| \wt{O}_{i,t} \ph_t \|^2 \right)^4 \]
and
\[  \frac{C e^{K|t|}}{N^{5/2}} \, \| (\cN +1)^5 \prod_{j=1}^{k} e^{i\tau_j  \phi (\wt{O}_{j,t} \ph_t)} \, \cU_\infty (t;0) \Omega \| \leq \frac{C e^{K|t|}}{N^{5/2}} \left( 1+ \sum_{i=1}^k \tau_i^2 \| \wt{O}_{i,t} \ph_t \|^2 \right)^5\;. \]
To estimate the second term on the r.h.s of (\ref{eq:pf5}), we use the third bound in (\ref{eq:norms-phi}); we obtain
\[ \begin{split} \frac{C e^{K|t|}}{\sqrt{N}} \, &\| \cK \prod_{j=1}^{k} e^{i\tau_j  \phi (\wt{O}_{j,t} \ph_t)} \, \cU_\infty (t;0) \Omega \| \\  \leq \; & \frac{C e^{K|t|}}{\sqrt{N}} \prod_{j=1}^k \big\| \big(\cK + \sum_{i=1}^{j-1} (\tau_i^2 \| \nabla \wt{O}_{i,t} \ph_t \|^2 + |\tau_i| \| \Delta \wt{O}_{i,t} \ph_t \|) \big) e^{i\tau_j \phi (\wt{O}_{j,t} \ph_t)}\\ &\hspace{4cm} \times \big( \cK + \sum_{i=1}^{j} (\tau_i^2 \| \nabla \wt{O}_{i,t} \ph_t \|^2 + |\tau_j| \| \Delta \wt{O}_{i,t} \ph_t \|) \big)^{-1} \big\| \\ 
&\hspace{1cm} \times \| \big(\cK + \sum_{i=1}^{k} (\tau_i^2 \| \nabla \wt{O}_{i,t} \ph_t \|^2 + |\tau_i| \| \Delta \wt{O}_{i,t} \ph_t \|) \big) \cU_\infty (t;0) \Omega \| \\ \leq \; & \frac{C e^{K|t|}}{\sqrt{N}} \left( 1+ \sum_{i=1}^k (\tau_i^2 \| \nabla \wt{O}_{i,t} \ph_t \|^2 + |\tau_i| \| \Delta \wt{O}_{i,t} \ph_t \|) \right)\;. \end{split} \]
Combining the last four bounds, we conclude that
\[ \begin{split} \big| \big\langle \xi_N, & (\cU^*_N (t;0) - \cU^*_\infty (t;0))  \prod_{j=1}^{k} e^{i\tau_j  \phi (\wt{O}_{j,t} \ph_t)} \,  \cU_\infty (t;0) \Omega \big\rangle \big| \\
\leq \; & \frac{C e^{K|t|}}{\sqrt{N}} \left(1 + \left(\sum_{i=1}^k \tau_i^2 \| \wt{O}_{i,t} \ph_t \|^2\right)^2  + \sum_{i=1}^k (\tau_i^2 \| \nabla \wt{O}_{i,t} \ph_t \|^2 + |\tau_i| \| \Delta \wt{O}_{i,t} \ph_t \|) \right) \\ &+  \frac{C e^{K|t|}}{N^{3/2}} \left( \sum_{i=1}^k \tau_i^2 \| \wt{O}_{i,t} \ph_t \|^2 \right)^4 +  \frac{C e^{K|t|}}{N^{5/2}} \left( \sum_{i=1}^k \tau_i^2 \| \wt{O}_{i,t} \ph_t \|^2 \right)^5.
\end{split} \] 
Combining with (\ref{eq:UUin-1}) and (\ref{eq:UUin}), we get from (\ref{eq:h-phi}) that
\begin{equation}\label{eq:h0Uin} \begin{split}  \big| \big\langle \psi_{N,t} , e^{i \tau_1\cO_{1,t}} \dots e^{i \tau_k  \cO_{k,t}} \, &\psi_{N,t} \big\rangle -\big\langle \xi_N, \cU^*_\infty (t;0) \prod_{j=1}^{k} e^{i\tau_j  \phi (\wt{O}_{j,t} \ph_t)} \,  \cU_\infty (t;0) \Omega \big\rangle \big| \\  
\leq \; & \frac{C e^{K |t|}}{\sqrt{N}} \left(1 + \sum_{\ell=1}^k |\tau_\ell|^5 \| \wt{O}_{\ell,t} \|^5 \right) \\ 
&+\frac{C e^{K|t|}}{\sqrt{N}} \sum_{i=1}^k (\tau_i^2 \| \nabla \wt{O}_{i,t} \ph_t \|^2 + |\tau_i| \| \Delta \wt{O}_{i,t} \ph_t \|) \\ &+  \frac{C e^{K|t|}}{N^{3/2}} \sum_{i=1}^k \tau_i^8 \| \wt{O}_{i,t} \ph_t \|^8 +  \frac{C e^{K|t|}}{N^{5/2}} \sum_{i=1}^k \tau_i^{10} \| \wt{O}_{i,t} \ph_t \|^{10}.
\end{split} \end{equation}

Now we want to replace the $N$-dependent vector $\xi_N$ with its limit. This procedure will produce again an error of size $N^{-1/2}$. {F}rom Lemma \ref{lm:xiN}, we have
\[ \xi_N = d_N W^* (\sqrt{N} \ph) \frac{a^* (\ph)^N}{\sqrt{N!}} \Omega = \sum_{\ell=0}^\infty \xi_N^{(\ell)} a^* (\ph)^\ell \Omega \]
where the coefficients $\xi_N^{(\ell)}$ satisfy the recursion 
\begin{equation}\label{eq:rec1} \xi_N^{(\ell)} = \frac{1-\ell}{\ell} N^{-1/2} \xi_N^{(\ell-1)} - \frac{1}{\ell} \xi_N^{(\ell-2)} \end{equation}
with the initial data $\xi_N^{(0)} = 1$ and $\xi_N^{(1)} = 0$. Recall that $\| \xi_N \| = d_N \simeq N^{1/4}$ and that, from Lemma \ref{lm:xiN}, 
\begin{equation}\label{eq:apri} \| (\cN+1)^{-1/2} \xi_N \|^2 = \sum_{\ell \geq 0} \frac{\ell !}{\ell+1} |\xi_N^{(\ell)}|^2 \leq C \end{equation}
uniformly in $N$ (the factor $\ell !$ arises because $\| a^* (\ph)^\ell \Omega \|^2 = \ell !$). 

We compare the coefficients $\xi_N^{(\ell)}$ with the limiting coefficients 
\begin{equation}\label{eq:xiinf} \xi_\infty^{(\ell)} = \left\{ \begin{array}{ll} 0 \quad &\text{if } \ell = 2m+1 \\ \frac{(-1)^m}{2^m m!} \quad &\text{if } \ell = 2m \end{array} \right. \end{equation}
which satisfy the recursion
\begin{equation}\label{eq:rec2} \xi_\infty^{(\ell)} = - \frac{1}{\ell} \xi_\infty^{(\ell-2)} \end{equation}
with the initial data $\xi_\infty^{(0)} = 1$ and $\xi_\infty^{(1)} = 0$. Observe here that $|\xi_\infty^{(\ell)}| \simeq \ell^{-1/4}$ (for $\ell$ even), and therefore  
\[ \sum_{\ell \geq 0} |\xi_\infty^{(\ell)}|^2 = \infty\;. \]
This means that the vector 
\[ \xi_\infty = \sum_{\ell \geq 0} \xi_\infty^{(\ell)} a^* (\ph)^\ell \Omega \]
is not an element of the Fock space (this is not surprising since $\| \xi_N \| \simeq N^{1/4} \to \infty$, as $N \to \infty$). We will avoid this problem by considering the vector
\begin{equation}\label{eq:xiin} (\cN+1)^{-\alpha} \xi_\infty = \sum_{\ell \geq 0} (\ell+1)^{-\alpha} \xi_\infty^{(\ell)} a^* (\ph)^\ell \Omega \end{equation}
which is in the Fock space for $\alpha > 1/2$, and showing that 
\[ \| (\cN+1)^{-\alpha} \xi_N - (\cN+1)^{-\alpha} \xi_\infty \| \leq C N^{-1/2} \]
One should think of $(\cN+1)^{-\alpha} \xi_\infty$ as a notation for the right hand side of (\ref{eq:xiin}), and not for the action of the operator $(\cN+1)^{-\alpha}$ on the (non-existing) vector $\xi_\infty$. 

Comparing the recursions (\ref{eq:rec1}) and (\ref{eq:rec2}), we obtain
\[ \xi_N^{(\ell)} - \xi_\infty^{(\ell)} = \frac{1-\ell}{\ell} N^{-1/2} \xi_N^{(\ell-1)} - \frac{1}{\ell} (\xi_N^{(\ell-2)} - \xi_\infty^{(\ell-2)})\;. \]
Taking absolute value, we find
\[ | \xi_N^{(\ell)} - \xi_\infty^{(\ell)}| \leq \frac{1}{\ell} |\xi_N^{(\ell-2)} - \xi_\infty^{(\ell-2)}| + \frac{1}{\sqrt{N}} |\xi_N^{(\ell-1)}| \, . \]
This implies that
\[ \sqrt{\ell !} \, |\xi_N^{(\ell)} - \xi_\infty^{(\ell)}| \leq   \sqrt{\frac{\ell-1}{\ell}} \, \sqrt{(\ell-2)!} |\xi_N^{(\ell-2)} - \xi_\infty^{(\ell-2)}| + \frac{\sqrt{\ell!}}{\sqrt{N}} |\xi_N^{(\ell-1)}| \]
and that, for every $\alpha \in \bN$,
\[ \frac{\sqrt{\ell !}}{(\ell+1)^{\alpha/2}}  \, |\xi_N^{(\ell)} - \xi_\infty^{(\ell)}| \leq   \sqrt{\frac{(\ell-1)^{\alpha+1}}{\ell (\ell+1)^\alpha}} \, \frac{\sqrt{(\ell-2)!}}{(\ell-1)^{\alpha/2}} |\xi_N^{(\ell-2)} - \xi_\infty^{(\ell-2)}| + \frac{1}{\sqrt{N}} \frac{\sqrt{\ell!}}{(\ell+1)^{\alpha/2}} |\xi_N^{(\ell-1)}|\;. \]
Next, we take the square, using that $(a+b)^2 \leq a^2 (1+ \eps) + b^2 (1+\eps^{-1})$, for every $\eps, a, b >0$. We find
\[ \frac{\ell !}{(\ell+1)^{\alpha}}  \, |\xi_N^{(\ell)} - \xi_\infty^{(\ell)}|^2 \leq \frac{(\ell-1)^{\alpha+1}}{\ell (\ell+1)^\alpha} \, (1+\eps) \, \frac{(\ell-2)!}{(\ell-1)^{\alpha}} |\xi_N^{(\ell-2)} - \xi_\infty^{(\ell-2)}|^2 + \frac{1}{N} (1+ \eps^{-1}) \, \frac{\ell!}{(\ell+1)^{\alpha}} |\xi_N^{(\ell-1)}|^2\;. \]
We choose $\eps > 0$ such that
\[ \frac{(\ell-1)^{\alpha+1}}{\ell (\ell+1)^\alpha} \, (1+\eps) = 1 \, \quad \Rightarrow \quad \eps  = \frac{\ell (\ell+1)^\alpha - (\ell-1)^{\alpha+1}}{(\ell-1)^{\alpha+1}}   \]
and find
\[ \frac{\ell !}{(\ell+1)^{\alpha}}  \, |\xi_N^{(\ell)} - \xi_\infty^{(\ell)}|^2 \leq \, \frac{(\ell-2)!}{(\ell-1)^{\alpha}} |\xi_N^{(\ell-2)} - \xi_\infty^{(\ell-2)}|^2 + \frac{1}{N} \frac{\ell^3}{\ell (\ell+1)^\alpha - (\ell-1)^{\alpha+1}} \, \frac{(\ell-1)!}{\ell} |\xi_N^{(\ell-1)}|^2\;. \]
We choose $\alpha = 3$, so that
\[ \frac{\ell^3}{\ell (\ell+1)^\alpha - (\ell-1)^{\alpha+1}} \leq C \]
for all $\ell \in \bN$. This gives
\[  \frac{\ell !}{(\ell+1)^3}  \, |\xi_N^{(\ell)} - \xi_\infty^{(\ell)}|^2 \leq \, \frac{(\ell-2)!}{(\ell-1)^3} |\xi_N^{(\ell-2)} - \xi_\infty^{(\ell-2)}|^2 + \frac{C}{N} \, \frac{(\ell-1)!}{\ell} |\xi_N^{(\ell-1)}|^2\;. \]
Iterating the last inequality, and using the fact that $|\xi_N^{(\ell)} - \xi_\infty^{(\ell)}| = 0$ for $\ell = 0,1$, we conclude that
\[ \frac{\ell!}{(\ell+1)^3} |\xi_N^{(\ell)} - \xi_\infty^{(\ell)} |^2 \leq \frac{C}{N} \sum_{j=1}^\ell \frac{(j-1)!}{j} |\xi_N^{(j)}|^2 \leq \frac{C}{N} \sum_{j=1}^\infty \frac{(j-1)!}{j} |\xi_N^{(j)}|^2 \leq \frac{C}{N} \]
uniformly in $\ell \in \bN$. Here we used (\ref{eq:apri}). Therefore, we find
\[ \| (\cN+1)^{-5/2} \xi_N - (\cN+1)^{-5/2} \xi_\infty \|^2 = \sum_{\ell \geq 0} \frac{\ell!}{(\ell+1)^5} |\xi_N^{(\ell)}|^2 \leq \frac{C}{N} \sum_{\ell \geq 0} \frac{1}{(\ell+1)^2} \leq \frac{C}{N}\;. \]
This implies that
\[ \begin{split} 
\Big| \langle \xi_N, \cU_\infty^* &(t;0) \prod_{j=1}^k e^{i\tau_j \phi (\wt{O}_{j,t} \ph_t)} \cU_\infty (t;0) \Omega \rangle - \langle \xi_\infty, \cU_\infty^* (t;0) \prod_{j=1}^k e^{i\tau_j \phi (\wt{O}_{j,t} \ph_t)} \cU_\infty (t;0) \Omega \rangle \Big| \\ \leq \; & \| (\cN+1)^{-5/2} \xi_N - (\cN+1)^{-5/2} \xi_\infty \|  \, \| (\cN+1)^{5/2} \cU_\infty^* (t;0) (\cN+1)^{-5/2} \| \\ & \hspace{.5cm} \times \prod_{j=1}^k \| (\cN+1+\sum_{i=1}^{j-1} \tau_i^2 \| \wt{O}_{i,t} \ph_t \|^2)^{5/2} e^{i\tau_j \phi (\wt{O}_{j,t} \ph_t)} (\cN+1+\sum_{i=1}^{j} \tau_i^2 \| \wt{O}_{i,t} \ph_t \|^2)^{-5/2} \| \\ 
&\hspace{.5cm} \times \| (\cN+1 +\sum_{i=1}^{k} \tau_i^2 \| \wt{O}_{i,t} \ph_t \|^2)^{5/2} \cU_\infty (t;0) \Omega \| \\ \leq \; &\frac{C e^{K|t|}}{\sqrt{N}} \left( 1 + \sum_{i=1}^k \tau_i^2 \| \wt{O}_{i,t} \ph_t \|^2 \right)^{5/2}\;. \end{split} \]
{F}rom (\ref{eq:h0Uin}), we obtain
\begin{equation}\label{eq:pf8a} \begin{split}
\Big| \langle \psi_{N,t}, e^{i\tau_1 \cO_{1,t}} \dots e^{i\tau_k \cO_{k,t}} \psi_{N,t} \rangle - 
&\langle \xi_\infty, \cU_\infty^* (t;0) \prod_{j=1}^k e^{i\tau_j \phi (\wt{O}_{j,t} \ph_t)} \cU_\infty (t;0) \Omega \rangle \Big| \\ 
\leq \; & \frac{C e^{K |t|}}{\sqrt{N}} \left(1 + \sum_{\ell=1}^k |\tau_\ell|^5 \| \wt{O}_{\ell,t}  \|^5 \right) \\ 
&+\frac{C e^{K|t|}}{\sqrt{N}} \sum_{i=1}^k (\tau_i^2 \| \nabla \wt{O}_{i,t} \ph_t \|^2 + |\tau_i| \| \Delta \wt{O}_{i,t} \ph_t \|) \\ &+  \frac{C e^{K|t|}}{N^{3/2}} \sum_{i=1}^k \tau_i^8 \| \wt{O}_{i,t} \ph_t \|^8 +  \frac{C e^{K|t|}}{N^{5/2}} \sum_{i=1}^k \tau_i^{10} \| \wt{O}_{i,t} \ph_t \|^{10}.
\end{split} \end{equation}
{F}rom the bounds on $\| O_j \|$, $\| \nabla O_j (1-\Delta)^{-1/2} \|$, $\| \Delta O_j (1-\Delta)^{-1} \|$ (which clearly imply bounds on $\| \wt{O}_j \|$, $\| \nabla \wt{O}_j (1-\Delta)^{-1/2} \|$, $\| \Delta \wt{O}_j (1-\Delta)^{-1} \|$) and from the bound on $\| \ph \|_{H^2}$ (which implies a bound on $\| \ph_t \|_{H^2} \leq C e^{K|t|} \| \ph \|_{H^2}$ for every $t \in \bR$), we conclude that
\begin{equation}\label{eq:pf8} \begin{split}
\Big| \langle \psi_{N,t}, e^{i\tau_1 \cO_{1,t}} \dots e^{i\tau_k \cO_{k,t}} \psi_{N,t} \rangle - 
&\langle \xi_\infty, \cU_\infty^* (t;0) \prod_{j=1}^k e^{i\tau_j \phi (\wt{O}_{j,t} \ph_t)} \cU_\infty (t;0) \Omega \rangle \Big| \\ 
\leq \; & \frac{C e^{K |t|}}{\sqrt{N}} \left(1 + \sum_{\ell=1}^k |\tau_\ell|^5 + \frac{1}{N} \sum_{i=1}^k \tau_i^8 + \frac{1}{N^2} \sum_{i=1}^k \tau_i^{10} \right)\;. 
\end{split} \end{equation}

Here, one should be careful with the notation. As explained above, $\xi_\infty$ is not a Fock space vector. In the last three equations, the Fock space inner product involving $\xi_\infty$ should be really understood as
\[ \langle (\cN+1)^{-5/2} \xi_\infty , (\cN+1)^{5/2} \cU_\infty^* (t;0) \prod_{j=1}^k e^{i\tau_j \phi (\wt{O}_{j,t} \ph_t)} \cU_\infty (t;0) \Omega \rangle \]
with $(\cN+1)^{-5/2} \xi_\infty$ defined as the Fock space vector (\ref{eq:xiin}). It is only to shorten the notation that we remove the two factors $(\cN+1)^{-5/2}$ and $(\cN+1)^{5/2}$.  

Next we notice that, with the notation $A(f,g) = a(f) + a^* (Jg)$ (where $Jg= \overline{g}$) introduced after Proposition \ref{prop:GV}, we have 
\[ \phi (\wt{O}_{j,t} \ph_t) = A(\wt{O}_{j,t} \ph_t, J \wt{O}_{j,t} \ph_t)\,. \]
Hence, by Proposition \ref{prop:bog},
\[ \begin{split} \cU_\infty^* (t;0) \phi (\wt{O}_{j,t} \ph_t) \cU_\infty (t;0) &= A(\Theta (t;0) (\wt{O}_{j,t} \ph_t, J \wt{O}_{j,t} \ph_t) ) \\ &= A(U (t;0) \wt{O}_{j,t} \ph_t + J V (t;0) \wt{O}_{j,t} \ph_t , J U (t;0) \wt{O}_{j,t} + V (t;0) \wt{O}_{j,t} \ph_t) \\ &= \phi (\wt{g}_{j,t}) \end{split} \]
where we defined
\[ \wt{g}_{j,t} = U (t;0) \wt{O}_{j,t} \ph_t + J V (t;0) \wt{O}_{j,t} \ph_t\,. \]
Here $\Theta (t;s)$ is the Bogoliubov transform defined in Proposition \ref{prop:bog}, which, according to (\ref{eq:bog-dec}), can be decomposed as
\[ \Theta (t;s) = \left( \begin{array}{ll} U (t;s) & J V(t;s) J \\ V(t;s) & J U (t;s) J \end{array} \right)\,. \]
It follows that
\begin{equation}\label{eq:pf6}  \langle \xi_\infty, \cU_\infty^* (t;0) \prod_{j=1}^k   e^{i\tau_j \phi (\wt{O}_{j,t} \ph_t)} \cU_\infty (t;0) \Omega \rangle = \langle \xi_\infty, \prod_{j=1}^k e^{i\tau_j \phi (\wt{g}_{j,t})} \Omega \rangle\,.
\end{equation}
For $f,g \in L^2 (\bR^3)$, the canonical commutation relations (\ref{eq:CCR}) imply that 
\[ [ \phi (f), \phi (g) ] = [ a(f) + a^* (f) , a(g) + a^* (g)]  = \langle f,g \rangle - \langle g,f \rangle  = 2i \text{Im } \langle f,g \rangle\,. \]
Hence, the Baker-Campbell-Hausdorff formula implies that
\[ e^{i \phi (f)} e^{i\phi (g)} = e^{i \phi (f+g)} e^{-i \text{Im } \langle f,g \rangle} \,.\]
We obtain
\[ \prod_{j=1}^k e^{i\tau_j \phi (\wt{g}_{j,t})} = e^{i \phi (\tau_1 \wt{g}_{1,t} + \dots + \tau_k \wt{g}_{k,t})} \, \prod_{i<j}^k e^{-i \tau_i \tau_j \text{Im } \langle \wt{g}_{i,t}, \wt{g}_{j,t} \rangle}\;. \]
Let $\wt{g} = \sum_{j=1}^k \tau_j \wt{g}_{j,t}$. Again from the Baker-Hausdorff-Campbell formula we find
\[ e^{i \phi (\wt{g})} \Omega = e^{-\| \wt{g} \|^2/2} \sum_{m \geq 0} i^m \frac{a^* (\wt{g})^m}{m!} \Omega\;. \]
Therefore, from (\ref{eq:pf6}), we have
\[ \begin{split} \langle \xi_\infty, \cU_\infty^* (t;0) \prod_{j=1}^k  &e^{i\tau_j \phi (\wt{O}_{j,t} \ph_t)} \cU_\infty (t;0) \Omega \rangle \\ = \; & e^{-\| \wt{g} \|^2/2} \, \prod_{i<j}^k e^{-i \tau_i \tau_j \text{Im } \langle \wt{g}_{i,t}, \wt{g}_{j,t} \rangle} \,  \sum_{\ell \geq 0} i^\ell \frac{\xi_\infty^{(\ell)}}{\ell!}  \langle a^* (\ph)^\ell \Omega, a^* (\wt{g})^\ell \Omega \rangle \;.
\end{split} \]
{F}rom (\ref{eq:xiinf}), and since
\[ \langle a^* (\ph)^\ell \Omega, a^* (\wt{g})^\ell \Omega \rangle = \ell! \, \langle \ph , \wt{g} \rangle^\ell \]
we obtain 
\[ \begin{split} \langle \xi_\infty, \cU_\infty^* (t;0) \prod_{j=1}^k  &e^{i\tau_j \phi (\wt{O}_{j,t} \ph_t)} \cU_\infty (t;0) \Omega \rangle   \\ = \; & e^{-\frac{1}{2} \| \sum_{j=1}^k \tau_j \wt{g}_{j,t} \|^2} \, e^{\frac{1}{2} \langle \ph , \sum_{j=1}^k \tau_j \wt{g}_{j,t} \rangle^2} \prod_{i<j}^k e^{-i \tau_i \tau_j \text{Im } \langle \wt{g}_{i,t}, \wt{g}_{j,t} \rangle} \\
= \; & e^{-\frac{1}{2} \sum_{i,j =1}^k \Sigma_{ij} (t) \tau_i \tau_j} \end{split} \]
where we defined the $k \times k$ matrix $\Sigma (t) = (\Sigma_{ij} (t))$ through
\[ \Sigma_{ij} (t) = \langle \wt{g}_{i,t}, \wt{g}_{j,t} \rangle - \langle \ph , \wt{g}_{i,t} \rangle \langle \ph, \wt{g}_{j,t} \rangle \]
for all $i\leq j$ and through $\Sigma_{ij} (t) = \Sigma_{ji} (t)$ for $i > j$. We notice here that the factors $\langle \ph, \wt{g}_{i,t} \rangle$ and $\langle \ph, \wt{g}_{j,t} \rangle$ are real. In fact, for any self-adjoint operator $O$ on $L^2 (\bR^3)$, we have 
\[ \Theta (t;0) (O \ph_t , J O \ph_t) = (U(t;0) O \ph_t + J V(t;0) O \ph_t, J U (t;0) O \ph_t + V(t;0) O \ph_t) \]
and 
\[ \langle (\ph, -J\ph), \Theta (t;0) (O \ph_t , J O \ph_t) \rangle_{L^2 \oplus L^2} = 2i \text{Im } \langle \ph,  U(t;0) O \ph_t + J V(t;0) O \ph_t \rangle \, .  \]
On the other hand
\[ \begin{split} \langle  (\ph, -J\ph), \Theta (t;0) (O \ph_t , J O \ph_t) \rangle_{L^2 \oplus L^2} = \; & \langle \Theta^* (t;0) (\ph, - J \ph), (O \ph_t , J O \ph_t) \rangle_{L^2 \oplus L^2} \\ =\; &
\langle \Theta^{-1} (t;0) (\ph,J\ph), (O \ph_t , -J O \ph_t) \rangle_{L^2 \oplus L^2} \\ =
\; &\langle (\ph_t, J \ph_t), (O \ph_t , -J O \ph_t) \rangle_{L^2 \oplus L^2}\\ = \; & 2i \text{Im } \langle \ph_t, O \ph_t \rangle  = 0 \end{split} \]
where we used the relation $\Theta^* (t;0) = S \Theta^{-1} (t;0) S$, with $S$ defined in (\ref{eq:CCR-AA}).

We also notice that
\[ \begin{split} \langle \wt{g}_{i,t}, \wt{g}_{j,t} \rangle = \; &\langle U (t;0) \wt{O}_{i,t} \ph_t + J V(t;0) \wt{O}_{i,t} \ph_t, \wt{g}_{j,t} \rangle \\ =\; & \langle U (t;0) O_{i} \ph_t + J V(t;0) O_{i} \ph_t, \wt{g}_{j,t} \rangle - \langle \ph_t, O_i \ph_t \rangle \langle U(t;0) \ph_t + J V(t;0) \ph_t , \wt{g}_{j,t} \rangle \\ =\; &
 \langle U (t;0) O_{i} \ph_t + J V(t;0) O_{i} \ph_t,U (t;0) O_j \ph_t + J V(t;0) O_j \ph_t \rangle \\ &- \langle \ph_t, O_j \ph_t \rangle \langle U (t;0) O_{i} \ph_t + J V(t;0) O_{i} \ph_t, U(t;0) \ph_t + J V(t;0) \ph_t \rangle \\ & - \langle \ph_t, O_i \ph_t \rangle \langle U(t;0) \ph_t + J V(t;0) \ph_t , \wt{g}_{j,t} \rangle\;.\end{split} \]
Since, from Proposition \ref{prop:bog}, $ (\ph , J \ph) = \Theta (t;0) (\ph_t, J \ph_t) = (U (t;0) \ph_t + J V(t;0) \ph_t, J U(t;0) \ph_t + V(t;0) \ph_t)$, we see that $U(t;0) \ph_t + J V(t;0) \ph_t = \ph$, and therefore
\[  \begin{split} \langle \wt{g}_{i,t}, \wt{g}_{j,t} \rangle  =\; &
 \langle U (t;0) O_{i} \ph_t + J V(t;0) O_{i} \ph_t,U (t;0) O_j \ph_t + J V(t;0) O_j \ph_t \rangle \\ &- \langle \ph_t, O_j \ph_t \rangle \langle U (t;0) O_{i} \ph_t + J V(t;0) O_{i} \ph_t, \ph \rangle \\ & - \langle \ph_t, O_i \ph_t \rangle \langle \ph , \wt{g}_{j,t} \rangle\;. \end{split} \]
Similarly, we find
\[ \begin{split} \langle \ph , \wt{g}_{i,t}  \rangle \langle \ph , \wt{g}_{j,t}  \rangle = \; &\langle \ph , U(t;0) O_i\ph_t + J V(t;0) O_i \ph_t \rangle \langle \ph , U(t;0) O_j \ph_t + J V(t;0)  O_j \ph_t  \rangle 
\\ &-\langle \ph, U(t;0) O_i\ph_t + J V(t;0) O_i \ph_t  \rangle \langle \ph_t, O_j \ph_t \rangle  
- \langle \ph_t, O_i \ph_t \rangle \langle \ph , \wt{g}_{j,t}  \rangle\,. \end{split} \]
Hence
\[ \Sigma_{ij} (t) = \langle g_{i,t} , g_{j,t} \rangle - \langle \ph, g_{i,t} \rangle \langle \ph , g_{j,t} \rangle \]
with
\[ g_{i,t} = U(t;0) O_i \ph_t + J V(t;0) O_i \ph_t \]
and similarly for $g_{j,t}$ (the products $\langle \ph, g_{i,t} \rangle$ are, like $\langle \ph, \wt{g}_{i,t} \rangle$, real). 

We observe that the matrix $\Sigma (t)$ can be decomposed in its real and imaginary part $\Sigma (t) = P(t) + i R(t)$, with the two symmetric $k \times k$ matrices $P(t) = (P_{ij} (t))$ and $R(t) = (R_{ij} (t))$ given by  
\[ P_{ij} (t) = \text{Re } \langle g_{i,t}, g_{j,t} \rangle - \langle \ph, g_{i,t} \rangle \langle \ph, g_{j,t} \rangle \]
for all $i,j = 1, \dots ,k$, and 
\[ R_{ij} (t) = \text{Im } \langle g_{i,t}, g_{j,t} \rangle \]
for all $i < j$ and $R_{ij} (t) = R_{ji} (t)$ for all $i > j$. The real part $P(t)$ is always non-negative (see (\ref{eq:posit})). Under the assumption that $P(t)$ is strictly positive, $\Sigma (t)$ is invertible. We denote by $\Sigma^{-1} (t)$ its inverse. Then $\text{Re } \Sigma^{-1} (t) > 0$ and we have
\[ \int dx_1, \dots dx_k \, e^{i\sum_{j=1}^k x_j \tau_j} \left[ \frac{1}{\sqrt{(2\pi)^{k} \det \Sigma (t)}} e^{-\frac{1}{2} \sum_{i,j=1}^k \Sigma^{-1}_{ij} (t) x_i x_j} \right]= e^{-\frac{1}{2} \sum_{i,j=1}^k \Sigma_{ij} (t) \tau_i \tau_j}\;. \]
Hence
\[ \begin{split} 
\int d\tau_1 \dots d\tau_k \, \widehat{f}_1 (\tau_1) &\dots \widehat{f}_k (\tau_k) \langle \xi_\infty, \cU_\infty^* (t;0) \prod_{j=1}^k e^{i\tau_j \phi (\wt{O}_{j,t} \ph_t)} \cU_\infty (t;0) \Omega \rangle \\ = \; & \int d\tau_1 \dots d\tau_k \, \widehat{f}_1 (\tau_1) \dots \widehat{f}_k (\tau_k) e^{-\frac{1}{2} \sum_{i,j=1}^k \Sigma_{ij} (t) \tau_i \tau_j} \\ = \; &\int dx_1 \dots dx_k f_1 (x_1) \dots f_k (x_k) \, \left[ \frac{1}{ \sqrt{(2\pi)^k \det \Sigma (t)}} \, e^{-\frac{1}{2} \sum_{i,j=1}^k \Sigma^{-1}_{ij} (t) x_i x_j} \right] \;.\end{split} \]
{F}rom (\ref{eq:pf8}), we conclude that, under the assumptions of Theorem \ref{thm:multi-CLT}, 
\[ \begin{split} \Big| \langle \psi_{N,t} f_1 (\cO_{1,t}) \dots f_k (\cO_{k,t}) \psi_{N,t} \rangle - &\int dx_1 \dots dx_k \, f_1 (x_1) \dots f_k (x_k) \left[ \frac{e^{-\frac{1}{2} \sum_{i,j=1}^k \Sigma^{-1}_{ij} (t) x_i x_j}}{\sqrt{(2\pi)^k \det \Sigma (t)}} \right] \Big| \\ \leq \; &\frac{C e^{K|t|}}{\sqrt{N}} \prod_{j=1}^k \int d\tau |\widehat{f}_j (\tau)| (1+|\tau|^5 + N^{-1} \tau^8 + N^{-2} \tau^{10}) \,. \end{split} \]

\appendix

\section{Properties of $\xi_N$} 
\setcounter{equation}{0}

We collect some properties of the Fock space vector $\xi_N = d_N W^* (\sqrt{N} \ph) \ph^{\otimes N}$ which have been used in the proof of Theorem \ref{thm:multi-CLT}. The proof of the next lemma can be found in \cite{BKS}.

\begin{lem}\label{lm:xiN} 
For $\ph \in L^2 (\bR^3)$, set
\[ \xi_N = d_N W^* (\sqrt{N} \ph) \frac{a^*(\ph)^N}{\sqrt{N!}} \Omega \]
with $d_N = e^{N/2} \sqrt{N!} \, N^{-N/2} \simeq N^{1/4}$. Then there exists a constant $C > 0$ such that  
\[ \| (\cN+1)^{-1/2} \xi_N \| \leq C \]
uniformly in $N$. Moreover, we have
\[ \xi_N = \sum_{\ell=0}^\infty \xi_N^{(\ell)} a^* (\ph)^\ell \Omega \]
with the coefficients
\[ \xi_N^{(\ell)} = \sum_{j=0}^\ell (-1)^j N^{j-\ell/2} \frac{N!}{(N-\ell+j)! (\ell-j)! j!} \]
Notice that the coefficients $\xi_N^{(\ell)}$ satisfy the recursion
\[ \xi_N^{(\ell)} = \frac{1-\ell}{\ell} N^{-1/2} \xi_N^{(\ell-1)} - \frac{1}{\ell} \xi_N^{(\ell-2)} \]
with $\xi_N^{(0)} = 1$ and $\xi_N^{(1)} = 0$. 
\end{lem}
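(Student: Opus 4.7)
The three claims---the explicit formula, the recursion, and the uniform bound $\|(\cN+1)^{-1/2}\xi_N\|\leq C$---will be proven in that order. The first two are direct computations; the third is the real obstacle.

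\emph{Explicit formula.} Since $[\sqrt{N}\,a^*(\ph),\sqrt{N}\,a(\ph)]=-N$ is a scalar, the BCH identity gives the disentangled form $W^*(\sqrt{N}\ph)=e^{-N/2}\,e^{-\sqrt{N}a^*(\ph)}\,e^{\sqrt{N}a(\ph)}$. Apply this to $(a^*(\ph))^N\Omega/\sqrt{N!}$: using the commutator identity $a(\ph)^k(a^*(\ph))^N\Omega=\tfrac{N!}{(N-k)!}(a^*(\ph))^{N-k}\Omega$ for $k\leq N$ to evaluate the rightmost exponential as a finite sum, then expanding $e^{-\sqrt{N}a^*(\ph)}$ as a power series, and collecting terms of total power $\ell=N-k+j$ in $a^*(\ph)$ (absorbing the prefactor $d_N=e^{N/2}\sqrt{N!}N^{-N/2}$), one reads off the stated formula for $\xi_N^{(\ell)}$.

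\emph{Recursion.} Let $F(z):=\sum_{\ell\geq 0}\xi_N^{(\ell)}z^\ell$. Interchanging the two finite sums in the explicit formula yields the closed form $F(z)=e^{-\sqrt{N}z}(1+z/\sqrt{N})^N$. Logarithmic differentiation gives the first-order linear ODE
\[
(1+z/\sqrt{N})\,F'(z)+z\,F(z)=0.
\]
Matching the coefficient of $z^{\ell-1}$ in the power series produces $\ell\,\xi_N^{(\ell)}+(\ell-1)N^{-1/2}\xi_N^{(\ell-1)}+\xi_N^{(\ell-2)}=0$, which rearranges to the stated recursion. The initial conditions $\xi_N^{(0)}=1$ and $\xi_N^{(1)}=0$ are read off from $F(0)=1$ and $F'(z)=-z\,e^{-\sqrt{N}z}(1+z/\sqrt{N})^{N-1}$.

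\emph{Uniform norm bound.} The strategy is to reduce the computation to a single mode via the Weyl shift. Using unitarity and \eqref{eq:shift-WN},
\[
\|(\cN+1)^{-1/2}\xi_N\|^2=d_N^2\,\bigl\langle \ph^{\otimes N},\,W(\sqrt{N}\ph)(\cN+1)^{-1}W^*(\sqrt{N}\ph)\,\ph^{\otimes N}\bigr\rangle,
\]
and $W(\sqrt{N}\ph)(\cN+1)W^*(\sqrt{N}\ph)=\cN-\sqrt{N}\phi(\ph)+N+1$. Introducing the shifted one-mode operators $b=a(\ph)-\sqrt{N}$ and $b^*=a^*(\ph)-\sqrt{N}$ (which satisfy $[b,b^*]=1$) and splitting $\cN=\cN_\ph+\cN_{\ph^\perp}$, this operator equals $b^*b+\cN_{\ph^\perp}+1$. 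Since $\cN_{\ph^\perp}\ph^{\otimes N}=0$, the problem reduces to the single-mode quantity $d_N^2\,\langle\ph^{\otimes N},(b^*b+1)^{-1}\ph^{\otimes N}\rangle$. Expanding $\ph^{\otimes N}=\sum_k c_k|k\rangle_b$ in the $b^*b$-eigenbasis (with $|k\rangle_b=W(\sqrt{N}\ph)|k\rangle$) gives the series $d_N^2\sum_k|c_k|^2/(k+1)$, and the overlaps $c_k=\langle k|W^*(\sqrt{N}\ph)|N\rangle$ are Cahill--Glauber matrix elements representable through associated Laguerre polynomials. Although the mean of the distribution $|c_k|^2$ is $\langle b^*b\rangle_{\ph^{\otimes N}}=2N$, the series is dominated by small-$k$ contributions: by Stirling, $|c_0|^2=e^{-N}N^N/N!\sim(2\pi N)^{-1/2}$, which combined with $d_N^2\sim\sqrt{2\pi N}$ already produces an $O(1)$ contribution. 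The delicate step will be controlling the small-$k$ tail uniformly in $N$; the cleanest route is the integral representation $(b^*b+1)^{-1}=\int_0^\infty e^{-t}\,e^{-tb^*b}\,dt$ combined with the normal-ordered identity $e^{-tb^*b}=\sum_n\tfrac{(e^{-t}-1)^n}{n!}(b^*)^n b^n$, which reduces everything to explicit binomial sums for the factorial moments $\langle(b^*)^n b^n\rangle_{\ph^{\otimes N}}$ and allows the resummation to be carried out in closed form.
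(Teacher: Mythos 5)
Your handling of the two algebraic claims is correct: the BCH disentangling $W^*(\sqrt{N}\ph)=e^{-N/2}e^{-\sqrt{N}a^*(\ph)}e^{\sqrt{N}a(\ph)}$ (note you implicitly use $\|\ph\|=1$, as the paper does) reproduces the stated coefficients, the generating function $F(z)=e^{-\sqrt{N}z}(1+z/\sqrt{N})^N$ is right, and the ODE $(1+z/\sqrt{N})F'+zF=0$ gives exactly the recursion and the initial data. Keep in mind that the paper itself does not prove this lemma --- it only cites \cite{BKS} --- so your derivation of these two parts is a perfectly acceptable self-contained replacement for that part of the reference.

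The genuine gap is in the only analytically nontrivial claim, the uniform bound $\|(\cN+1)^{-1/2}\xi_N\|\le C$, which is also the only part of the lemma actually used quantitatively in Section \ref{sect:main}. Your single-mode reduction is correct, and it shows that what must be proved is the upper bound $\sum_{k\ge 0}|c_k|^2/(k+1)\le CN^{-1/2}$, with $c_k=\langle k|W^*(\sqrt{N}\ph)|N\rangle$ and $d_N^2\simeq \sqrt{2\pi N}$; your observation that the $k=0$ term alone is of size $N^{-1/2}$ only shows the bound is sharp, it controls nothing. The route you then sketch ends precisely where the proof has to begin: the factorial moments $\langle (b^*)^n b^n\rangle_{\ph^{\otimes N}}=\sum_m \binom{n}{m}^2 N^{n-m}\frac{N!}{(N-m)!}$ are of size up to $(4N)^n$, so the normal-ordered series for $e^{-tb^*b}$ is violently oscillatory and term-by-term estimates are useless; the resummation can indeed be done in closed form, giving $\langle N|\lambda^{b^*b}|N\rangle=e^{-(1-\lambda)N}\lambda^N L_N\bigl(-(1-\lambda)^2N/\lambda\bigr)$ with $\lambda=e^{-t}$ and $L_N$ the Laguerre polynomial, but one is then left with proving $\int_0^1 e^{-(1-\lambda)N}\lambda^N L_N\bigl(-(1-\lambda)^2N/\lambda\bigr)\,d\lambda\le CN^{-1/2}$ uniformly in $N$, a uniform Laplace/Bessel-type asymptotic analysis over several regimes of $1-\lambda$ (roughly $1-\lambda\lesssim N^{-1}$, $N^{-1}\lesssim 1-\lambda\lesssim N^{-1/2}$, and the rest). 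Equivalently one needs $|c_k|^2=\frac{k!}{N!}N^{N-k}e^{-N}\bigl[L_k^{(N-k)}(N)\bigr]^2\le CN^{-1/2}$ uniformly in $k$ together with tail decay; note also that $|c_k|^2$ is not monotone (e.g. $c_1=0$), so the phrase ``dominated by small-$k$ contributions'' itself needs quantification. None of this is carried out, so the central estimate of the lemma remains unproven in your proposal; you should either complete this asymptotic analysis or, as the paper does, delegate the bound to \cite{BKS}.
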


\end{document}